\documentclass[english]{article}
\usepackage[utf8]{inputenc}
\usepackage[margin=2.5cm]{geometry}
\usepackage{babel}

\usepackage{amsmath}
\usepackage{amssymb}
\usepackage{amsthm}
\usepackage{amsfonts}
\usepackage[round]{natbib}

\usepackage{algorithm}
\usepackage{algorithmicx}

\usepackage{color}
\usepackage{xspace}
\usepackage{supertabular}
\usepackage{enumitem}
\usepackage[colorlinks,linkcolor=blue,citecolor=red,urlcolor=blue]{hyperref}
\usepackage[capitalise,nameinlink]{cleveref}
\crefname{appendix}{\hspace*{-1.6mm}}{\hspace*{-1.6mm}}
\crefname{assumption}{Assumption}{Assumptions}
\crefname{lemma}{Lemma}{Lemmata}
\crefrangelabelformat{appendix}{#3#1#4 to~#5#2#6}
\crefalias{prop}{proposition}

\usepackage{graphicx, subfig}
\usepackage{tikz}
\usetikzlibrary{trees}
\usepackage{lineno}
\usepackage{multicol}

\newcommand{\real}{\ensuremath{\mathbb{R}}\xspace}

\newcommand{\bounded}[1][E]{\ensuremath{\mathcal{B}_b(#1)}}

\newcommand{\testfn}{\ensuremath{\varphi} }

\newcommand{\asconverges}{\ensuremath{\overset{a.s.}{\rightarrow}}}
\newcommand{\dconverges}{\ensuremath{\overset{d}{\rightarrow}}}
\newcommand{\pconverges}{\ensuremath{\overset{p}{\rightarrow}}}
\DeclareMathOperator{\N}{\mathcal{N}}

\def\weight{G_n}
\def\weightN{G_n^N}
\def\rmd{\mathrm{d}}
\def\predictiveN{\eta^N_n}
\def\predictive{\eta_n}
\def\updateN{\hat{\eta}_n^N}
\def\update{\hat{\eta}_n}
\newcommand{\bgN}{\ensuremath{\Psi_{G_n^N}}\xspace}
\newcommand{\bg}{\ensuremath{\Psi_{G_n}}\xspace}
\newcommand{\lp}{\ensuremath{\mathbb{L}_p}}
\DeclareMathOperator{\Exp}{\mathbb{E}}
\newcommand{\var}{{\rm{var}}}
\newcommand{\V}{V}
\def\sfmutation{\mathcal{G}_{n-1}^N}
\def\sfresampling{\mathcal{F}_{n}^N}
\makeatletter
\newcommand{\hathat}[1]{%
\begingroup%
  \let\macc@kerna\z@%
  \let\macc@kernb\z@%
  \let\macc@nucleus\@empty%
  \widehat{\raisebox{.5ex}{\vphantom{\ensuremath{#1}}}\smash{\widehat{#1}}}%
\endgroup%
}
\makeatother

\newcommand{\norm}[2]{\ensuremath{\left\vert\left\vert #1 \right\vert\right\vert_{#2}}}

\newcommand{\supnorm}[1]{\norm{#1}{\infty}}


\newtheorem{lemma}{Lemma}

\newtheorem{prop}{Proposition}
\newtheorem{assumption}{Assumption}
\newtheorem{corollary}{Corollary}

\usepackage{authblk}

\usepackage[textwidth=1.8cm, textsize=scriptsize]{todonotes}
\usepackage{setspace} 

\setlength{\marginparwidth}{2cm}

\title{Properties of Marginal Sequential Monte Carlo Methods}
\author[2]{Francesca R. Crucinio\thanks{Corresponding author: francesca.crucinio@gmail.com}}
\author[2, 3]{Adam M. Johansen}
\affil[2]{Department of Statistics, University of Warwick}
\affil[3]{The Alan Turing Institute}

\date{ }
\begin{document}
\maketitle

\begin{abstract}
  We provide a framework which admits a number of ``marginal'' sequential Monte Carlo (SMC) algorithms as particular cases --- including the marginal particle filter [Klaas et al., 2005, in: Proceedings of  Uncertainty in Artificial Intelligence, pp. 308–315], the independent particle filter [Lin et al., 2005, Journal of the American Statistical Association 100, pp. 1412–1421] and linear-cost Approximate Bayesian Computation SMC [Sisson et al., 2007, Proceedings of the National Academy of Sciences (USA) 104, pp. 1760–1765.]. We provide conditions under which such algorithms obey laws of large numbers and central limit theorems and provide some further asymptotic characterizations. 
  Finally, it is shown that the asymptotic variance of a class of estimators associated with certain marginal SMC algorithms is never greater than that of the estimators provided by a standard SMC algorithm  using the same proposal distributions.
\end{abstract}

\section{Introduction}
Sequential Monte Carlo (SMC) methods are a class of Monte Carlo methods which approximate a sequence of distributions and integrals with respect to those distributions using a population of weighted samples (or particles) which evolve according to a combination of mutation and selection dynamics.
Such methods became popular in the context of filtering for state-space models in the engineering and statistics literature following the seminal works of \cite{smc:methodology:GSS93}, and have been extensively studied from a theoretical perspective as mean field approximations of a Feynman-Kac flow since \cite{del1996non}. SMC methods have grown in popularity and have been successfully employed in numerous problems including posterior approximation \citep{chopin2002sequential, smc:methodology:DDJ06b}, approximate Bayesian computation  \citep{smc:methodology:SFT07, amj26:DEJL11}, Bayesian model comparison  \citep{zhou2016toward}, parameter estimation \citep{poyiadjis2011particle}. A book-length introductory treatment of SMC is provided by \cite{chopin2020}.

SMC methods approximate a sequence of distributions $(\update)_{n\geq 0}$ defined on Polish spaces $(E^n, \mathcal{E}^n)$, where $\mathcal{E}$ denotes the $\sigma$-field associated with $E$, of increasing dimension with the relationship
  $$
  \update(\rmd x_{1:n}) \propto  U_n(x_{n-1}, x_{n})K_n(x_{n-1}, \rmd x_n) \hat{\eta}_{n-1}(\rmd x_{1:n-1})
  $$
for some Markov kernels $K_n: E\times \mathcal{E}\to [0,1]$ and non-negative functions $U_n:  E\times E\to \real$.

In this work we focus on a particular class of SMC methods, in which the past evolution of the process is marginalized out. Marginal sequential Monte Carlo (MSMC) methods have been introduced in \cite{smc:methodology:KFD05, smc:methodology:LZCC05} where they were shown empirically to improve upon standard SMC in terms of the conditional variance of their unnormalized importance weights.

Despite being a popular class of algorithms which provides good results in practice (e.g. \cite{smc:methodology:SFT07, amj26:DEJL11, everitt2017marginal, poyiadjis2011particle}), the theoretical properties of MSMC are less studied. This paper aims to fill this gap.
We provide a comprehensive analysis of the limiting behaviour of MSMC and show that it shares many of the key properties of standard SMC methods (see, e.g. \cite{smc:theory:Del04}).
In particular, we establish that the estimates provided by MSMC methods obey laws of large numbers (\Cref{prop:wlln,prop:slln}), a central limit theorem (\cref{prop:clt}) and show that their $\lp$ errors decay at the usual $N^{-1/2}$ rate (\cref{prop:lp}) and that the bias decays at rate $N^{-1}$ (\cref{prop:bias}).
Finally, we derive explicit expressions for the asymptotic variance of some popular classes of MSMC methods and provide a comparison with their non-marginal counterparts, showing that the asymptotic variance of the estimates obtained with marginal SMC is never larger than that one would obtain with standard SMC using the same proposals.

Our proofs combine well-known techniques to obtain the analogous results for standard SMC with novel results which control the effect of marginalization.
Embedding this result within the inductive approach normally used to establish the results for standard SMC we obtain a full characterization of MSMC.

We introduce a framework for describing marginal sequential Monte Carlo methods in \cref{sec:msmc}, where we also present some special cases of particular interest. Our main contributions are in \cref{sec:results} in which we present the theoretical properties of these algorithms and show that their asymptotic variance is never greater than that of standard SMC.
Proofs of the results are postponed to \cref{app:lp,app:wlln,app:clt,app:bias,app:variance}.
We close the paper with a discussion in \cref{sec:discussion}. 
\section{Marginal sequential Monte Carlo methods}
\label{sec:msmc}

Marginal sequential Monte Carlo methods (MSMC) are a class of sequential Monte Carlo algorithms introduced in \cite{smc:methodology:KFD05} to approximate a sequence of distributions $(\update)_{n\geq 0}$ defined on measurable spaces $(E, \mathcal{E})$, where $\mathcal{E}$ denotes the $\sigma$-field associated with $E$.
The sequence of distributions satisfies, for some Markov kernels $K_n: E\times \mathcal{E}\to [0,1]$ and non-negative functions $U_n:  E\times E\to \real$, the recursion 
\begin{align}
\label{eq:update}
    \update(\rmd x_n) \propto \int U_n(x_{n-1}, x_{n})K_n(x_{n-1}, \rmd x_n) \hat{\eta}_{n-1}(\rmd x_{n-1})
\end{align}
with $\hat{\eta}_0(\rmd x_0) \propto U_0(x_0)K_0(\rmd x_0)$.

Contrary to standard SMC algorithms, in which a cloud of particles is used to approximate a measure-valued recursion defined over $(E^n, \mathcal{E}^n)$ with approximations of time marginals obtained by discarding part of the sampled paths, MSMC algorithms deal directly with the marginal recursion and the presence of the integral w.r.t. $\hat{\eta}_{n-1}$ in~\eqref{eq:update} requires an additional approximation.
Hence, as described below, in MSMC a sample approximation of the integral w.r.t. $\hat{\eta}_{n-1}$ is used to define an alternative sequence of targets.

Before describing MSMC algorithms, we describe one step of the idealized algorithms targeting~\eqref{eq:update} which MSMC approximates.
At time $n$, we start with a cloud of equally weighted particles $\{\widetilde{X}_{n-1}^{i}, 1/N\}_{i=1}^N$ approximating $\hat{\eta}_{n-1}$.
In the idealized algorithm, the particles are propagated forward in time using proposals
\begin{align}
\label{eq:predictive}
   \predictive(\rmd x_n) =\hat{\eta}_{n-1}M_n(\rmd x_n):=\int M_n(x_{n-1}, \rmd x_n) \hat{\eta}_{n-1}(\rmd x_{n-1})
\end{align}
for $n\geq 1$ and some Markov kernels $M_n:E\times \mathcal{E}\to [0,1]$ to obtain a new cloud of particles $\{X_n^{i}, 1/N\}_{i=1}^N$.
The new cloud of particles is used as proposal in an importance sampling step targeting $\update$, with importance weights proportional to the Radon-Nykodim derivative $\rmd \update/\rmd \predictive$
\begin{align}
\label{eq:W}
\weight(x_n) & =\frac{\textrm{d} \hat\eta_{n-1}(U_n \cdot K_n)}{\textrm{d} \hat\eta_{n-1} M_n}(x_n)=\frac{\rmd \left( \int U_n(x_{n-1}, \cdot)K_n(x_{n-1}, \cdot)\hat{\eta}_{n-1}(\rmd x_{n-1})\right)}{\rmd \left(\int M_n(x_{n-1}, \cdot)\hat{\eta}_{n-1}(\rmd x_{n-1})\right)}(x_n),
\end{align}
where $(U_n \cdot K_n)(x_{n-1},\rmd x_n):= U_n(x_{n-1},x_n) K_n(x_{n-1},\rmd x_n)$, to obtain a weighted cloud of particles $\{X_n^{i}, W_n^i\}_{i=1}^N$, with $W_n^i\propto \weight(X_n^i)$, which approximates $\update$.
The equally weighted population approximating $\update$, $\lbrace \widetilde{X}_{n}^i, \frac{1}{N}\rbrace_{i=1}^N$, is obtained by resampling from $\lbrace X_n^i, W_n^i\rbrace_{i=1}^N$, see e.g. \cite{Gerber2019} for a recent review of popular resampling schemes.

The target distribution~\eqref{eq:update} involves an intractable integral w.r.t. $\hat{\eta}_{n-1}$. For any probability measure, $\eta$, and non-negative integrable function $G$, we can define $\Psi_G(\eta)(\rmd x) := G(x)\eta(\rmd x)/\eta(G)$. Given a cloud of weighted particles $\{X_{n-1}^{i}, W_{n-1}^i\}_{i=1}^N$ approximating $\hat{\eta}_{n-1} = \Psi_{G_{n-1}}(\eta_{n-1})$ we approximate $\hat\eta_{n-1}$ with
\begin{align}
\label{eq:bgN}
\Psi_{G^N_{n-1}}(\eta_{n-1}^N)(\rmd x_{n-1}):= \sum_{i=1}^N W_{n-1}^i \delta_{X_{n-1}^i}(\rmd x_{n-1})
\end{align}
and define the approximate targets
$$
\Psi_{G^N_{n-i}}(\eta_{n-1}^N)(U_n\cdot K_n)(\rmd x_n) \propto \sum_{i=1}^N W_{n-1}^i U_n(X_{n-1}^i, x_n) K_n(X_{n-1}^i, \rmd x_n).
$$
We then proceed as in the idealized algorithm, and, since the integrals w.r.t. $\hat{\eta}_{n-1}$ are intractable, replace $\hat{\eta}_{n-1}$ with its particle approximation obtained at time $n-1$.
Thus, the particles are propagated forward in time using an approximation of the proposals in~\eqref{eq:predictive},
$
\predictive(\rmd x_n) \approx\sum_{i=1}^N W_{n-1}^i M_n(X_{n-1}^i, \rmd x_n)
$,
and resampled using approximate weights
\begin{align}
\label{eq:WN}
    \weightN(x_n) &=\frac{\textrm{d} \Psi_{G^N_{n-i}}(\eta_{n-1}^N)(U_n \cdot K_n)}{\textrm{d} \Psi_{G^N_{n-i}}(\eta_{n-1}^N) M_n}(x_n)= \frac{\rmd \left(\sum_{i=1}^N W_{n-1}^i U_n(X_{n-1}^i, \cdot) K_n(X_{n-1}^i, \cdot)\right)}{\rmd\left(\sum_{i=1}^N W_{n-1}^i M_n(X_{n-1}^i,\cdot )\right)}(x_n).
\end{align}
The resulting MSMC method is summarized in \cref{alg:reformulation}. For convenience, we identify the three fundamental steps of \cref{alg:reformulation} as a \emph{mutation} step (\cref{alg:reformulated:mutate}), \emph{reweighting} steps (\cref{alg:reformulated:weightinga,alg:reformulated:weightingb}) and a \emph{resampling} step (\cref{alg:reformulated:resampling}). To each step, we associate a measure and its corresponding particle approximation: the mutated measure $\predictive$ in~\eqref{eq:predictive} is approximated by $\predictiveN := N^{-1} \sum_{i=1}^N \delta_{X_n^i}$ obtained after \cref{alg:reformulated:mutate}, \cref{alg:reformulated:weightinga,alg:reformulated:weightingb} provide a particle approximation \eqref{eq:bgN} of the reweighted measure $\update = \bg(\predictive)$,
after resampling we obtain another approximation of $\update$ in~\eqref{eq:update}, $\updateN := N^{-1} \sum_{i=1}^N \delta_{\tilde{X}_n^i}$.

\begin{algorithm}
  \caption{Marginal Sequential Monte Carlo (MSMC)}\label{alg:reformulation}
  \begin{multicols}{2}
\begin{algorithmic}[1]
\State Set $n = 0$.
\State Sample $X_0^i \sim M_0$ for $i=1,\dots, N$.
\State Compute $G_0(X_0^i) = \frac{\rmd \hat{\eta}_0}{\rmd M_0}(X_0^{i})$ $i=1,\dots, N$.
\State
Compute $W_0^i = G_0(X_0^i)/ \sum_{j=1}^N G_0(X_0^j)$. 
\State Resample
$\left\{X_0^{i},W_0^i\right\}_{i=1}^N$ to obtain $\{\tilde{X}_0^i, \frac1N\}_{i=1}^N$.
\columnbreak
\State Update $n \leftarrow  n+1$.  \label{alg:reformulated:iterate}
\State Sample $X_n^i \sim  M_n\left(\widetilde{X}_{n-1}^i, \cdot\right)$ for $i=1,\dots, N$. \label{alg:reformulated:mutate}
\State Compute $\weightN(X_n^i) $ in~\eqref{eq:WN} for $i=1,\dots, N$. \label{alg:reformulated:weightinga}
\State
Compute $W_n^i = \weightN(X_n^i)/ \sum_{j=1}^N \weightN(X_n^j)$. \label{alg:reformulated:weightingb}
\State Resample \label{alg:reformulated:resampling}
$\left\{X_n^{i},W_n^i\right\}_{i=1}^N$ to obtain $\{\widetilde{X}_n^i, \frac1N\}_{i=1}^N$.
\State Go to \cref{alg:reformulated:iterate}.
\end{algorithmic}
\end{multicols}
\end{algorithm}

\subsection{Examples of Marginal SMC}
\label{sec:ex}
In the previous section we introduced the general class of marginal sequential Monte Carlo methods. This class encompasses a number of well-known algorithms, we briefly discuss some examples below.

\subsubsection{Marginal Particle Filters}
\label{sec:mpf}
Marginal particle filters (MPF; \cite{smc:methodology:KFD05}) are a class of algorithms to perform inference on state space models (SSM), a family of time series models consisting of two discrete-time processes: a latent process $( X_n)_{n\geq 0}$ and conditionally independent observations $(Y_n)_{n\geq 1}$. Such a SSM $( X_n, Y_n)_{n\geq 0}$ is defined by the transition density $f_n(x_n|x_{n-1})$ of the latent process, with the convention that $f_0(x_0|x_{-1})\equiv f_0(x_0)$, and the observation likelihood $g_n(y_n|x_n)$.

In this case, the target distribution is the filtering distribution
\begin{align}
\label{eq:target_mpf}
    \update(\rmd x_n) \equiv p(x_n|y_{1:n}) \rmd x_n\propto g_n(y_n|x_n)\int f_n(x_n|x_{n-1}) \hat{\eta}_{n-1}(\rmd x_{n-1}) \rmd x_n,
\end{align}
with $\hat{\eta}_0(\rmd x_0) = f_0(x_0) \rmd x_0$, $K_n(x_{n-1}, \rmd x_n)=f_n(x_n|x_{n-1})\rmd x_n$ and $U_n(x_{n-1}, x_n) \equiv U_n(x_n)= g_n(y_n|x_n)$, and proposal density which can incorporate the observation $y_n$, $M_n(x_{n-1}, \rmd x_n) = q_n(x_n| y_n, x_{n-1})\rmd x_n$. The corresponding weights are given by 
\begin{align*}
    \weight(x_n) &= g_n(y_n|x_n)\frac{\int f_n(x_n|x_{n-1})\hat{\eta}_{n-1}(\rmd x_{n-1})}{\int q_n(x_n|y_n, x_{n-1})\hat{\eta}_{n-1}(\rmd x_{n-1})}; &
    \weightN(x_n) &= g_n(y_n|x_n)\frac{\sum_{i=1}^N  W_{n-1}^i f_n(x_n|X_{n-1}^i)}{\sum_{i=1}^N  W_{n-1}^i q_n(x_n| y_n, X_{n-1}^i)}.
\end{align*}
Similar algorithms have also been employed to approximate the score function when performing parameter estimation for SSM \citep{poyiadjis2011particle}.

\subsubsection{Marginal Auxiliary Particle Filters}
\label{sec:mapf}
Marginal auxiliary particle filters (MAPFs) are a variant of MPF introduced in \citet[Section 3.1]{smc:methodology:KFD05}, as a marginalized version of standard auxiliary particle filters \citep{smc:methodology:PS99, smc:methodology:CCF99}.
MAPF can be described as standard MPF applied to $\hat{\eta}_0( \rmd x_0) \propto f_0(x_0)\tilde{p}(y_1|x_0) \rmd x_0$ and
\begin{align}
\label{eq:mapf_target}
    \hat{\eta}_n(\rmd x_n)\propto \tilde{p}(y_{n+1}|x_n)p(x_n|y_{1:n}) \rmd x_n \propto& \tilde{p}(y_{n+1}|x_n)g(y_n|x_n)\left(\int f_n(x_n|x_{n-1})p(x_{n-1}|y_{1:n-1})\rmd x_{n-1}\right) \rmd x_n\\
    \qquad\qquad\propto& \tilde{p}(y_{n+1}|x_n)g(y_n|x_n)\left(\int \frac{f_n(x_n|x_{n-1})}{\tilde{p}(y_{n}|x_{n-1})}\hat{\eta}_{n-1}(\rmd x_{n-1}) \right)\rmd x_n,\notag
\end{align}
where $\tilde{p}(y_{n+1}|x_n)$ is an approximation of $p(y_{n+1}|x_n) := \int g_{n+1}(y_{n+1}|x_{n+1})f_{n+1}(x_{n+1}|x_n)\rmd x_{n+1}$, to which an importance sampling step is added to guarantee that we are targeting the correct distribution $p(x_n|y_{1:n})$ \citep{johansen2008note}.

Setting $K_n(x_{n-1}, \rmd x_n) = f_n(x_n|x_{n-1})\rmd x_{n}$ and $U_n(x_{n-1}, x_n) = g_n(y_n|x_n)\tilde{p}(y_{n+1}|x_{n})/\tilde{p}(y_{n}|x_{n-1})$,
one can apply \cref{alg:reformulation} with proposal kernel $M_n(x_{n-1}, \rmd x_n) = q_n(x_n|x_{n-1}, y_n)\rmd x_n$ which incorporates the current observation so that the weights are
\begin{align*}
    \weight(x_n) &= g_n(y_n|x_n)\frac{\tilde{p}(y_{n+1}|x_{n})\int f_n(x_n|x_{n-1})/\tilde{p}(y_{n}|x_{n-1})\hat{\eta}_{n-1}(\rmd x_{n-1})}{\int q_n(x_n|x_{n-1}, y_n)\hat{\eta}_{n-1}(\rmd x_{n-1})},\\
    \weightN(x_n) &= g_n(y_n|x_n)\frac{\tilde{p}(y_{n+1}|x_{n})\sum_{i=1}^NW_{n-1}^i f_n(x_n|X_{n-1}^i)/\tilde{p}(y_n|X_{n-1}^i)}{\sum_{i=1}^NW_{n-1}^i  q_n(x_n|X_{n-1}^i, y_n)}.
\end{align*}
To obtain an algorithm targeting $\pi_n(\rmd x_n) := p(x_n|y_{1:n})\rmd x_n$, one additional importance sampling step is applied, using as proposal the approximation of $\update$ before resampling, $\bgN(\predictiveN)$, and importance weights
\begin{align}
\label{eq:inferential_weights}
    \widetilde{w}_n(x_n) &= \frac{\rmd \pi_n}{\rmd \update}(x_n) \propto \frac{1}{\tilde{p}(y_{n+1}|x_n)}.
\end{align}
We point out that the description of MAPF given here is convenient to obtain the theoretical characterization below, but, in practice one would use $\predictiveN$ as a proposal in a importance sampling step with weights
\begin{align*}
   \weightN(x_n)\cdot\widetilde{w}_n(x_n) = g_n(y_n|x_n)\frac{\sum_{i=1}^NW_{n-1}^i f_n(x_n|X_{n-1}^i)/\tilde{p}(y_n|X_{n-1}^i)}{\sum_{i=1}^NW_{n-1}^i  q_n(x_n|X_{n-1}^i, y_n)},
\end{align*}
where $W_{n-1}^i$ denote the \emph{simulation} (or \emph{auxiliary}) weights defined in \citet[Section 3.1]{smc:methodology:KFD05} and $W_{n-1}^i/\tilde{p}(y_n|X_{n-1}^i)$ the \emph{inferential} weights (see also \citet[Section 10.3.3]{chopin2020}).

\subsubsection{Independent Particle Filters}

Independent particle filters (IPF) are a class of particle filters introduced in \cite{smc:methodology:LZCC05} built to deal with SSM for which the current observation provides significant information about the current state but the transition dynamics are weak. 
In these scenarios, it is natural to consider proposal kernels $M_n$ which result in draws of $x_n$ which are conditionally independent of individual past particles and of each
other, i.e. $M_n(x_{n-1}, \rmd x_n) = q_n(x_n| y_n)\rmd x_n$.
Taking $g_n$ and $f_n$ as in \cref{sec:mpf}, one obtains the basic IPF as a special case of marginal SMC (see also \citet[Appendix C]{lai2022variational}) in which $U_n(x_{n-1}, x_n) = g_n(y_n|x_n)$, $K_n(x_{n-1}, \rmd x_n)=f_n(x_n|x_{n-1})\rmd x_n$.

\subsubsection{SMC for Approximate Bayesian Computation}

ABC-SMC is an instance of SMC samplers \citep{smc:methodology:DDJ06b} studied in \cite{smc:methodology:SFT07, amj26:DEJL11} which
approximates the posterior distribution of a parameter $\theta$ when the likelihood function $p(y_{obs}|\theta)$ is intractable but can be sampled from.
In this case, the sequence of distributions is defined over the space of parameter and data, i.e. $x_n=(\theta_n, y_n)$, and is given by $\update(\rmd (\theta_n, y_n)) \propto p(\rmd \theta_n)p(\rmd y_n|\theta_n)\pi_{\epsilon_n}(y_{obs}|y_n)$, for some decreasing sequence $(\varepsilon_n)_{n\geq 0}$, where $p(\rmd\theta)$ is a prior on the parameter $\theta$, $p(\cdot|\theta)$ denotes the intractable likelihood and $\pi_{\epsilon_n}$ is the density of a normalized kernel with a degree of concentration determined by $\epsilon_n$ which measures how close $y_n$ is to the observed data $y_{obs}$.

The ABC-SMC algorithm of \cite{smc:methodology:SFT07} is an instance of marginal SMC with%
\begin{align*}
K_n((\theta_{n-1}, y_{n-1}), \rmd (\theta_n, y_n)) &= p(\rmd \theta_n)p(\rmd y_n|\theta_n)\\
U_n((\theta_{n-1}, y_{n-1}), (\theta_n, y_n))& = U_n((\theta_n, y_n))= \pi_{\epsilon_n}(y_{obs}|y_{n})\\
M_n((\theta_{n-1}, y_{n-1}), \rmd (\theta_n, y_n))&=q_n(\rmd\theta_n|\theta_{n-1})p(\rmd y_n|\theta_n),
\end{align*}%
for some proposal $q_n$ (we assume for brevity that the observations themselves are used, rather than some summary statistic but the use of such statistics does not present any substantial difficulties). 
A similar marginal SMC algorithm has also been considered in \cite{everitt2017marginal} for doubly intractable models.

\section{Convergence Results}
\label{sec:results}
We now state our main results, which show that marginal SMC methods have qualitatively the same convergence properties as standard SMC methods. 
We give an overview of the proofs for these results in \Cref{sec:mop}, full details are given in~\cref{app:lp,app:bias,app:wlln,app:clt}.  
For simplicity, we focus on the case of measurable bounded test functions $\testfn:E\to\real$ with $\supnorm{\testfn}:=\sup_{x\in E}\vert\testfn(x)\vert<\infty$, a set we denote by $\bounded$. 
For any distribution $\eta$ and any $\testfn\in\bounded$ we denote $\eta(\testfn):= \int \testfn(x)\eta(\rmd x)$, similarly for all empirical distributions $\eta^N:=N^{-1}\sum_{i=1}^N \delta_{X^i}$ we denote the corresponding average by $\eta^N(\testfn):= N^{-1}\sum_{i=1}^N \testfn(X^i)$.

These results are presented under fairly strong assumptions, which are somewhat standard in this literature, in the interests of brevity. The techniques which allow these to be relaxed in the standard case would also apply here, but their use would substantially complicate the presentation. Similarly, in our arguments we only consider multinomial resampling \citep{smc:methodology:GSS93}. Lower variance resampling schemes can be employed but considerably complicate the theoretical analysis \citep{Gerber2019}.

\begin{assumption}
\label{ass:smc0}
The potentials $\weight$ are positive everywhere, $\weight(x_n)>0$ for every $x_n\in E$.
\end{assumption}

\begin{assumption}
\label{ass:weak} For all $n\geq 0$, the functions $U_n$ are bounded above, i.e. 
$U_n(x_{n-1}, x_n) \leq \supnorm{U_n}<\infty$, and the Radon-Nykodim derivative $\rmd K_n(x_{n-1}, \cdot)/\rmd M_n(x_{n-1}, \cdot)$ is bounded above for all $x_{n-1}$, i.e. there exists some $\alpha>0$ such that, for every $x_{n-1}\in E$, $\supnorm{\rmd K_n(x_{n-1}, \cdot)/\rmd M_n(x_{n-1}, \cdot)} \leq \alpha<\infty$.
\end{assumption}

\begin{assumption}
\label{ass:smc2}
For all $n\geq 0$, the functions $U_n$ are bounded below, and the Radon-Nykodim derivative $\rmd K_n(x_{n-1}, \cdot)/\rmd M_n(x_{n-1}, \cdot)$ is bounded below for all $x_{n-1}$, i.e. there exist $\beta>0$ such that $0<\beta \leq U_n(x_{n-1}, x_n)$ and $0<\beta \leq \rmd K_n(x_{n-1}, \cdot)/\rmd M_n(x_{n-1}, \cdot)$ uniformly in $x_{n-1}$.
\end{assumption}
These assumptions are common in the SMC literature \citep{smc:theory:Del04, del2013mean}, in particular \cref{ass:smc0} (and its stronger version \cref{ass:smc2}) ensure that the system does not become extinct (i.e. the weights have never all simultaneously taken the value zero), and can be relaxed in various ways, including introducing stopping times \citep{smc:theory:Del04} or considering local boundedness \citep{whiteley2013stability}.

\cref{ass:weak} ensures that, uniformly in $x \in E$, $G_n(x) \leq m_g,\ G_n^N(x) \leq m_g$, whereas \cref{ass:smc2} further guarantees
$G_n(x) \geq m_g^{-1}, \ G_n^N(x) \geq m_g^{-1}$, where $m_g:=\max\lbrace\supnorm{U_n}\alpha, \beta^{-2}\rbrace < \infty$. 
To see this, observe that
\begin{align*}
G_n(x_n) = \nu_{x_n}\left(U_n(\cdot, x_n)\frac{\rmd K_n}{\rmd M_n}(\cdot, x_n)\right)\qquad\textrm{with } \nu_{x_n}(\rmd x_{n-1}):= \hat{\eta}_{n-1}(\rmd x_{n-1})\frac{\rmd M_n(x_{n-1}, \cdot)}{\rmd \hat{\eta}_{n-1}M_n}(x_n),
\end{align*}
and similarly for $\weightN$ with $\hat{\eta}_{n-1}$ replaced by $\Psi_{G_{n-1}^N}({\eta}_{n-1}^N)$ in the definition of $\nu$.
\cref{ass:smc0,ass:weak} allow us to obtain the following  weak law of large numbers (WLLN) whose proof is provided in~\cref{app:wlln}:

\begin{prop}[Weak law of large numbers]
\label{prop:wlln}
Under \cref{ass:smc0,ass:weak}, for all $n\geq0$ and for every $ \testfn\in \bounded$, we have $\bgN(\predictiveN)(\testfn)\pconverges\bg(\predictive)(\testfn)$ and $\hat{\eta}_n^N(\testfn) \pconverges\hat{\eta}_n(\testfn)$.
\end{prop}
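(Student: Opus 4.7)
The plan is induction on $n$. For the base case $n=0$, $X_0^{1:N}$ are i.i.d.\ from $M_0$ and $G_0 = \rmd\hat\eta_0/\rmd M_0$ is a fixed function, so the classical LLN applied to the self-normalized ratio $\eta_0^N(G_0 \testfn)/\eta_0^N(G_0)$ gives $\Psi_{G_0}(\eta_0^N)(\testfn)\pconverges \hat\eta_0(\testfn)$, and a conditional second-moment argument for multinomial resampling (given the pre-resampling population, the $\widetilde X_0^i$ are conditionally i.i.d.) then yields $\hat\eta_0^N(\testfn)\pconverges \hat\eta_0(\testfn)$. For the inductive step I would assume both $\Psi_{G_{n-1}^N}(\eta_{n-1}^N)(\psi)\pconverges \hat\eta_{n-1}(\psi)$ and $\hat\eta_{n-1}^N(\psi)\pconverges \hat\eta_{n-1}(\psi)$ for every $\psi\in\bounded$, then propagate the conclusion across the mutation, reweighting, and resampling sub-steps of \Cref{alg:reformulation}.

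Mutation and resampling proceed by standard conditional-second-moment arguments. For mutation, write
$$\predictiveN(\testfn) - \predictive(\testfn) = \bigl[\predictiveN(\testfn) - \hat\eta_{n-1}^N M_n(\testfn)\bigr] + \bigl[\hat\eta_{n-1}^N M_n(\testfn) - \hat\eta_{n-1} M_n(\testfn)\bigr];$$
given $\widetilde X_{n-1}^{1:N}$, the $X_n^i$ are independent, so a conditional Chebyshev inequality kills the first bracket, while the inductive hypothesis applied to $M_n(\testfn)\in\bounded$ handles the second. For resampling, $\widetilde X_n^{1:N}$ are conditionally i.i.d.\ from $\bgN(\predictiveN)$ given the weighted particles, and an analogous conditional bound reduces $\hat\eta_n^N(\testfn)\pconverges \hat\eta_n(\testfn)$ to the convergence of $\bgN(\predictiveN)(\testfn)$ established in the reweighting step.

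The reweighting step is the novel ingredient. We must show $\bgN(\predictiveN)(\testfn)\pconverges\bg(\predictive)(\testfn)$; by Slutsky applied to the self-normalized ratio this reduces to $\predictiveN(G_n^N \testfn)\pconverges \predictive(G_n \testfn)$. Decompose
$$\predictiveN(G_n^N \testfn) - \predictive(G_n \testfn) = \predictiveN\bigl((G_n^N-G_n)\testfn\bigr) + \bigl[\predictiveN(G_n \testfn) - \predictive(G_n \testfn)\bigr].$$
The second bracket vanishes by the mutation step since $G_n \testfn$ is bounded (by \Cref{ass:weak}). For the first, note that for each $x\in E$ the quantities $G_n(x)$ and $G_n^N(x)$ are the ratios in \eqref{eq:W} and \eqref{eq:WN}: \Cref{ass:smc0} ensures the denominators are positive and \Cref{ass:weak} bounds the numerators, so the ratio is a functional of the underlying measure that is continuous at $\hat\eta_{n-1}$, and the inductive hypothesis gives $G_n^N(x)\pconverges G_n(x)$ for each fixed $x$. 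To lift this pointwise-in-$x$ convergence to the averaged statement $\predictiveN(|G_n^N - G_n|)\pconverges 0$, use exchangeability of $X_n^{1:N}$ to identify $\Exp[\predictiveN(|G_n^N - G_n|)] = \Exp[|G_n^N(X_n^1)-G_n(X_n^1)|]$, condition on the full history up to and including the mutation step, and invoke dominated convergence using the uniform upper bound on $G_n$ and $G_n^N$ supplied by \Cref{ass:weak}.

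The main obstacle is exactly this reweighting argument: in standard SMC the weight function is deterministic, whereas here $G_n^N$ is a ratio of integrals against the entire time-$(n-1)$ particle population, evaluated at the random current-generation particles. \Cref{ass:smc0,ass:weak} together provide precisely the continuity and uniform-boundedness needed to reduce the marginalization error to the inductively controlled quantity $(\Psi_{G_{n-1}^N}(\eta_{n-1}^N) - \hat\eta_{n-1})(h)$ for bounded $h$.
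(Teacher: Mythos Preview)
Your induction scaffold, base case, mutation and resampling steps are fine and match the paper. The reweighting step, however, has a genuine gap, and the paper handles it quite differently.

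\textbf{The gap.} You claim $G_n^N(x)\pconverges G_n(x)$ for each fixed $x$ because ``the ratio is a functional of the underlying measure that is continuous at $\hat\eta_{n-1}$''. But $G_n$ and $G_n^N$ are Radon--Nikod\'ym derivatives of mixtures; pointwise-in-$x$ continuity of such a functional in the mixing measure is not automatic at the level of generality in the paper (you would need densities with respect to a common reference measure and some regularity to even state this). More seriously, even granting pointwise convergence in probability for each fixed $x$, your lift to $\predictiveN(|G_n^N-G_n|)\pconverges 0$ via ``exchangeability and dominated convergence'' does not go through. Bounded convergence would require $|G_n^N(X_n^1)-G_n(X_n^1)|\pconverges 0$ as a random variable on the underlying probability space. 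Pointwise-in-$x$ convergence in probability does not imply this when $X_n^1$ itself depends on $N$ and is coupled with $G_n^N$ through the same particle system; one needs something stronger (uniform convergence, or convergence in an appropriate $L^1$ sense against the limiting mutation law), which you have not established.

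\textbf{How the paper avoids this.} The paper never compares $G_n^N$ with $G_n$ directly. Instead it exploits the conditional expectation identity of \cref{prop:conditional_expe},
\[
\Exp\!\left[\predictiveN(G_n^N\testfn)\,\middle|\,\mathcal{F}_{n-1}^N\right]=\Psi_{G_{n-1}^N}(\eta_{n-1}^N)\bigl(K_n(U_n\testfn)\bigr),
\]
and the companion identity $\predictive(G_n\testfn)=\hat\eta_{n-1}(K_n(U_n\testfn))$. Decomposing
\[
\predictiveN(G_n^N\testfn)-\predictive(G_n\testfn)
=\Bigl[\predictiveN(G_n^N\testfn)-\Exp[\predictiveN(G_n^N\testfn)\mid\mathcal{F}_{n-1}^N]\Bigr]
+\Bigl[\Psi_{G_{n-1}^N}(\eta_{n-1}^N)-\hat\eta_{n-1}\Bigr]\!\bigl(K_n(U_n\testfn)\bigr),
\]
the first bracket is a conditionally centred sum of random variables bounded by $m_g\supnorm{\testfn}$ (using only \cref{ass:weak}), controlled by a conditional second-moment bound, and the second bracket is exactly the inductive hypothesis applied to the \emph{fixed} bounded function $K_n(U_n\testfn)$. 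The ratio $G_n^N$ cancels inside the conditional expectation and never has to be estimated pointwise. This is the key device that makes the argument work at the stated generality and under \cref{ass:smc0,ass:weak} alone.
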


\cref{ass:smc2} is used to obtain stronger results like finite-$N$ error bounds, whose proof is given in~\cref{app:lp}:

\begin{prop}[$\lp$-inequality]
\label{prop:lp}
Under \cref{ass:smc0,ass:weak,ass:smc2}, for every time $n\geq 0$, every $p\geq 1$ and $N\geq 1$ there exist finite constants $C_{p,n}, \bar{C}_{p, n}$ such that for every measurable bounded function $\testfn\in \bounded$
\begin{enumerate}[label=(\alph*)]
\item $\Exp\left[\vert\bgN(\predictiveN)(\testfn) -\bg(\predictive)( \testfn)\vert^p\right]^{1/p} \leq \bar{C}_{p,n}\frac{\supnorm{\testfn}}{\sqrt{N}},$ \label{prop:lp-pred} 
\item $\Exp\left[\vert\updateN(\testfn) - \update(\testfn)\vert^p\right]^{1/p} \leq C_{p,n}\frac{\supnorm{\testfn}}{\sqrt{N}},$ \label{prop:lp-upd}
 \end{enumerate}
 where the expectations are taken with respect to the law of all random variables generated within the SMC algorithm.
 \end{prop}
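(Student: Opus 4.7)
The proof is an induction on $n$ that establishes \ref{prop:lp-pred} directly and deduces \ref{prop:lp-upd} from it.  At $n=0$ no marginalisation takes place ($G_0^N=G_0$), the particles are i.i.d.\ from $M_0$, and multinomial resampling yields a conditionally i.i.d.\ cloud, so both bounds reduce to the classical Marcinkiewicz--Zygmund inequality.

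Assume both bounds hold through time $n-1$.  Since $\update=\bg(\predictive)$, one has
\begin{equation*}
\updateN(\testfn)-\update(\testfn)=\bigl[\updateN(\testfn)-\bgN(\predictiveN)(\testfn)\bigr]+\bigl[\bgN(\predictiveN)(\testfn)-\bg(\predictive)(\testfn)\bigr].
\end{equation*}
Conditional on $\{X_n^i,W_n^i\}_{i=1}^N$, the first bracket is the empirical-mean error of $N$ i.i.d.\ draws from $\bgN(\predictiveN)$ of a function bounded by $\supnorm{\testfn}$, which the conditional Marcinkiewicz--Zygmund inequality controls; the second bracket is bounded by \ref{prop:lp-pred} at time $n$.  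So it remains only to prove \ref{prop:lp-pred} at time $n$.

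Setting $A_N=\predictiveN(G_n^N\testfn)$, $B_N=\predictiveN(G_n^N)$, $A=\predictive(G_n\testfn)$, $B=\predictive(G_n)$, the ratio identity
\begin{equation*}
\frac{A_N}{B_N}-\frac{A}{B}=\frac{A_N-A}{B}-\bgN(\predictiveN)(\testfn)\,\frac{B_N-B}{B},
\end{equation*}
combined with $B\geq m_g^{-1}$ (\cref{ass:smc2}) and $|\bgN(\predictiveN)(\testfn)|\leq\supnorm{\testfn}$, reduces matters to an $\mathbb{L}_p$ bound on $A_N-A$; the bound on $B_N-B$ is the case $\testfn\equiv 1$.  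Let $\sfmutation$ denote the $\sigma$-field generated by all particles produced through the resampling at time $n-1$, so that $\Exp[A_N\mid\sfmutation]=\hat\eta_{n-1}^N M_n(G_n^N\testfn)$.  Using the defining identity $\Psi_{G_{n-1}^N}(\eta_{n-1}^N)M_n(G_n^N\psi)=\Psi_{G_{n-1}^N}(\eta_{n-1}^N)(U_n\cdot K_n)(\psi)$ that comes from~\eqref{eq:WN} (and its exact analogue for $G_n$ and $\hat\eta_{n-1}$), one arrives at
\begin{equation*}
A_N-A = \underbrace{A_N-\Exp[A_N\mid\sfmutation]}_{T_1} + \underbrace{\bigl(\hat{\eta}_{n-1}^N-\Psi_{G_{n-1}^N}(\eta_{n-1}^N)\bigr)M_n(G_n^N\testfn)}_{T_2} + \underbrace{\bigl(\Psi_{G_{n-1}^N}(\eta_{n-1}^N)-\hat{\eta}_{n-1}\bigr)(U_n\cdot K_n)(\testfn)}_{T_3}.
\end{equation*}
Term $T_1$ is a sum of $N$ conditionally independent centred variables, each bounded by $2m_g\supnorm{\testfn}$ (\cref{ass:smc2}), so the conditional Marcinkiewicz--Zygmund inequality yields the $N^{-1/2}$ rate.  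Term $T_2$ is a resampling error at time $n-1$ against the test function $M_n(G_n^N\testfn)$, whose sup-norm is at most $m_g\supnorm{\testfn}$; conditioning on the pre-resampling cloud gives the same rate by the same conditionally-i.i.d.\ argument.  Term $T_3$ is \ref{prop:lp-pred} at time $n-1$ applied to $(U_n\cdot K_n)(\testfn)(y)=\int U_n(y,x)\testfn(x)K_n(y,\rmd x)$, whose sup-norm is at most $\supnorm{U_n}\supnorm{\testfn}$ since $K_n$ is Markov.  Minkowski then finishes the inductive step.

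The main obstacle, and the feature that distinguishes the MSMC analysis from the standard SMC one, is controlling the error introduced by the random weight $G_n^N$.  A pointwise comparison of $G_n^N(x)$ with $G_n(x)$ is awkward: the numerator and denominator of their defining ratios are not individually bounded below, so the natural ratio bound on weight differences does not give uniform control.  The device above sidesteps this by exploiting the duality in~\eqref{eq:WN}: $G_n^N$ only ever appears integrated against its own defining measure $\Psi_{G_{n-1}^N}(\eta_{n-1}^N)M_n$, which linearises the weight and leaves nothing more than a standard particle-approximation error at time $n-1$ acting on a bounded test function.  Once this key step is accepted, the rest of the induction follows the classical SMC template.
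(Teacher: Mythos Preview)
Your proof is correct and follows essentially the same approach as the paper: induction on $n$, a ratio decomposition for $\bgN(\predictiveN)(\testfn)-\bg(\predictive)(\testfn)$, and---crucially---the same linearisation device, namely that $G_n^N$ integrated against $\Psi_{G_{n-1}^N}(\eta_{n-1}^N)M_n$ collapses to $\Psi_{G_{n-1}^N}(\eta_{n-1}^N)(U_n\cdot K_n)$, reducing the weight error to a particle-approximation error at time $n-1$ for a bounded test function (this is the paper's \cref{prop:conditional_expe}). The only organisational difference is that the paper conditions on $\mathcal{F}_{n-1}^N$ (the pre-resampling $\sigma$-field), which absorbs your $T_1$ and $T_2$ into a single martingale-increment term since, given $\mathcal{F}_{n-1}^N$, the $X_n^i$ are conditionally i.i.d.\ with law $\Psi_{G_{n-1}^N}(\eta_{n-1}^N)M_n$; you instead condition on $\sfmutation$ and handle the resampling error separately, which costs an extra term but is entirely sound.
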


The strong law of large numbers requires stronger assumptions that the WLLN in \cref{prop:wlln} and can be obtained from the $\lp$ inequality obtained in \cref{prop:lp} using Markov's inequality within a Borel-Cantelli argument as shown in e.g. \citet[Appendix D]{amj26:BADJ20}.
\begin{prop}[Strong law of large numbers]
\label{prop:slln}
Under \cref{ass:smc0,ass:weak,ass:smc2}, for all $n\geq0$ and for every $ \testfn\in \bounded$, we have $\bgN(\predictiveN)(\testfn)\asconverges\bg(\predictive)(\testfn)$ and $\hat{\eta}_n^N(\testfn) \asconverges\hat{\eta}_n(\testfn)$.
\end{prop}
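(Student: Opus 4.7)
The plan is to derive the strong law as a routine corollary of the $\mathbb{L}_p$-inequality of \cref{prop:lp}, via a Markov/Borel--Cantelli argument. Since both $\updateN(\testfn)\asconverges \update(\testfn)$ and $\bgN(\predictiveN)(\testfn)\asconverges \bg(\predictive)(\testfn)$ are handled in exactly the same way, I would present one of them in detail (say the update), and note that the argument for the other follows by substituting the corresponding part of \cref{prop:lp}.

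First I would fix a test function $\testfn\in\bounded$ (which may be assumed nonzero, otherwise the result is trivial) and an arbitrary $\varepsilon>0$. By \cref{prop:lp}\ref{prop:lp-upd}, applied with some $p>2$ (e.g.\ $p=3$), we have
\[
\Exp\bigl[\vert\updateN(\testfn)-\update(\testfn)\vert^p\bigr] \le \frac{C_{p,n}^{p}\,\supnorm{\testfn}^{p}}{N^{p/2}}.
\]
Markov's inequality then yields
\[
\pr\bigl(\vert\updateN(\testfn)-\update(\testfn)\vert>\varepsilon\bigr)\le \frac{C_{p,n}^{p}\,\supnorm{\testfn}^{p}}{\varepsilon^{p}\,N^{p/2}}.
\]
Because $p/2>1$, the sequence on the right-hand side is summable in $N$, so by the Borel--Cantelli lemma the event $\{\vert\updateN(\testfn)-\update(\testfn)\vert>\varepsilon\}$ occurs for only finitely many $N$, almost surely. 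Taking a countable sequence $\varepsilon_k\downarrow 0$ and intersecting the resulting null sets gives $\limsup_N\vert\updateN(\testfn)-\update(\testfn)\vert=0$ almost surely, i.e.\ $\updateN(\testfn)\asconverges \update(\testfn)$.

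The same argument with \cref{prop:lp}\ref{prop:lp-pred} in place of \ref{prop:lp-upd} gives $\bgN(\predictiveN)(\testfn)\asconverges \bg(\predictive)(\testfn)$. There is no serious obstacle here: all the analytical content needed to control the marginalization is absorbed into the constants $C_{p,n}$ and $\bar{C}_{p,n}$ of \cref{prop:lp}, and the derivation mirrors the standard SMC argument (cf.\ \citet[Appendix D]{amj26:BADJ20}), with \cref{ass:smc0,ass:weak,ass:smc2} inherited from that proposition.
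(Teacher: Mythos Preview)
Your proof is correct and follows exactly the approach the paper indicates: it explicitly states that the SLLN is obtained from the $\lp$-inequality of \cref{prop:lp} via Markov's inequality within a Borel--Cantelli argument, citing \citet[Appendix~D]{amj26:BADJ20}. There is nothing to add.
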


Using standard techniques (e.g. \cite{berti2006almost}) given in detail for the context of interest in \citet[Supplementary Material, Theorem 1]{schmon2018large}, the result of \cref{prop:slln} can be strengthened to the convergence of the measures in the weak topology:
\begin{prop}
\label{prop:asw}
Under \cref{ass:smc0,ass:weak,ass:smc2}, for all $n\geq 0$, $\bgN(\predictiveN)$ converges almost surely in the weak topology to $\bg(\predictive)$, $\bgN(\predictiveN)\rightharpoonup\bg(\predictive)$, and $\updateN$ converges similarly to $\update$, $\updateN\rightharpoonup\update$. 
\end{prop}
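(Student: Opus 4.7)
The plan is to upgrade the almost-sure convergence of integrals against individual test functions established in \cref{prop:slln} to the almost-sure weak convergence of the random measures themselves. The subtlety is that \cref{prop:slln} yields, for each fixed $\testfn \in \bounded$, a $\pr$-null set $\Omega_\testfn$ off which the convergence holds, and a priori this null set depends on $\testfn$; since $\cbounded$ is uncountable, one cannot immediately take a union over all continuous bounded test functions and retain a null set. The standard way to circumvent this (as in \cite{berti2006almost} and \citet[Supplementary Material, Theorem 1]{schmon2018large}) is to work with a \emph{countable} convergence-determining class.

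Concretely, I would proceed as follows. Since $E$ is Polish, there exists a countable family $\mathcal{H} = \{h_k\}_{k\geq 1} \subset \cbounded \subset \bounded$ that is convergence-determining for $\pmeasures$ in the weak topology (e.g.\ a countable family of bounded Lipschitz functions built from a countable dense subset of $E$ and a countable set of Lipschitz radii). For each $k$, apply \cref{prop:slln} with $\testfn = h_k$ to obtain a $\pr$-null set $\Omega_k$ such that on $\Omega_k^c$,
\begin{equation*}
\bgN(\predictiveN)(h_k) \longrightarrow \bg(\predictive)(h_k) \qquad \text{and} \qquad \updateN(h_k) \longrightarrow \update(h_k).
\end{equation*}

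Then set $\Omega_0 := \bigcup_{k\geq 1} \Omega_k$, which is still $\pr$-null as a countable union of null sets. On $\Omega_0^c$, the integrals against every element of $\mathcal{H}$ converge simultaneously; by the convergence-determining property of $\mathcal{H}$ this gives $\bgN(\predictiveN) \rightharpoonup \bg(\predictive)$ and $\updateN \rightharpoonup \update$ on a set of full probability, yielding the claim. The argument for $\updateN$ and for $\bgN(\predictiveN)$ is identical and can be carried out in parallel.

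There is really no hard step here: the work has already been done in \cref{prop:slln}, and the present proposition is essentially a measure-theoretic bookkeeping exercise. The only point requiring care is the existence of the countable convergence-determining subfamily of $\cbounded$, which is standard for Polish spaces and can be imported directly from either of the two cited references without modification.
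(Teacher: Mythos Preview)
Your proposal is correct and is essentially the approach the paper takes: the paper does not give a detailed proof but simply remarks that the result follows from \cref{prop:slln} via the standard countable convergence-determining class argument of \cite{berti2006almost} and \citet[Supplementary Material, Theorem 1]{schmon2018large}. You have accurately spelled out that argument, including the key Polish-space ingredient that guarantees such a countable class exists.
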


As it is the case for standard SMC algorithms, the reweighting step introduces a bias into estimates of normalized quantities, however, this decays at rate $N^{-1}$ as established in~\cref{app:bias}:
\begin{prop}[Bias estimate]
\label{prop:bias}
Under \cref{ass:smc0,ass:weak,ass:smc2}, for all $n\geq 0$ and any $\testfn\in \bounded$ we have
\begin{enumerate}[label=(\alph*)]
\item $\left\lvert\Exp\left[\bgN(\predictiveN)(\testfn) \right]-\bg(\predictive)( \testfn)\right\rvert \leq \bar{C}_{n}\frac{\supnorm{\testfn}}{N},$ \label{prop:bias-pred}
\item $\left\lvert \Exp\left[ \updateN(\testfn)\right] - \update(\testfn)\right\rvert \leq C_n \frac{\supnorm{\testfn}}{N}$, \label{prop:bias-upd}%
\end{enumerate}
for some finite $\bar{C}_n, C_n$.
The expectations are taken with respect to the law of all random variables generated within the SMC algorithm.
\end{prop}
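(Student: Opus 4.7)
The plan is to reduce part (b) to part (a) by conditioning, and then prove (a) by induction on $n$. For (b), under multinomial resampling the particles $\{\widetilde{X}_n^i\}$ are i.i.d.\ from $\bgN(\predictiveN)$ conditional on the pre-resampling filtration, so $\Exp[\updateN(\testfn)]=\Exp[\bgN(\predictiveN)(\testfn)]$; the bias of $\updateN(\testfn)$ therefore equals that of $\bgN(\predictiveN)(\testfn)$ and (b) follows from (a) with the same constant.

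For the inductive step of (a), let $\bar\varphi := \varphi - \bg(\predictive)(\varphi)$, so $\supnorm{\bar\varphi}\leq 2\supnorm{\varphi}$ and $\eta_n(G_n\bar\varphi)=0$. Starting from
\begin{align*}
\bgN(\predictiveN)(\varphi) - \bg(\predictive)(\varphi) = \frac{\eta_n^N(G_n^N\bar\varphi)}{\eta_n^N(G_n^N)},
\end{align*}
the identity $1/a = 1/b + (b-a)/(ab)$ with $a=\eta_n^N(G_n^N)$, $b=\eta_n(G_n)$ splits the expected bias into $\Exp[\eta_n^N(G_n^N\bar\varphi)]/\eta_n(G_n)$ plus a cross term. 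The cross term is $O(1/N)$ by Cauchy--Schwarz: \Cref{ass:smc2} bounds $\eta_n^N(G_n^N)$ and $\eta_n(G_n)$ away from zero, while $\eta_n^N(G_n^N\bar\varphi) = \eta_n^N(G_n^N)\,\bgN(\predictiveN)(\bar\varphi)$ has $L_2$-norm $O(\supnorm{\varphi}/\sqrt{N})$ by \Cref{prop:lp}(a), and $\|\eta_n^N(G_n^N)-\eta_n(G_n)\|_2 = O(1/\sqrt{N})$ by an analogous bound for the unnormalized measure (a direct by-product of the same chained-conditioning argument described below).

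It then remains to bound $\Exp[\eta_n^N(G_n^N\bar\varphi)]$. Let $\mathcal{F}_{n-1}^N$ be the $\sigma$-algebra of all randomness up to and including the resampling at step $n-1$, and $\mathcal{G}_{n-1}^N$ that up to the weighting at $n-1$; then $G_n^N$ is $\mathcal{G}_{n-1}^N$-measurable and, given $\mathcal{F}_{n-1}^N$, the $X_n^i$ are independent with $X_n^i\sim M_n(\widetilde{X}_{n-1}^i,\cdot)$. Successive conditioning on $\mathcal{F}_{n-1}^N$ and then $\mathcal{G}_{n-1}^N$, combined with the defining identity $\Psi_{G_{n-1}^N}(\eta_{n-1}^N) M_n(G_n^N f) = \Psi_{G_{n-1}^N}(\eta_{n-1}^N)(U_n K_n f)$, yields
\begin{align*}
\Exp[\eta_n^N(G_n^N\bar\varphi)] = \Exp\bigl[\Psi_{G_{n-1}^N}(\eta_{n-1}^N)(U_nK_n\bar\varphi)\bigr].
\end{align*}
The function $x_{n-1}\mapsto (U_nK_n\bar\varphi)(x_{n-1})$ is bounded by $2\supnorm{U_n}\supnorm{\varphi}$, and $\Psi_{G_{n-1}}(\eta_{n-1})(U_nK_n\bar\varphi) = \hat\eta_{n-1}(U_nK_n\bar\varphi) = \eta_n(G_n\bar\varphi) = 0$. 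The inductive hypothesis for (a) at step $n-1$ therefore yields $|\Exp[\eta_n^N(G_n^N\bar\varphi)]| \leq 2\bar{C}_{n-1}\supnorm{U_n}\supnorm{\varphi}/N$, closing the induction. The base case $n=0$ is the classical $O(1/N)$ self-normalized importance sampling bias, since $G_0$ is deterministic and $\eta_0^N$ is an i.i.d.\ sample from $M_0$. The main technical subtlety is that the bias quantity involves the random potential $G_n^N$ rather than $G_n$; the chained-conditioning identity resolves this by showing that, inside the outer expectation, $G_n^N$ collapses to $U_n K_n$ integrated against the previous step's weighted empirical measure, to which the inductive hypothesis applies.
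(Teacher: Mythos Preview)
Your proof is correct and follows the same overall inductive strategy as the paper: reduce (b) to (a) by conditioning on the pre-resampling $\sigma$-field, establish the base case via the classical self-normalised importance-sampling bias bound, and in the inductive step use the conditional-expectation identity (the paper's \cref{prop:conditional_expe}) to rewrite $\Exp[\predictiveN(\weightN\psi)]$ as $\Exp[\Psi_{G_{n-1}^N}(\eta_{n-1}^N)(K_n(U_n\psi))]$, to which the hypothesis at time $n-1$ applies, while controlling the remaining terms with the $\lp$-bounds and Cauchy--Schwarz.

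The one tactical difference is in how the ratio is handled. The paper, following Olsson (2004), expands $(u,v)\mapsto u/v$ to second order with Lagrange remainder around $\bigl(\Exp[\predictiveN(\weightN\testfn)],\Exp[\predictiveN(\weightN)]\bigr)$ and bounds the remainder via \cref{ass:weak,ass:smc2} and an auxiliary lemma (\cref{lemma:bias}). You instead center the numerator by $\bar\varphi=\varphi-\bg(\predictive)(\varphi)$, so that $\predictive(\weight\bar\varphi)=0$, and use the first-order identity $1/a=1/b+(b-a)/(ab)$; this avoids the Taylor machinery and makes the leading term exactly the quantity to which the inductive hypothesis applies. Both routes need the same $L_2$ control of $\predictiveN(\weightN)-\predictive(\weight)$, which in the paper is \cref{lp:lemma_weight}. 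Your argument is a little more direct; the paper's is more mechanical but transplants verbatim from the standard SMC literature. (A cosmetic point: your labels $\mathcal{F}_{n-1}^N$ and $\mathcal{G}_{n-1}^N$ are swapped relative to the paper's convention, but this does not affect the argument.)
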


The following result, proved in~\cref{app:clt}, quantifies the asymptotic variance of the estimates provided by \cref{alg:reformulation} using multinomial resampling. We focus on this resampling scheme because of its simplicity and because, as shown in \citet[Theorem 7]{Gerber2019}, it provides an upper bound on the asymptotic variance obtained with more sophisticated resampling schemes.

\begin{prop}[Central limit theorem]
\label{prop:clt}
Under \cref{ass:smc0,ass:weak,ass:smc2}, for every $n\geq 1$ and every $\testfn\in \bounded$:
\begin{enumerate}[label=(\alph*)]
\item $\sqrt{N}\left[\bgN(\predictiveN)(\testfn) - \bg(\predictive)(\testfn)\right] \dconverges \N\left( 0, \bar{\V}_{n}(\testfn)\right),$
\item $\sqrt{N}\left[\hat{\eta}_n^N(\testfn) - \hat{\eta}_n(\testfn)\right]\dconverges \N\left(0, \V_n(\testfn)\right)$,
 \end{enumerate}
where $ \bar{\V}_{n}(\testfn), \V_n(\testfn)$ are given by the following recursion, 
\begin{align*}
\widehat{\V}_n(\testfn) &= \var_{\predictive}(\weight\testfn) + \bar{\V}_{n-1}(K_n(U_n\testfn));\\
\bar{\V}_{n}(\testfn) &= \frac{1}{\predictive(G_n)^2}\widehat{\V}_n\left(\testfn - \bg(\predictive)(\testfn)\right);\\
\V_n(\testfn)&=\var_{\update}(\testfn) + \bar{\V}_{n}(\testfn),
\end{align*}
with initial condition $\bar{\V}_0(\testfn) = \var_{M_0}\left(\frac{G_0}{\eta_0(G_0)}(\testfn-\hat{\eta}_0(\testfn))\right)$.
\end{prop}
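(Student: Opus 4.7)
The plan is to proceed by strong induction on $n$, simultaneously establishing all three CLTs at each time step. The base case $n=0$ reduces to the classical CLTs for importance sampling with $N$ iid draws from $M_0$ and for multinomial resampling. The mutation CLT for $\predictiveN$ at general $n$ will follow from the standard two-part decomposition
\begin{align*}
\sqrt{N}[\predictiveN(\testfn)-\predictive(\testfn)] = \sqrt{N}\bigl[\predictiveN(\testfn)-\hat\eta_{n-1}^{N}M_n(\testfn)\bigr]+\sqrt{N}\bigl[\hat\eta_{n-1}^{N}M_n(\testfn)-\hat\eta_{n-1}M_n(\testfn)\bigr],
\end{align*}
controlling the first summand by a conditional CLT for iid sampling given $\hat\eta_{n-1}^N$ and the second by invoking the inductive hypothesis at $M_n(\testfn)$. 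The CLT for $\updateN$ will be obtained analogously by compounding the reweighting CLT below with a conditional multinomial CLT, which contributes the additive $\var_{\update}(\testfn)$ term in the variance recursion.

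The heart of the argument, and the step that genuinely departs from standard SMC, is the reweighting CLT. Setting $\bar\testfn := \testfn - \bg(\predictive)(\testfn)$ so that $\predictive(\weight\bar\testfn)=0$, the ratio identity together with \Cref{prop:wlln} will reduce the problem to analysing
\begin{align*}
\sqrt{N}\,\predictiveN(\weightN\bar\testfn) = \underbrace{\sqrt{N}\,\predictiveN\bigl((\weightN-\weight)\bar\testfn\bigr)}_{\mathrm{(I)}} + \underbrace{\sqrt{N}\bigl[\predictiveN(\weight\bar\testfn)-\predictive(\weight\bar\testfn)\bigr]}_{\mathrm{(II)}},
\end{align*}
since $\predictiveN(\weightN)\pconverges\predictive(\weight)$. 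Term $\mathrm{(II)}$ will be handled by the mutation CLT above applied to $\weight\bar\testfn$. Term $\mathrm{(I)}$ is the marginalization correction, which has no analogue in standard SMC, and is where the novel work lies.

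To deal with $\mathrm{(I)}$, I would introduce $\Theta^N := \Psi_{G_{n-1}^N}(\eta_{n-1}^N) - \hat\eta_{n-1}$ and expand the ratio defining $\weightN$ to first order in $\Theta^N$. The uniform lower bound on $\Psi_{G_{n-1}^N}(\eta_{n-1}^N)M_n$ from \Cref{ass:smc2} combined with the $\lp$ bound of \Cref{prop:lp} should make the second-order remainder $o_P(1)$ after scaling by $\sqrt{N}$, yielding the approximation
\begin{align*}
\sqrt{N}\,\predictiveN\bigl((\weightN-\weight)\bar\testfn\bigr) = \sqrt{N}\,\Theta^N\bigl(K_n(U_n\bar\testfn)-M_n(\weight\bar\testfn)\bigr) + o_P(1).
\end{align*}
The crucial algebraic step is that the pre-resample drift part of $\mathrm{(II)}$ --- extracted by splitting $\hat\eta_{n-1}^NM_n(\weight\bar\testfn) = \Psi_{G_{n-1}^N}(\eta_{n-1}^N)M_n(\weight\bar\testfn) + \text{resampling noise}$ --- contributes exactly $\sqrt{N}\,\Theta^N(M_n(\weight\bar\testfn))$, cancelling the matching term above and leaving only $\sqrt{N}\,\Theta^N(K_n(U_n\bar\testfn))$ as the low-frequency contribution; by the inductive CLT this has limiting variance $\bar{\V}_{n-1}(K_n(U_n\bar\testfn))$. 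The remaining high-frequency noise from mutation and resampling, centered and telescoped using $\predictive(\weight\bar\testfn)=0$, will combine into $\var_\predictive(\weight\bar\testfn)$. Because $\Theta^N$ is measurable with respect to the pre-resample $\sigma$-algebra $\sfmutation$ while the high-frequency noise is a conditional martingale increment, the two pieces will be asymptotically independent; summing variances and dividing by $\predictive(\weight)^2$ then produces the stated recursion for $\bar{\V}_n$ and $\widehat{\V}_n$.

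The hardest part will be the rigorous control of the expansion of $\weightN-\weight$: one must bound the Taylor remainder so that its $\sqrt{N}$-scaled integral against $\bar\testfn$ under $\predictiveN$ is $o_P(1)$, and simultaneously justify replacing $\predictiveN$ by $\predictive$ against the $O_P(N^{-1/2})$ quantity $\weightN-\weight$ at cost $o_P(1)$. Both should follow from the uniform lower bound $\weightN\ge m_g^{-1}$ supplied by \Cref{ass:smc2} combined with the inductive application of \Cref{prop:lp} at time $n-1$.
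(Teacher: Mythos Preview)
Your inductive skeleton and the handling of the mutation and resampling steps are the same as the paper's. The variance bookkeeping you outline for the reweighting step also comes out correctly: after your cancellation, the low-frequency piece $\sqrt{N}\,\Theta^N(K_n(U_n\bar\testfn))$ contributes $\bar{\V}_{n-1}(K_n(U_n\bar\testfn))$, and the resampling plus mutation noise contribute $\var_{\hat\eta_{n-1}}(M_n(\weight\bar\testfn))+\hat\eta_{n-1}(\var_{M_n}(\weight\bar\testfn))=\var_{\predictive}(\weight\bar\testfn)$ by the law of total variance, matching the claimed $\widehat{\V}_n$.

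However, your route to the reweighting CLT is genuinely different from, and considerably more laborious than, the paper's. You split $\predictiveN(\weightN\bar\testfn)$ into $\predictiveN((\weightN-\weight)\bar\testfn)+\predictiveN(\weight\bar\testfn)$, then linearize $\weightN-\weight$ in $\Theta^N$, and finally arrange a cancellation between the $\Theta^N(M_n(\weight\bar\testfn))$ contributions of the two pieces. The paper sidesteps all of this by exploiting the exact conditional-expectation identity of \cref{prop:conditional_expe},
\[
\Exp\!\bigl[\predictiveN(\weightN\testfn)\,\big|\,\mathcal{F}_{n-1}^N\bigr]=\Psi_{G_{n-1}^N}(\eta_{n-1}^N)\bigl(K_n(U_n\testfn)\bigr),
\]
and decomposes directly as
\[
\sqrt{N}\bigl[\predictiveN(\weightN\testfn)-\predictive(\weight\testfn)\bigr]
=\sqrt{N}\bigl[\predictiveN(\weightN\testfn)-\Exp[\,\cdot\,|\mathcal{F}_{n-1}^N]\bigr]
+\sqrt{N}\bigl[\Psi_{G_{n-1}^N}(\eta_{n-1}^N)-\hat\eta_{n-1}\bigr]\bigl(K_n(U_n\testfn)\bigr).
\]
The second term is exactly your $\sqrt{N}\,\Theta^N(K_n(U_n\testfn))$ and is handled by the inductive hypothesis; the first is a conditional martingale increment whose variance is shown (\cref{clt:variance_convergence}) to converge to $\var_{\predictive}(\weight\testfn)$. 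No Taylor remainder, no swapping of $\predictiveN$ for $\predictive$, and no engineered cancellation are needed: the identity in \cref{prop:conditional_expe} already performs the cancellation you carry out by hand. Your approach would work under the stated assumptions, but making the $o_P(1)$ claims rigorous requires pointwise control of the linearization of $\weightN-\weight$, which implicitly relies on writing numerator and denominator as densities with respect to a common dominating measure; the paper's argument operates purely at the level of integrated quantities and Radon--Nikod\'ym derivatives and therefore avoids this issue entirely.
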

\subsection{Method of proof}
\label{sec:mop}
We emphasize that the proof techniques used are small extensions of those used in the standard SMC setting; the primary interest of these results is that they demonstrate that the marginal version of the algorithm inherits many of the good properties of the standard algorithm and allowing comparison between the two algorithms via their asymptotic variances.
The details of the proofs are postponed to~\cref{app:lp,app:wlln,app:clt,app:bias,app:variance} where we give self-contained arguments incorporating the novel elements discussed in this section with well-known techniques used to obtain similar results for standard SMC.

The main difference between standard sequential Monte Carlo methods and marginal SMC methods is the presence of an additional approximations in the weights~\eqref{eq:WN}; if we could use the idealized algorithm in which $\weight$ in~\eqref{eq:W} can be computed exactly, then we could apply the theoretical results for standard SMC (e.g., \cite{del2013mean}).
Hence, to obtain convergence results for marginal SMC we need to control the behaviour of the non-standard weights $\weightN$. We point out that since the weights $\weightN$ are biased approximations of $\weight$, we cannot use the arguments based on extensions of the state space (as in particle filters using unbiased estimates of the potentials; \cite{fearnhead2008particle}) to provide theoretical guarantees for MSMC.

To control the effect of $\weightN$, we identify the conditional expectation of $\predictiveN(\weightN\testfn)$ in \cref{prop:conditional_expe} extending the result of \citet[Lemma 1]{branchini2021optimized}. \cref{prop:conditional_expe}, whose simple proof is provided in~\cref{app:expe}, combined with a number of results presented in the appendix which employ this conditional expectation within expansions of various aspects of the sampling error allow us to obtain results for MSMC algorithms in some generality.

\begin{prop}
\label{prop:conditional_expe}
Let $\mathcal{F}_{n-1}^N$ denote the $\sigma$-field generated by the weighted samples up to (and including) time $n-1$. We have
$\Exp\left[\predictiveN(\weightN\testfn)\mid \mathcal{F}_{n-1}^N\right] =\Psi_{G_{n-1}^N}(\eta_{n-1}^N)(K_n(\testfn U_n)),$
for all $n\geq 1$ and all $\testfn\in\bounded$.
\end{prop}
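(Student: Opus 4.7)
The plan is to expand $\predictiveN(\weightN\testfn)=N^{-1}\sum_{i=1}^N \weightN(X_n^i)\testfn(X_n^i)$ and exploit two structural observations. First, the weight function $\weightN$ depends only on the pre-resampling weighted sample $\{X_{n-1}^j,W_{n-1}^j\}_{j=1}^N$, so it is $\mathcal{F}_{n-1}^N$-measurable and may be treated as a deterministic function when conditioning on $\mathcal{F}_{n-1}^N$. Second, conditionally on $\mathcal{F}_{n-1}^N$, multinomial resampling makes each $\widetilde{X}_{n-1}^i$ an i.i.d.\ draw from $\Psi_{G_{n-1}^N}(\eta_{n-1}^N)=\sum_j W_{n-1}^j\delta_{X_{n-1}^j}$, and the subsequent mutation $X_n^i\sim M_n(\widetilde{X}_{n-1}^i,\cdot)$ then gives $X_n^i$ the marginal law $\Psi_{G_{n-1}^N}(\eta_{n-1}^N)M_n$.

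Armed with these observations, I would compute
\[
\Exp\!\left[\weightN(X_n^i)\testfn(X_n^i)\mid \mathcal{F}_{n-1}^N\right]=\int \weightN(x_n)\testfn(x_n)\,\big(\Psi_{G_{n-1}^N}(\eta_{n-1}^N)M_n\big)(\rmd x_n),
\]
and then invoke the defining property of $\weightN$ in~\eqref{eq:WN}, namely that it is the Radon-Nikodym derivative of $\Psi_{G_{n-1}^N}(\eta_{n-1}^N)(U_n\cdot K_n)$ with respect to $\Psi_{G_{n-1}^N}(\eta_{n-1}^N)M_n$. This cancellation yields $\int \testfn(x_n)\,\big(\Psi_{G_{n-1}^N}(\eta_{n-1}^N)(U_n\cdot K_n)\big)(\rmd x_n)$, which by Fubini equals $\Psi_{G_{n-1}^N}(\eta_{n-1}^N)(K_n(\testfn U_n))$. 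Since all $N$ summands share the same conditional expectation, averaging over $i$ produces the claimed identity.

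There is no serious analytical difficulty; the only point requiring care is the interpretation of $\mathcal{F}_{n-1}^N$. It must encode the weighted sample prior to resampling at time $n-1$ but leave the resampling randomness to be averaged over, because this is precisely what converts the empirical proposal $N^{-1}\sum_i M_n(\widetilde{X}_{n-1}^i,\cdot)$ associated with a single realisation of the resampling and mutation steps into the reference measure $\Psi_{G_{n-1}^N}(\eta_{n-1}^N)M_n$ appearing in the denominator of~\eqref{eq:WN}. Under this filtration convention the Radon-Nikodym cancellation applies cleanly, and this is the mechanism by which the empirical weight $\weightN$, although itself a biased estimator of $\weight$, produces an unbiased quantity once paired with $\predictiveN$ in conditional expectation.
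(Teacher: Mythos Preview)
Your proposal is correct and follows essentially the same route as the paper's proof. The only cosmetic difference is that the paper makes the tower property explicit by inserting the intermediate $\sigma$-field $\mathcal{G}_{n-1}^N$ (containing the resampled particles $\widetilde{X}_{n-1}^i$) before integrating out the resampling, whereas you compress these two steps by directly identifying the conditional marginal law of $X_n^i$ given $\mathcal{F}_{n-1}^N$ as $\Psi_{G_{n-1}^N}(\eta_{n-1}^N)M_n$; the Radon--Nikodym cancellation and Fubini step are then identical.
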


For the proof of the $\lp$-inequality we combine the results of \cite{smc:theory:CD02, miguez2013convergence} which control the approximation introduced by the mutation step, the reweighting step and the resampling step with \cref{lp:lemma_weight} in~\cref{app:lp} which controls the error induced by using the approximated weights~\eqref{eq:WN}.
The proof of \cref{lp:lemma_weight} is based on a comparison between $\predictiveN(\weightN\testfn)$ and its conditional expectation given in \cref{prop:conditional_expe} combined with the inductive hypothesis and \cref{prop:conditional_expe}.
Similarly, the proof of the bias estimates in \cref{prop:bias} combines the inductive approach of~\cite{olsson2004bootstrap} with \cref{prop:conditional_expe}, showing that the approximate weights do not worsen the rate of decay of bias with $N$.

For the proof of the weak law of large numbers we again adopt an inductive strategy similar to that of \citet[Theorem 1]{douc2007limit} and \citet[Theorem 9.4.5]{cappe2005inference}, we combine standard arguments establishing a WLLN for the mutation, the reweighting and the resampling step with \cref{wlln:lemma_weight} in~\cref{app:wlln} which establishes a WLLN for $\predictiveN(\weightN\testfn)$.

The central limit theorem follows using the inductive approach of \cite{chopin2004central}, with an additional result (\cref{clt:weight_comparison}) which shows that $\predictiveN(\weightN\testfn)$ satisfies a central limit theorem. 
\subsection{Variance comparison}
\label{sec:variance_comparison}

\cref{prop:clt} gives a recursive formula for the asymptotic variance similar to that in \cite{chopin2004central}. Comparing \cref{prop:clt} and \citet[Section 2.3]{chopin2004central}, we find that the main difference in the variance expression appears in $\widehat{\V}_n(\testfn)$: in the case of standard SMC we have $\widehat{\V}^{\textrm{SMC}}_n(\testfn) =\var_{\predictive}(\weight\testfn)+ \bar{\V}_{n-1}(M_n(G_n\testfn))$.
This is not surprising since the main difference between SMC and MSMC is in the importance weights~\eqref{eq:W}, whose expression is taken into account in $\widehat{\V}_n(\testfn)$.

Using an inductive argument whose details are given in~\cref{app:variance_closed} (see also \citet[Eq. 9]{chopin2004central} and \citet[Appendix A]{amj26:JD07}), we obtain a closed form for $\bar{\V}_n(\testfn)$.
To this end, let us define the following operator akin that of \citet[Eq. 10]{chopin2004central} and \citet[Section 2.7.2]{smc:theory:Del04}: 
$$
  \forall q \in \mathbb{N}:\qquad
\Gamma_{q}(\testfn)(x_{q-1}) := \int K_q(x_{q-1}, \rmd x_q)U_q(x_{q-1}, x_q)\testfn(x_{q}),
$$
and the two-parameter semigroup $\Gamma_{p:q}(\testfn) = \Gamma_{p+1}(\testfn)\circ \dots\circ\Gamma_{q}(\testfn)$ for all $p<q$, with the convention $\Gamma_{q:q}=\textsf{Id}$.

We focus on the variance of the estimates before the resampling step which are usually preferred to those given by $\updateN$, as their finite$-N$ variance is smaller as a consequence of the Rao-Blackwell Theorem \citep{blackwell1947conditional, rao1992information}. \cref{app:variance_closed} provides an explicit characterization of the variance:
\begin{prop}
\label{prop:variance_closed}
The variance $\bar{\V}_n(\testfn)$ in \cref{prop:clt} can be equivalently written as
\begin{align}
\label{eq:variance_expression}
\bar{\V}_n(\testfn) = \sum_{k=0}^{n}\Exp_{\eta_k}\left[\left(G_k\left[\Gamma_{k:n}(\testfn) - \hat{\eta}_n(\testfn)\Gamma_{k:n}(1)\right]\right)^2\right]\prod_{j=k}^{n}\frac{1}{\eta_{j}(G_{j})^2},
\end{align}
for all $n\geq0$ and all $\testfn\in\bounded$.
\end{prop}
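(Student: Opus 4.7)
\textbf{Proof plan for Proposition \ref{prop:variance_closed}.} The plan is to proceed by induction on $n$, unfolding the recursion for $\bar{V}_n$ given in \cref{prop:clt} and using two structural facts about the operator $\Gamma_{k:n}$: its linearity and the semigroup identity $\Gamma_{k:n-1}\circ\Gamma_n = \Gamma_{k:n}$, which is immediate from the definition $\Gamma_{p:q} = \Gamma_{p+1}\circ\cdots\circ\Gamma_q$.

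First I would dispatch the base case $n=0$. Since $\eta_0 = M_0$, the initial condition $\bar{V}_0(\testfn) = \var_{M_0}(G_0/\eta_0(G_0)\cdot (\testfn - \hat{\eta}_0(\testfn)))$ can be rewritten as $\eta_0(G_0)^{-2}\var_{\eta_0}(G_0(\testfn-\hat{\eta}_0(\testfn)))$. The mean of $G_0(\testfn-\hat\eta_0(\testfn))$ under $\eta_0$ equals $\eta_0(G_0\testfn) - \hat{\eta}_0(\testfn)\eta_0(G_0)$, which vanishes because $\hat{\eta}_0(\testfn) = \eta_0(G_0\testfn)/\eta_0(G_0)$. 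Hence $\bar V_0(\testfn) = \eta_0(G_0)^{-2}\Exp_{\eta_0}[(G_0(\testfn-\hat\eta_0(\testfn)))^2]$, matching the claimed formula for $k=n=0$ with $\Gamma_{0:0} = \id$.

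For the inductive step, assume the formula at $n-1$. Applying the recursion
\[
\bar V_n(\testfn) = \frac{1}{\eta_n(G_n)^2}\Big[\var_{\eta_n}(G_n(\testfn-\hat\eta_n(\testfn))) + \bar V_{n-1}\big(K_n(U_n(\testfn-\hat\eta_n(\testfn)))\big)\Big],
\]
I would set $\psi := K_n(U_n(\testfn-\hat\eta_n(\testfn))) = \Gamma_n(\testfn) - \hat\eta_n(\testfn)\Gamma_n(1)$ and invoke the inductive hypothesis at $\psi$. The two ingredients needed there are: (i) by linearity and the semigroup property, $\Gamma_{k:n-1}(\psi) = \Gamma_{k:n}(\testfn) - \hat\eta_n(\testfn)\Gamma_{k:n}(1)$; and (ii) $\hat\eta_{n-1}(\psi)=0$. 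The latter follows from the identity $\hat\eta_n(\testfn) = \hat\eta_{n-1}(\Gamma_n(\testfn))/\hat\eta_{n-1}(\Gamma_n(1))$, which is just the update recursion for $\hat\eta_n$ in terms of $\hat\eta_{n-1}$ written through $\Gamma_n$. Thus the $\hat\eta_{n-1}(\psi)\Gamma_{k:n-1}(1)$ contribution in the inductive formula drops out entirely.

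Finally, the same centering argument used in the base case gives $\var_{\eta_n}(G_n(\testfn-\hat\eta_n(\testfn))) = \Exp_{\eta_n}[(G_n(\testfn-\hat\eta_n(\testfn)))^2]$, which is exactly the $k=n$ summand (with $\Gamma_{n:n}=\id$), and the prefactor $\eta_n(G_n)^{-2}$ extends the inherited product $\prod_{j=k}^{n-1}\eta_j(G_j)^{-2}$ to $\prod_{j=k}^{n}\eta_j(G_j)^{-2}$ for every term, completing the induction. There is no serious obstacle here; the only subtlety worth flagging is keeping the roles of $\eta_k$ and $\hat\eta_k$ straight and verifying that $\hat\eta_{n-1}(\psi)=0$ uses precisely the update relation that defines $\hat\eta_n$, so the closed form is really an algebraic consequence of the Feynman-Kac structure rather than anything specific to the marginal approximation.
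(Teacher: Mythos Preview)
Your proposal is correct and follows essentially the same inductive argument as the paper: the paper also verifies the base case by showing the $\eta_0$-mean of $G_0(\testfn-\hat\eta_0(\testfn))$ vanishes, then in the inductive step sets $\psi_n:=K_n(U_n\testfn)-\hat\eta_n(\testfn)K_n(U_n)$, uses $\hat\eta_{n-1}(\psi_n)=0$ (phrased there via the unnormalized measures $\hat\gamma_n$, equivalent to your identity $\hat\eta_n(\testfn)=\hat\eta_{n-1}(\Gamma_n\testfn)/\hat\eta_{n-1}(\Gamma_n 1)$), and the semigroup relation $\Gamma_{k:n-1}\circ\Gamma_n=\Gamma_{k:n}$ to collapse the inductive hypothesis into the claimed sum. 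The only cosmetic difference is that the paper also writes out $n=1$ explicitly before the general step, which your plan rightly omits as unnecessary.
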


We now discuss some special cases of marginal SMC algorithms, obtain their asymptotic variances and compare them with those of their non-marginal counterparts.
The variances of the special cases considered here can be obtained from~\eqref{eq:variance_expression} by simple algebraic manipulations which we postpone to~\cref{app:variance}. See~\cref{app:variance_mpf,app:variance_comparison} for the calculations underlying the following corollary: 
\begin{corollary}[Variance of MPF]
\label{cor:mpf}
For all $n\geq 0$ and all $\testfn\in\bounded$ we have
\begin{align*}
V_n^{\textrm{MPF}}(\testfn) &=\int  \frac{p(x_0|y_{1:n})^2}{
q_0(x_0)}\left(\int p(x_{n}|y_{1:n}, x_0)\left[\testfn(x_n)-\bar{\testfn}_n\right]\rmd x_{n}\right)^2\rmd x_0\\
&+ \int \frac{p(x_k|y_{1:n})^2}{\int q_k(x_k|x_{k-1}, y_k)p(x_{k-1}|y_{1:k-1})\rmd x_{k-1}}\left(\int p(x_{n}|y_{k+1:n}, x_k)\left[\testfn(x_n)-\bar{\testfn}_n\right]\rmd x_{n}\right)^2\rmd x_k\notag\\
&+\int \frac{p(x_{n}|y_{1:n})^2}{\int q_{n}(x_n|x_{n-1}, y_n) p(x_{n-1}|y_{1:n-1})\rmd x_{n-1}} (\testfn(x_n)-\bar{\testfn}_n)^2\rmd x_{n},\notag
\end{align*}
with $\bar{\testfn}_n:= \int \testfn(x_n)p(x_n|y_{1:n})\rmd x_n$.
In addition, $V_n^{\textrm{MPF}}(\testfn) \leq V_n^{\textrm{PF}}(\testfn)$, where $V_n^{\textrm{PF}}(\testfn)$ denotes the asymptotic variance of a particle filter with the same proposals given in \citet[Section 2.4]{johansen2008note}.
\end{corollary}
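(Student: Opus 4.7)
The plan is to derive the explicit formula by specialising Proposition \ref{prop:variance_closed} to the MPF setting and then to show $V_n^{\textrm{MPF}} \leq V_n^{\textrm{PF}}$ by a term-by-term Cauchy--Schwarz bound.

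For the explicit formula, I would substitute the MPF ingredients $K_n(x_{n-1},\rmd x_n) = f_n(x_n|x_{n-1})\rmd x_n$ and $U_n(x_{n-1},x_n)=g_n(y_n|x_n)$ into the semigroup $\Gamma_{k:n}$, which telescopes via the Markov property to
\[
\Gamma_{k:n}(\testfn)(x_k) = p(y_{k+1:n}|x_k)\int \testfn(x_n)\, p(x_n|y_{k+1:n},x_k)\,\rmd x_n,\qquad \Gamma_{k:n}(1)(x_k)=p(y_{k+1:n}|x_k).
\]
For $k\geq 1$ the MPF weight--density equals $G_k(x_k)=g_k(y_k|x_k)p(x_k|y_{1:k-1})/\tilde{q}_k(x_k|y_k)$ with $\tilde{q}_k(x_k|y_k):=\int q_k(x_k|y_k,x_{k-1})p(x_{k-1}|y_{1:k-1})\rmd x_{k-1}$, and a direct computation gives $\eta_k(G_k)=p(y_k|y_{1:k-1})$. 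Assembling the expectation $\Exp_{\eta_k}[(G_k[\Gamma_{k:n}(\testfn)-\hat{\eta}_n(\testfn)\Gamma_{k:n}(1)])^2]$ and dividing by $\prod_{j=k}^n\eta_j(G_j)^2=p(y_{k:n}|y_{1:k-1})^2$, the combination of Bayes' rule $p(x_k|y_{1:k})\propto g_k(y_k|x_k)p(x_k|y_{1:k-1})$ and the smoothing identity $p(x_k|y_{1:n})=p(x_k|y_{1:k})p(y_{k+1:n}|x_k)/p(y_{k+1:n}|y_{1:k})$ collapses the numerator to $p(x_k|y_{1:n})^2$ and yields the stated $k$-th term. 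The boundary $k=0$ case is handled separately using $\eta_0(\rmd x_0)=q_0(x_0)\rmd x_0$, $G_0=f_0/q_0$, $\eta_0(G_0)=1$, and Bayes' rule $f_0(x_0)p(y_{1:n}|x_0)=p(x_0|y_{1:n})p(y_{1:n})$.

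For the inequality, I would write the PF variance of \citet[Section 2.4]{johansen2008note} in the matching form
\[
V_n^{\textrm{PF}}(\testfn)=\text{(term at } k=0) + \sum_{k=1}^{n}\int \frac{p(x_{k-1:k}|y_{1:n})^2}{p(x_{k-1}|y_{1:k-1})q_k(x_k|y_k,x_{k-1})}\, h_k(x_k)^2\,\rmd x_{k-1:k},
\]
where $h_k(x_k):=\int p(x_n|y_{k+1:n},x_k)[\testfn(x_n)-\bar{\testfn}_n]\rmd x_n$ and the $k=0$ term coincides with that of MPF. Applying Cauchy--Schwarz to $p(x_k|y_{1:n})=\int p(x_{k-1:k}|y_{1:n})\rmd x_{k-1}$ against the weight $p(x_{k-1}|y_{1:k-1})q_k(x_k|y_k,x_{k-1})$ yields
\[
p(x_k|y_{1:n})^2 \leq \tilde{q}_k(x_k|y_k)\int \frac{p(x_{k-1:k}|y_{1:n})^2}{p(x_{k-1}|y_{1:k-1})q_k(x_k|y_k,x_{k-1})}\,\rmd x_{k-1}.
\]
Dividing by $\tilde{q}_k(x_k|y_k)$, multiplying by $h_k(x_k)^2$ and integrating in $x_k$ gives the term-by-term bound, and summing in $k$ completes the proof.

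The main obstacle is the Bayes-rule bookkeeping in the first step: recognising that the large product of predictive densities, transition kernels and reciprocal normalising constants collapses to $p(x_k|y_{1:n})^2/\tilde{q}_k(x_k|y_k)$ is where the derivation really bites, and the $k=0$ term has to be massaged separately. Once this identification has been made, Cauchy--Schwarz is essentially automatic and formalises the well-known intuition that marginalising out $x_{k-1}$ cannot increase the $\chi^2$-type divergence appearing as the asymptotic variance coefficient.
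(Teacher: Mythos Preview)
Your proposal is correct and follows essentially the same route as the paper. For the explicit formula you specialise \cref{prop:variance_closed} to the MPF ingredients and carry out the Bayes-rule bookkeeping, exactly as in \cref{app:variance_mpf}. For the comparison you do a term-by-term argument; the paper phrases the key step as Jensen's inequality $\mathbb{E}_{\nu_k}[(G_k^{\textrm{PF}})^2]\geq (\mathbb{E}_{\nu_k}[G_k^{\textrm{PF}}])^2=(G_k^{\textrm{MPF}})^2$ with respect to the measure $\nu_k(\rmd x_{k-1},x_k)\propto q_k(x_k|x_{k-1},y_k)p(x_{k-1}|y_{1:k-1})\rmd x_{k-1}$, whereas you phrase it as Cauchy--Schwarz applied to $p(x_k|y_{1:n})=\int p(x_{k-1:k}|y_{1:n})\rmd x_{k-1}$ against the weight $p(x_{k-1}|y_{1:k-1})q_k(x_k|x_{k-1},y_k)$: these are literally the same inequality. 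One small cosmetic point: the PF variance in \citet{johansen2008note} is written on path space with $p(x_{0:k}|y_{1:n})^2/[p(x_{0:k-1}|y_{1:k-1})q_k]$, so to match your two-step form with $p(x_{k-1:k}|y_{1:n})^2/[p(x_{k-1}|y_{1:k-1})q_k]$ you should note that integrating out $x_{0:k-2}$ is harmless here because the ratio factorises; this is a one-line check but worth stating.
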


Using the results in \cite{walker2014jensen}, we can show that equality occurs only when $q_k\equiv f_k$ and MPF coincides with the bootstrap particle filter of \cite{smc:methodology:GSS93}; the variance expression in \cref{cor:mpf} then coincides with that of BPF given explicitly in \citet[Section 2.4]{johansen2008note} --- see~\cref{app:variance_bpf}. 
In all other cases, the variance reduction can be quantified using, e.g., \citet[Theorem 3.1]{walker2014jensen}.

The variance of the MAPF can be obtained from $\bar{V}_n(\testfn)$ in \cref{prop:variance_closed} via an additional importance sampling step with weights $\tilde{w}_n$ in~\eqref{eq:inferential_weights}, as shown in~\cref{app:variance_mapf} (and compared with the standard case in~\cref{app:variance_comparison}).
\begin{corollary}[Variance of MAPF]
\label{cor:mapf}
For all $n\geq 0$ and all $\testfn\in\bounded$ we have
 \begin{align*}
 \V^{\textrm{MAPF}}_n(\testfn) 
 &= \int \frac{p(x_0|y_{1:n})^2}{ q_0(x_0)}\left(\int p(x_n|x_0, y_{1:n})\left[\testfn(x_n)-\bar{\testfn}_n\right]\rmd x_{n}\right)^2\rmd x_0\\
 &+\int\frac{p(x_{k}|y_{1:n})^2}{\int q_k(x_k|x_{k-1}', y_k)\hat{\eta}_{k-1}(\rmd x_{k-1}')}\left(\int p(x_n|x_k, y_{1:n})\left[\testfn(x_n)-\bar{\testfn}_n\right]\rmd x_{n}\right)^2\rmd x_k\\
 &+\int\frac{p(x_{n}|y_{1:n})^2\left[\testfn(x_n) - \bar{\testfn}_n\right]^2}{\int q_n(x_n|x_{n-1}', y_n)\hat{\eta}_{n-1}(\rmd x_{n-1}')}\rmd x_n.
 \end{align*}
with $\bar{\testfn}_n:= \int \testfn(x_n)p(x_n|y_{1:n})\rmd x_n$.
In addition, $V_n^{\textrm{MAPF}}(\testfn) \leq V_n^{\textrm{APF}}(\testfn)$, where $V_n^{\textrm{APF}}(\testfn)$ denotes the asymptotic variance of an APF with the same proposals given in \citet[Section 2.4]{johansen2008note}.
\end{corollary}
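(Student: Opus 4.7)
The plan is to specialise the closed-form variance of \cref{prop:variance_closed} to the MAPF setting and then propagate it through the additional inferential importance sampling step with weights $\tilde{w}_n$ in~\eqref{eq:inferential_weights}. The two non-routine inputs are (i) an explicit simplification of the semigroup $\Gamma_{k:n}$ exploiting the telescopic structure of the MAPF potentials, and (ii) a Jensen-type argument to compare the resulting expression with the asymptotic variance of the non-marginal APF.

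First, I would compute $\Gamma_{k:n}(\testfn)$ using $K_q(x_{q-1},\rmd x_q)=f_q(x_q|x_{q-1})\rmd x_q$ and $U_q(x_{q-1},x_q)=g_q(y_q|x_q)\tilde{p}(y_{q+1}|x_q)/\tilde{p}(y_q|x_{q-1})$. A direct computation shows that each factor $\tilde{p}(y_{q+1}|x_q)$ introduced by $U_q$ is cancelled by the reciprocal $1/\tilde{p}(y_{q+1}|x_q)$ arising in $U_{q+1}$, so that only $\tilde{p}(y_{n+1}|x_n)$ and $1/\tilde{p}(y_{k+1}|x_k)$ survive together with $\prod_{j=k+1}^{n} f_j(x_j|x_{j-1})g_j(y_j|x_j)$; recognising the latter as the joint law of $(x_{k+1:n},y_{k+1:n})$ given $x_k$ yields
\begin{equation*}
\Gamma_{k:n}(\testfn)(x_k) = \frac{p(y_{k+1:n}|x_k)}{\tilde{p}(y_{k+1}|x_k)}\int p(x_n|x_k,y_{k+1:n})\,\tilde{p}(y_{n+1}|x_n)\testfn(x_n)\,\rmd x_n.
\end{equation*}
Substituting this into~\eqref{eq:variance_expression} with the MAPF choices $\eta_k = \Psi_{G_{k-1}}(\eta_{k-1})M_k$ and $G_k$, and using the Bayesian identity $p(x_k|y_{1:n})\propto p(y_{k+1:n}|x_k)p(x_k|y_{1:k})$ to collapse the products of normalising constants, puts $\bar{\V}_n^{\textrm{MAPF}}(\testfn)$ in a form whose summands contain $p(x_k|y_{1:n})^2$ in the numerator and the mixture proposal $\int q_k(x_k|x_{k-1}',y_k)\hat{\eta}_{k-1}(\rmd x_{k-1}')$ in the denominator. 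To get the variance of the $\pi_n(\testfn)$-estimator rather than the $\hat{\eta}_n(\testfn)$-estimator, I apply the delta method to the CLT of \cref{prop:clt} for the self-normalised ratio $\hat{\eta}_n^N(\tilde{w}_n\testfn)/\hat{\eta}_n^N(\tilde{w}_n)$, which gives an asymptotic variance of the form $\bar{\V}_n(\tilde{w}_n(\testfn-\bar{\testfn}_n))/\hat{\eta}_n(\tilde{w}_n)^2$; the remaining $\tilde{p}(y_{n+1}|\cdot)$ factors are absorbed by $\tilde{w}_n$, leaving precisely the expression stated in the corollary.

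For the inequality $V_n^{\textrm{MAPF}}(\testfn)\leq V_n^{\textrm{APF}}(\testfn)$, I would compare the two variances term by term. The MAPF summands feature the mixture $\int q_k(x_k|x_{k-1}',y_k)\hat{\eta}_{k-1}(\rmd x_{k-1}')$ in the denominator, whereas the corresponding APF expression in \citet[Section 2.4]{johansen2008note} has $\hat{\eta}_{k-1}$ integrated \emph{outside} the reciprocal, i.e. it contains $\int \hat{\eta}_{k-1}(\rmd x_{k-1}')/q_k(x_k|x_{k-1}',y_k)$. The variance reduction follows from Jensen's inequality for the convex map $z\mapsto a^2/z$ applied pointwise in $x_k$ (equivalently, from \citet[Theorem 3.1]{walker2014jensen}), in direct parallel with the MPF versus bootstrap comparison underlying \cref{cor:mpf}; equality occurs only when $q_k$ does not depend on $x_{k-1}'$. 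The main obstacle is the bookkeeping in the first two steps: one must verify that the $\tilde{p}(y_{n+1}|\cdot)$ factor surviving the semigroup telescoping is precisely the one cancelled by the inferential IS weights, and that the product $\prod_{j=k}^{n}\eta_j(G_j)^{-2}$ reassembles into the $p(x_k|y_{1:n})^2$ numerators without leftover normalising constants. Once both variances are in comparable form, the Jensen step is routine.
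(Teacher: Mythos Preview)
Your derivation of the variance formula follows the same route as the paper: specialise \cref{prop:variance_closed}, simplify $\Gamma_{k:n}$ via the telescoping cancellation of the $\tilde p(y_{j+1}|x_j)$ factors, and pass through the inferential importance-sampling step by the delta method. The only minor discrepancy is that the paper writes $V_n^{\textrm{MAPF}}(\testfn)=\bar{\V}_n(\tilde w_n[\testfn-\bar\testfn_n])$ without your $\hat\eta_n(\tilde w_n)^{-2}$ factor; this is harmless since with the normalisation $\tilde w_n=\rmd\pi_n/\rmd\hat\eta_n$ one has $\hat\eta_n(\tilde w_n)=1$.

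The variance comparison, however, has a gap. Your description of the $k$-th APF summand is not correct: from \citet[Section 2.4]{johansen2008note} it reads
\[
\sigma_{n,k}^{\textrm{APF}}(\testfn)=\int \frac{p(x_{0:k}\mid y_{1:n})^2}{q_k(x_k\mid x_{k-1},y_k)\,\tilde p(x_{0:k-1}\mid y_{1:k})}\Bigl(\int p(x_n\mid x_k,y_{k+1:n})[\testfn(x_n)-\bar\testfn_n]\,\rmd x_n\Bigr)^2\rmd x_{0:k},
\]
and the numerator $p(x_{0:k}\mid y_{1:n})^2$ depends on $x_{0:k-1}$ as well, through $f_k(x_k\mid x_{k-1})^2\,p(x_{0:k-1}\mid y_{1:k-1})^2$. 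Hence the term is \emph{not} of the form $a(x_k)^2\int\hat\eta_{k-1}(\rmd x_{k-1}')/q_k(x_k\mid x_{k-1}',y_k)$, and Jensen for $z\mapsto a^2/z$ does not apply. The paper instead isolates the path weight $G_k^{\textrm{APF}}(x_{k-1},x_k)=g_k(y_k\mid x_k)f_k(x_k\mid x_{k-1})/\bigl(\tilde p(y_k\mid x_{k-1})q_k(x_k\mid x_{k-1},y_k)\bigr)$ and applies Jensen for $t\mapsto t^2$ under the tilted measure $\nu_k(\rmd x_{k-1},x_k)\propto \tilde p(y_k\mid x_{k-1})q_k(x_k\mid x_{k-1},y_k)p(x_{k-1}\mid y_{1:k-1})\,\rmd x_{k-1}$, obtaining $\mathbb{E}_{\nu_k(\cdot,x_k)}\bigl[(G_k^{\textrm{APF}})^2\bigr]\geq \bigl(\mathbb{E}_{\nu_k(\cdot,x_k)}[G_k^{\textrm{APF}}]\bigr)^2=(G_k^{\textrm{MAPF}}(x_k))^2$. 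Your equality condition is also off: constancy of $G_k^{\textrm{APF}}(\cdot,x_k)$ in $x_{k-1}$ requires $q_k(x_k\mid x_{k-1},y_k)\propto C_k(x_k)f_k(x_k\mid x_{k-1})$ together with $\tilde p(y_k\mid x_{k-1})=\int C_k(x_k)f_k(x_k\mid x_{k-1})\,\rmd x_k$, not merely that $q_k$ be independent of $x_{k-1}$.
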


Using the results in \cite{walker2014jensen}, we can show that equality occurs only when $q_k\propto C_k f_k$ where $C_k$ is a positive function only depending on $x_k$ and $\tilde{p}(y_k|x_{k-1}) = \int C_k(x_k)f_k(x_k|x_{k-1})\rmd x_k$.
A special case is $C_k(x_k)=g_k(y_k|x_k)$ for which the MAPF collapses onto the fully adapted APF (FA-APF) and hence has the same asymptotic variance (see \citet[Corollary]{johansen2008note};~\cref{app:variance_fa-mapf}).

\section{Discussion}
\label{sec:discussion}

In this work we established that a class of marginal sequential Monte Carlo (MSMC) algorithms, which encompasses marginal particle filters and other popular algorithms in the literature, satisfies many of the key properties that standard sequential Monte Carlo methods have.
The results in \cref{sec:results} guarantee that the estimates provided by MSMC are consistent, asymptotically normal, with a bias decaying at rate $N^{-1}$ and $\lp$ error decaying at rate $N^{-1/2}$.

Comparing the asymptotic variances in \cref{prop:clt} with those for standard SMC obtained in, e.g., \cite{chopin2004central, smc:theory:Del04}, we find that marginal particle filters have no larger asymptotic variance than the corresponding non-marginal particle filter (\cref{cor:mpf,cor:mapf}), a phenomenon already observed empirically (e.g. \cite{smc:methodology:KFD05, xu2019particle}).
\cref{cor:mapf} complements \citet[Proposition 1]{smc:methodology:KFD05} showing that the importance sampling weights of AMPF have lower (conditional) variance.

Unbiasedness of the normalizing constant estimates has been shown in \citet[Theorem 2]{branchini2021optimized}.
Combining \cref{prop:conditional_expe} with the approach of \citet[Lemma 2]{branchini2021optimized} one can further show that the unnormalized flow provides unbiased estimates.

Quantifying the variance reduction obtained by employing marginal particle filters instead of standard particle filter is a more challenging question, the answer to which is likely to be dependent on the specific state space model and proposals $q_k$. The variance reduction obtained by using MPF instead of PF should then be weighted against the additional computation cost required by MPF w.r.t. the $O(N)$ cost of PF. In their most naive implementation MPF require an $O(N^2)$ cost, which can however be reduced to $O(N\log N)$ using techniques from $N$-body learning (e.g. \cite{gray2000n, lang2005empirical}) as shown in \cite{smc:methodology:KFD05} or efficient implementations using GPUs \citep{charlier2021kernel} as shown in \citet[Section 4]{clarte2019collective} for sums of the form of those in~\eqref{eq:WN}.

Finally, 
we expect that similar ideas could be applied to more sophisticated SMC algorithms employing marginalization techniques (e.g. \cite{xu2019particle, crucinio2022divide}).
\section*{Acknowledgements}
FRC and AMJ acknowledge support from the EPSRC (grant \#  EP/R034710/1). AMJ acknowledges further support from the  EPSRC (grant \# EP/T004134/1) and the Lloyd's Register Foundation Programme on Data-Centric Engineering at the Alan Turing Institute. 
For the purpose of open access, the authors have applied a Creative Commons Attribution (CC BY) licence to any Author Accepted Manuscript version arising from this submission. No new data was created or analysed in this study. Data sharing is not applicable to this article.

\bibliographystyle{plainnat}
\bibliography{mpf_biblio}

\newpage
\appendix

\section{Notation}
Before tackling the proofs of the results stated in the main text, 
we summarize here the definitions of all the quantities involved in the proofs in the Appendices.

At time $n$, we denote the target distribution in~\eqref{eq:update} (also known as updated distribution) as $\update$, convolving the proposal kernel $M_n$ with $\hat{\eta}_{n-1}$ we obtain the mutated distribution (also known as predictive distribution) $\predictive$ in~\eqref{eq:predictive}.
We observe that the target distribution can be obtained from the mutated one as
\begin{align*}
    \update(\testfn) = \frac{\predictive(\weight \testfn)}{\predictive(\weight)} = \bg(\predictive).
\end{align*}

The particle approximations to the quantities above obtained via MSMC are  as follows: the particle approximation to $\predictive$ is denoted by $\predictiveN = N^{-1} \sum_{i=1}^N \delta_{X_n^i}$, $\bgN(\predictiveN) = \sum_{i=1}^N W_n^i \delta_{X_n^i}$ and $\updateN = N^{-1} \sum_{i=1}^N \delta_{\widetilde{X}_n^i}$ both approximate $\bg(\predictive)=\update$.

We observe that the exact weights can be written equivalently as
\begin{align*}
    \weight(x_n) &= \frac{\textrm{d} \hat{\eta}_{n-1}(U_n \cdot K_n)}{\textrm{d} \hat{\eta}_{n-1} M_n}(x_n)\\
    &=\frac{\rmd \left( \int U_n(x_{n-1}, \cdot)K_n(x_{n-1}, \cdot)\hat{\eta}_{n-1}(\rmd x_{n-1})\right)}{\rmd \left(\int M_n(x_{n-1}, \cdot)\hat{\eta}_{n-1}(\rmd x_{n-1})\right)}(x_n)\\
    &= \frac{\rmd \left(\int U_{n}(x_{n-1},\cdot) K_n(x_{n-1},\cdot)\Psi_{G_{n-1}}(\eta_{n-1})(\rmd x_{n-1})\right)}{\rmd \left(\int M_n(x_{n-1},\cdot)\Psi_{G_{n-1}}(\eta_{n-1})(\rmd x_{n-1})\right)}(x_n)\\
    &= \frac{\rmd \left(\int U_{n}(x_{n-1},\cdot) K_n(x_{n-1},\cdot)G_{n-1}(x_{n-1})\eta_{n-1}(\rmd x_{n-1})\right)}{\rmd \left(\int M_n(x_{n-1},\cdot)G_{n-1}(x_{n-1})\eta_{n-1}(\rmd x_{n-1})\right)}(x_n),
\end{align*}
and the approximate weights as
\begin{align}
\label{eq:WN_alternative}
    \weightN(x_n) &= \frac{\textrm{d} \Psi_{G_{n-1}^N}(\eta_{n-1}^N)(U_n \cdot K_n)}{\textrm{d} \Psi_{G_{n-1}^N}(\eta_{n-1}^N) M_n}(x_n)\\
    &=\frac{\rmd \left(\int U_{n}(x_{n-1},\cdot) K_n(x_{n-1},\cdot)\Psi_{G_{n-1}^N}(\eta_{n-1}^N)(\rmd x_{n-1})\right)}{\rmd \left(\int M_n(x_{n-1},\cdot)\Psi_{G_{n-1}^N}(\eta_{n-1}^N)(\rmd x_{n-1})\right)}(x_n)\notag\\
    &= \frac{\rmd \left(\int U_{n}(x_{n-1},\cdot) K_n(x_{n-1},\cdot)G_{n-1}^N(x_{n-1})\eta_{n-1}^N(\rmd x_{n-1})\right)}{\rmd \left(\int M_n(x_{n-1},\cdot)G_{n-1}^N(x_{n-1})\eta_{n-1}^N(\rmd x_{n-1})\right)}(x_n).\notag
\end{align}

We also define the following $\sigma$-fields of which we will make frequent use: $\mathcal{F}_0^N:= \sigma\left(X_0^{i}: i\in\lbrace 1,\ldots,N\rbrace\right)$ and $\mathcal{G}_0^N:= \sigma\left(\widetilde{X}_0^{i}: i\in\lbrace 1,\ldots,N\rbrace\right) \vee \mathcal{F}_0^N$. More generally, we recursively define the $\sigma$-field generated by the weighted samples up to an including mutation at time $n$, $\mathcal{F}_{n}^N:= \sigma\left(X_n^i: i\in\lbrace 1,\ldots,N\rbrace\right)\vee \sfmutation$ and the $\sigma$-field generated by the particle system up to (and including) time $n$ before the mutation step at time $n+1$, $\mathcal{G}_n^N:= \sigma\left(\widetilde{X}_n^{i}: i\in\lbrace 1,\ldots,N\rbrace\right)\vee \mathcal{F}_{n}^N$.

\section{Proof of \cref{prop:conditional_expe}}
\label{app:expe}

\begin{proof}[Proof of \cref{prop:conditional_expe}]
Let $\mathcal{F}_{n-1}^N$ denote the $\sigma$-field generated by the weighted samples up to (and including) time $n-1$ and $\sfmutation$ denote the $\sigma$-field generated by the particle system up to (and including) time $n$ before the mutation step at time $n$, so that $ \mathcal{F}_{n-1}^N \subset \mathcal{G}_{n-1}^N$.
Consider the conditional expectation $\Exp\left[\predictiveN(\weightN\testfn)\mid \mathcal{F}_{n-1}^N\right]$, applying the tower property we obtain
 \begin{align}
 \label{eq:conditional_expe1}
\Exp\left[\predictiveN(\weightN\testfn)\mid \mathcal{F}_{n-1}^N\right] &=\frac{1}{N}\sum_{i=1}^N\Exp\left[\weightN(X_n^i)\testfn(X_n^i)|\mathcal{F}_{n-1}^N\right]\\
&=\frac{1}{N}\sum_{i=1}^N\Exp\left[\Exp\left[\weightN(X_n^i)\testfn(X_n^i)|\sfmutation\right]|\mathcal{F}_{n-1}^N\right]\notag\\
&=\frac{1}{N}\sum_{i=1}^N\Exp\left[\int M_n(\widetilde{X}_{n-1}^i, \rmd x_n)\weightN(x_n)\testfn(x_n)|\mathcal{F}_{n-1}^N\right]\notag\\
&=\sum_{j=1}^N\frac{G_{n-1}^N(X_{n-1}^j)}{\sum_{k=1}^NG_{n-1}^N(X_{n-1}^k)}\int M_n(X_{n-1}^j, \rmd x_n)\weightN(x_n)\testfn(x_n),\notag
\end{align}
where the third equality follows from the fact that  $X_n^i|\mathcal{G}_{n-1}^N \sim M_n(\widetilde{X}_{n-1}^i, \cdot)$ for each $i=1,\dots, N$ and the fourth from the fact that  $\{X_{n-1}^j\}_{j=1}^N$ and $G_{n-1}^N$ are $\mathcal{F}_{n-1}^N$-measurable and conditionally each $\widetilde{X}_{n-1}^i$ is drawn independently from the categorical distribution with probabilities given by the weights.
Plugging the definition of $\weightN$ in~\eqref{eq:WN} into the above we obtain
\begin{align*}
&\Exp\left[\predictiveN(\weightN\testfn)\mid \mathcal{F}_{n-1}^N\right] \\
&=\int \int M_n(x_{n-1}, \rmd x_n)\frac{\textrm{d} \Psi_{G_{n-1}^N}(\eta_{n-1}^N)(U_n \cdot K_n)}{\textrm{d} \Psi_{G_{n-1}^N}(\eta_{n-1}^N) M_n}(x_n)\testfn(x_n)\Psi_{G_{n-1}^N}(\eta_{n-1}^N)(\rmd x_{n-1})\\
&=\int U_{n}(x_{n-1},x_n) K_n(x_{n-1},\rmd x_n)\testfn(x_n)\Psi_{G_{n-1}^N}(\eta_{n-1}^N)(\rmd x_{n-1})\\
&=\Psi_{G_{n-1}^N}(\eta_{n-1}^N)(K_n(\testfn U_n)),
\end{align*}
as required.
\end{proof}
In addition, we have that, for all $n\geq 1$ and all $\testfn\in\bounded$, 

\begin{align}
\label{eq:conditional_expe2}
\predictive(\weight \testfn) & = \int\testfn(x_n)G_n(x_n)\predictive(\rmd x_n)\\
&= \int\int M_n(x_{n-1}, \rmd x_n) \testfn(x_n)\frac{\textrm{d} \hat{\eta}_{n-1}(U_n \cdot K_n)}{\textrm{d} \hat{\eta}_{n-1} M_n}(x_n)\hat{\eta}_{n-1}(\rmd x_{n-1}) \notag\\
&= \int \int\testfn(x_n) U_n(x_{n-1}, x_n) K_n(x_{n-1}, \rmd x_n) \hat{\eta}_{n-1}(\rmd x_{n-1})\notag\\
&=\hat{\eta}_{n-1}(K_n(\testfn U_n))\notag\\
&=\Psi_{G_{n-1}}(\eta_{n-1})(K_n(\testfn U_n)).\notag
\end{align}

\section{Proof of the $\lp$-inequalities in \cref{prop:lp}}
\label{app:lp}

As a preliminary we reproduce part of \citet[Lemma 7.3.3]{smc:theory:Del04}, a Marcinkiewicz-Zygmund-type inequality of which we will make extensive use.

\begin{lemma}[Del Moral, 2004] \label{lemma:delmoral}
Given a sequence of probability measures $(\mu_i)_{i \geq 1}$ on a given measurable space $(E,\mathcal{E})$ and a collection of independent random variables, one distributed according to each of those measures, $(X_i)_{i \geq 1}$, where $\forall i, X_i \sim \mu_i$, together with any sequence of measurable functions $(h_i)_{i \geq 1}$ such that $\mu_i(h_i) = 0$ for all $i \geq 1$, we define for any $N \in \mathbb{N}$,
$$m_N(X)(h) = \frac{1}{N} \sum_{i=1}^N h_i( X_i ) \ \textrm{ and } \ \sigma_N^2(h) = \frac{1}{N} \sum_{i=1}^N \left(\sup(h_i) - \inf(h_i) \right)^2.$$
If the $h_i$ have finite oscillations (i.e., $\sup(h_i)-\inf(h_i)<\infty \  \forall i \geq 1$) then we have:
$$\sqrt{N} \Exp\left[ \left\vert m_N(X)(h) \right\vert^p \right]^{1/p} \leq b(p)^{1/p} \sigma_N(h),$$
with, for any pair of integers $q,p$ such that $q \geq p \geq 1$, denoting $(q)_p=q!/(q-p)!$:
\begin{align}
    \label{eq:b(p)}
    b(2q) = (2q)_q 2^{-q} \ \textrm{ and } \  b(2q - 1) = \frac{(2q-1)_q}{\sqrt{q-\frac12}} 2^{-(q-\frac12)}. 
\end{align}
\end{lemma}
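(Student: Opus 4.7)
The statement is a classical Marcinkiewicz--Zygmund-type moment inequality for independent centred variables of bounded oscillation. I plan to prove it by combining symmetrisation with a sharp Rademacher-chaos pairing bound for the even case, and then obtaining the odd case by Cauchy--Schwarz interpolation between two consecutive even exponents. Set $Y_i := h_i(X_i)$; under the hypotheses the $Y_i$ are independent, centred, and lie in $[\inf h_i, \sup h_i]$ almost surely, an interval of length $c_i := \sup h_i - \inf h_i$.

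For even $p = 2q$, introduce an independent copy $(Y_i')$ of $(Y_i)$ together with i.i.d.\ Rademacher signs $(\epsilon_i)$ independent of $(Y, Y')$. Jensen applied conditionally on $(Y_i)$, using convexity of $x \mapsto x^{2q}$ and $\Exp Y_i' = 0$, yields $\Exp[|\sum_i Y_i|^{2q}] \le \Exp[|\sum_i(Y_i - Y_i')|^{2q}]$; by symmetry of $Y_i - Y_i'$ the right-hand side equals $\Exp[|\sum_i \epsilon_i (Y_i - Y_i')|^{2q}]$. Conditional on $(Y, Y')$, set $a_i := Y_i - Y_i'$ and expand multinomially: since $\Exp[\epsilon_{i_1} \cdots \epsilon_{i_{2q}}] = 0$ whenever some index appears with odd multiplicity, only perfect pairings of the $2q$ factors contribute, and enumerating them gives
\[
\Exp_\epsilon\Big[\big(\textstyle\sum_i \epsilon_i a_i\big)^{2q}\Big] \le (2q-1)!!\,\Big(\textstyle\sum_i a_i^2\Big)^q,
\]
with $(2q-1)!! = (2q)!/(q!\,2^q) = (2q)_q\,2^{-q} = b(2q)$. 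Since $Y_i, Y_i' \in [\inf h_i, \sup h_i]$ almost surely, $a_i^2 \le c_i^2$ deterministically and $\sum_i a_i^2 \le N\sigma_N^2(h)$; substituting, taking $(2q)$-th roots, and rescaling by $N^{-1}$ delivers the claim with constant $b(2q)^{1/(2q)}$.

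For odd $p = 2q - 1$, I apply Cauchy--Schwarz conditionally on $(Y, Y')$ to the symmetrised sum $S := \sum_i \epsilon_i a_i$:
\[
\Exp_\epsilon\big[|S|^{2q-1}\big] \le \Exp_\epsilon\big[S^{2q-2}\big]^{1/2}\,\Exp_\epsilon\big[S^{2q}\big]^{1/2} \le \sqrt{b(2q-2)\,b(2q)}\,\Big(\textstyle\sum_i a_i^2\Big)^{(2q-1)/2},
\]
using the even-case bound twice. A short algebraic verification shows $b(2q-1)^2 = b(2q-2)\,b(2q)$; the $\sqrt{q-1/2}$ denominator in the definition of $b(2q-1)$ is precisely what makes this identity hold, so the odd case inherits the sharp constant from the even case. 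Proceeding as before with $\sum_i a_i^2 \le N\sigma_N^2(h)$ and rescaling gives the claim. The substantive work is therefore the pairing combinatorics underlying the even case; once that is in place, Cauchy--Schwarz transfers the bound cleanly to odd exponents, and I do not expect any essential obstacle beyond checking the multiplicative identity among the $b(p)$'s.
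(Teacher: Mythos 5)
You cannot be compared unfavourably with the paper's own proof here, because the paper gives none: this lemma is reproduced verbatim as a preliminary from \citet[Lemma 7.3.3]{smc:theory:Del04} and used as a black box, so your blind derivation is a self-contained reconstruction of an imported result rather than an alternative to an internal argument. That said, your route is correct and recovers the stated constants exactly. Symmetrization with an independent copy is valid for every $p\geq 1$ (Jensen conditionally on $(Y_i)$, using $\Exp[Y_i']=0$, applied to the convex map $x\mapsto |x|^p$), the passage to Rademacher signs uses only the symmetry of $Y_i-Y_i'$, and the deterministic bound $a_i^2=(Y_i-Y_i')^2\leq c_i^2$ is precisely what makes the oscillation $\sup h_i-\inf h_i$, rather than $2\supnorm{h_i}$, appear in $\sigma_N^2(h)$. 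The multiplicative identity you invoke does hold exactly: writing $b(2q-2)b(2q)=\frac{(2q-2)!\,(2q)!}{(q-1)!\,q!}2^{-(2q-1)}$ and $b(2q-1)^2=\frac{((2q-1)!)^2}{((q-1)!)^2(q-\frac12)}2^{-(2q-1)}$, the substitutions $(2q)!=2q\,(2q-1)!$ and $(2q-2)!=(2q-1)!/(2q-1)$ give a ratio of $\frac{q(2q-1)}{2q(q-\frac12)}=1$, so $b(2q-1)$ is exactly the geometric mean of its even neighbours and Cauchy--Schwarz loses nothing.

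Two small repairs are needed, neither structural. First, in the even case it is not true that only perfect pairings of the $2q$ factors contribute: every multi-index in which each sign appears with even multiplicity survives, including multiplicities $\geq 4$. The inequality you assert is nonetheless correct; the one-line fix is either the coefficient bound $\binom{2q}{2k_1,\ldots,2k_N}\leq (2q-1)!!\,\binom{q}{k_1,\ldots,k_N}$ for $\sum_i k_i = q$, which after cancellation reduces to $2^q\leq\prod_i (2k_i)!/k_i!=\prod_i 2^{k_i}(2k_i-1)!!$, or, more cleanly, termwise domination by i.i.d.\ standard Gaussians $g_i$: since $\Exp[\epsilon^{2k}]=1\leq (2k-1)!!=\Exp[g^{2k}]$, one gets $\Exp_\epsilon\left[\left(\sum_i \epsilon_i a_i\right)^{2q}\right]\leq \Exp\left[\left(\sum_i g_i a_i\right)^{2q}\right]=(2q-1)!!\left(\sum_i a_i^2\right)^q$ because $\sum_i g_i a_i$ is centred Gaussian with variance $\sum_i a_i^2$, and $(2q-1)!!=(2q)_q 2^{-q}=b(2q)$ as you note. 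Second, the boundary case $p=1$ (i.e.\ $q=1$) of your interpolation requires $b(0)$, which sits outside the range stated in the lemma; it corresponds to the trivial zeroth moment $\Exp_\epsilon[S^0]=1$, consistent with $b(1)^2=b(0)b(2)=1$, so nothing breaks. With these two touches your proof is complete and supports exactly the (conditional) uses the paper makes of the lemma, e.g.\ in \cref{lp:lemma_weight,lp:lemma1,lp:lemma4}.
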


We are now ready to prove \cref{prop:lp}:
\begin{proof}[Proof of \cref{prop:lp}]
We proceed by induction, taking $n=0$ as the base case.
At time $n=0$, the particles $(X_0^{i})_{i=1}^N$ are sampled i.i.d. from $\eta_0\equiv M_0$, hence $\Exp\left[\testfn(X_0^{i})\right] = \eta_0(\testfn)$ for $i=1,\ldots,N$.
 We can define the sequence of functions $\Delta_0^i: E \mapsto \mathbb{R}$ for $i=1,\ldots, N$
\begin{equation*}
\Delta_0^i(x) :=  \testfn(x) -  \Exp\left[\testfn(X_{0}^{i})\right]
\end{equation*}
so that,
\begin{align*}
\eta_0^N(\testfn) - \eta_0(\testfn) = \frac{1}{N}\sum_{i=1}^N \Delta_0^i(X_0^{i}),
\end{align*}
and apply \cref{lemma:delmoral} to get for every $p\geq 1$
\begin{align}
\label{eq:lp0_iid}
\Exp\left[\vert \eta_0^N(\testfn) - \eta_0(\testfn)\vert^p\right]^{1/p} &\leq b(p) ^{1/p} \frac{1}{\sqrt{N}} \left(\sum_{i=1}^N \left(\sup( \Delta_{0}^i) - \inf( \Delta_{0}^i)\right)^2\right)^{1/2} \\
&\leq b(p) ^{1/p} \frac{1}{\sqrt{N}} \left(\sum_{i=1}^N 4\left(\sup\vert \Delta_{0}^i\vert\right)^2\right)^{1/2} \notag\\
& \leq b(p) ^{1/p} \frac{1}{\sqrt{N}} \left(\sum_{i=1}^N 16\supnorm{\testfn}^2\right)^{1/2} \notag \\
& \leq 4b(p) ^{1/p} \supnorm{\testfn}.\notag
\end{align}

To prove \ref{prop:lp-pred}, note that $\Psi_{G_0^N}(\eta_0^N)(\testfn)=\Psi_{G_0}(\eta_0^N)(\testfn)$ since $G^N_0 \equiv G_0 \equiv \rmd K_0/\rmd M_0$.
Using the triangle inequality we have
\begin{align*}
\vert \Psi_{G_0^N}(\eta_0^N)(\testfn) - \Psi_{G_0}(\eta_0)(\testfn)\vert&=\left\lvert \frac{\eta_0^N(G_0\testfn)}{\eta_0^N(G_0)}-\frac{\eta_0(G_0\testfn)}{\eta_0(G_0)}\right\rvert\\
&\leq  \frac{\vert \eta_0^N(G_0\testfn) - \eta_0(G_0\testfn)\vert}{\vert\eta_0^N(G_0)\vert}
+ \left\lvert\frac{\eta_0(G_0\testfn)}{\eta_0^N(G_0)\eta_0(G_0)}\right\rvert\vert \eta_0^N(G_0) - \eta_0(G_0)\vert.
\end{align*}
Using Minkowski's inequality and \cref{ass:smc2}, which guarantees $G_0\geq m_g^{-1}>0$, we then have
\begin{align*}
    \Exp\left[\vert \Psi_{G_0^N}(\eta_0^N)(\testfn) - \Psi_{G_0}(\eta_0)(\testfn)\vert^p\right]^{1/p} \leq& m_g\Exp\left[\vert \eta_0^N(G_0\testfn) - \eta_0(G_0\testfn)\vert^p\right]^{1/p}\\
    &+\supnorm{\testfn}m_g \Exp\left[\vert \eta_0^N(G_0) - \eta_0(G_0)\vert^p\right]^{1/p}\\
\leq& 8m_g^2b(p)^{1/p}\frac{\supnorm{\testfn}}{N^{1/2}},
\end{align*}
where we used~\eqref{eq:lp0_iid} and \cref{ass:weak} to guarantee that $G_0\leq m_g<\infty$ and obtain the last inequality. Hence, $\bar{C}_{0,p} = 8m_g^2b(p)^{1/p}$.
To obtain \ref{prop:lp-upd} from \ref{prop:lp-pred}, consider the decomposition
\begin{align*}
\hat{\eta}_0^N(\testfn) - \hat{\eta}_0(\testfn) &= \hat{\eta}_0^N(\testfn) - \Psi_{G_0^N}(\eta_0^N)(\testfn)  + \Psi_{G_0^N}(\eta_0^N)(\testfn)- \hat{\eta}_0(\testfn) \\
&= \hat{\eta}_0^N(\testfn) - \Exp\left[\hat{\eta}_0^N(\testfn)\mid \mathcal{F}_0^N\right]  + \Psi_{G_0^N}(\eta_0^N)(\testfn) -\hat{\eta}_0(\testfn),
\end{align*}
where $\mathcal{F}_0^N:= \sigma\left(X_0^{i}: i\in\lbrace 1,\ldots,N\rbrace\right)$ denotes the $\sigma$-field generated by the weighted samples at time $0$.
Using \ref{prop:lp-pred} we have that
\begin{align*}
\Exp\left[\vert\Psi_{G_0^N}(\eta_0^N)(\testfn) -\hat{\eta}_0(\testfn)\vert^p\right]^{1/p}\leq \bar{C}_{0,p}\frac{\supnorm{\testfn}}{N^{1/2}}
\end{align*}
since $\hat{\eta}_0(\testfn) = \Psi_{G_0}(\eta_0)(\testfn)$.
For the remaining term consider the sequence of functions $\Delta_0^i :  E \mapsto \real$ for $i=1,\ldots, N$ 
\begin{equation*}
\Delta_0^i(x) :=  \testfn(x) -  \Exp\left[\testfn(\widetilde{X}_{0}^{i}) \mid \mathcal{F}_0^N\right].
\end{equation*}
Conditionally on $\mathcal{F}_0^N$, $\Delta_0^i(\widetilde{X}_0^{i})$ $i=1,\ldots, N$ are independent and have expectation equal to 0, moreover
\begin{align*}
\hat{\eta}_0^N(\testfn) - \Exp\left[\hat{\eta}_0^N(\testfn)\mid \mathcal{F}_0^N\right]  &= \frac{1}{N} \sum_{i=1}^N \left( \testfn(\widetilde{X}_{0}^{i}) - \Exp\left[\testfn(\widetilde{X}_{0}^{i}) \mid \mathcal{F}_0^N\right]\right) = \frac{1}{N} \sum_{i=1}^N \Delta_0^i(\widetilde{X}_{0}^{i}).
\end{align*}
Using again the \cref{lemma:delmoral} (see also \citet[Theorem 3.3]{Yuan2015} for an explicit statement of a conditional form of a result of this type) we have, for all $p\geq1$,
\begin{align}
\label{eq:lp0}
\Exp\left[\vert \hat{\eta}_0^N(\testfn) - \Psi_{G_0^N}(\eta_0^N)(\testfn) \vert^p\right]^{1/p}\leq 4b(p)^{1/p}\frac{\supnorm{\testfn}}{N^{1/2}},
\end{align}
from which the result at $n=0$ follows and the base case is established.

Then, assume that the result holds at time $n-1$ for some $n$: we will show it also holds at time $n$.
The error introduced by the mutation step is controlled by \cref{lp:lemma1}:
\begin{equation}
\label{eq:lemma1}
\Exp\left[\vert\predictiveN(\testfn) - \predictive(\testfn)\vert^p\right]^{1/p} \leq \widetilde{C}_{p,n} \frac{\supnorm{\testfn}}{ N^{1/2}}.
 \end{equation}
Using \cref{lp:lemma_weight,lp:lemma3} we can control the error introduced by the reweighting
 \begin{equation*}
 \Exp\left[\vert\bgN(\predictiveN)(\testfn) - \bg(\predictive)(\testfn)\vert^p\right]^{1/p} \leq \bar{C}_{p,n} \frac{\supnorm{ \testfn}}{N^{1/2}}.
\end{equation*}
Finally, \cref{lp:lemma4} controls the error introduced by the resampling step
\begin{equation*}
\Exp\left[\vert\updateN(\testfn) - \update(\testfn)\vert^p\right]^{1/p} \leq C_{p,n} \frac{\supnorm{\testfn}}{N^{1/2}}.
\end{equation*}
The result follows for all $n\in \mathbb{N}$ by induction.
 \end{proof}
\subsection{Auxiliary results for the proof of \cref{prop:lp}}

We collect here four auxiliary results for the proof of the $\lp$-inequality in \cref{prop:lp}.
\cref{lp:lemma1,lp:lemma3,lp:lemma4} are well-known results for standard SMC methods and we report them for completeness while \cref{lp:lemma_weight} controls the additional error introduced by the use of the approximate weights.

First, we show that the mutation step preserves the error bounds; essentially this result can be found in \citet[Lemma 3]{smc:theory:CD02}.
\begin{lemma}[Mutation]
\label{lp:lemma1}
Under the conditions of \cref{prop:lp}, assume that for any $\testfn\in \bounded$, for some $p\geq 1$ and some finite constant $C_{p,n-1} $
\begin{equation*}
\Exp\left[\vert\hat{\eta}^N_{n-1}(\testfn) - \hat{\eta}_{n-1}(\testfn)\vert^p\right]^{1/p} \leq C_{p,n-1} \frac{\supnorm{\testfn}}{N^{1/2}}
\end{equation*}
then, after the mutation step
\begin{equation*}
\Exp\left[\vert\predictiveN(\testfn) - \predictive(\testfn)\vert^p\right]^{1/p} \leq \widetilde{C}_{p,n} \frac{\supnorm{\testfn}}{ N^{1/2}}
\end{equation*}
for any $\testfn\in \bounded$ and for some finite constant $\widetilde{C}_{p,n} $.
\end{lemma}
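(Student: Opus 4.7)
The plan is to decompose the mutation error into a conditionally mean-zero sampling term plus a term to which the inductive hypothesis applies directly. Write
\begin{align*}
\predictiveN(\testfn) - \predictive(\testfn)
= \bigl[\predictiveN(\testfn) - \update_{n-1}^N(M_n\testfn)\bigr]
+ \bigl[\update_{n-1}^N(M_n\testfn) - \update_{n-1}(M_n\testfn)\bigr],
\end{align*}
using that $\predictive(\testfn) = \hat{\eta}_{n-1}(M_n\testfn)$ by~\eqref{eq:predictive}, where I denote by $\update_{n-1}^N = N^{-1}\sum_i \delta_{\widetilde{X}_{n-1}^i}$ the (equally-weighted) empirical measure produced by the resampling step at time $n-1$. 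I would then bound each bracket separately in $\lp$ and combine via Minkowski.

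For the second bracket, $M_n\testfn$ is a bounded measurable function with $\supnorm{M_n\testfn} \leq \supnorm{\testfn}$, so the inductive hypothesis applied to $M_n\testfn$ immediately yields
\begin{equation*}
\Exp\bigl[\vert\update_{n-1}^N(M_n\testfn) - \update_{n-1}(M_n\testfn)\vert^p\bigr]^{1/p} \leq C_{p,n-1}\frac{\supnorm{\testfn}}{\sqrt{N}}.
\end{equation*}

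For the first bracket, recall that conditionally on $\mathcal{G}_{n-1}^N$ the $X_n^i$ are independent with $X_n^i\sim M_n(\widetilde{X}_{n-1}^i,\cdot)$. Setting $\Delta_n^i(x) := \testfn(x) - M_n(\widetilde{X}_{n-1}^i,\testfn)$, one has $\Exp[\Delta_n^i(X_n^i)\mid\mathcal{G}_{n-1}^N]=0$ and $\sup\Delta_n^i - \inf\Delta_n^i \leq 2\supnorm{\testfn}$, and
\begin{equation*}
\predictiveN(\testfn) - \update_{n-1}^N(M_n\testfn) = \frac{1}{N}\sum_{i=1}^N \Delta_n^i(X_n^i).
\end{equation*}
The conditional form of \cref{lemma:delmoral} (as cited from Yuan, 2015, in the $n=0$ step of \cref{prop:lp}) gives
\begin{equation*}
\Exp\bigl[\vert\predictiveN(\testfn) - \update_{n-1}^N(M_n\testfn)\vert^p \mid \mathcal{G}_{n-1}^N\bigr]^{1/p} \leq 4 b(p)^{1/p}\frac{\supnorm{\testfn}}{\sqrt{N}},
\end{equation*}
and taking the $p$th root after the tower property removes the conditioning while preserving the bound. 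Combining via Minkowski yields the claim with $\widetilde{C}_{p,n} := C_{p,n-1} + 4 b(p)^{1/p}$.

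This argument is routine and there is no real obstacle: the mutation step of MSMC is formally identical to that of standard SMC (it does not involve the approximate weights $\weightN$), so the classical Chopin-style decomposition transfers verbatim. The only minor care needed is to verify that $M_n\testfn$ inherits boundedness and that the conditional Marcinkiewicz-Zygmund inequality is applied with the correct filtration $\mathcal{G}_{n-1}^N$ rather than $\mathcal{F}_{n-1}^N$.
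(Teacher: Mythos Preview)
Your proposal is correct and follows essentially the same approach as the paper: the same two-term decomposition via $\hat{\eta}_{n-1}^N M_n(\testfn)$, the same conditional Marcinkiewicz--Zygmund bound on $\mathcal{G}_{n-1}^N$ for the sampling term, and the inductive hypothesis applied to $M_n\testfn$ for the second term, yielding the identical constant $\widetilde{C}_{p,n}=4b(p)^{1/p}+C_{p,n-1}$.
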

\begin{proof}
The proof follows that of \citet[Lemma 3]{smc:theory:CD02}. Divide into two terms and apply Minkowski's inequality
\begin{align*}
\Exp\left[\vert\predictiveN(\testfn) - \predictive(\testfn)\vert^p\right]^{1/p} & = \Exp\left[\vert\predictiveN(\testfn) - \hat{\eta}_{n-1} M_{n}(\testfn)\vert^p\right]^{1/p}\\
& \leq \Exp\left[\vert\predictiveN(\testfn) - \hat{\eta}_{n-1}^N M_{n}(\testfn)\vert^p\right]^{1/p}\\
& + \Exp\left[\vert\hat{\eta}_{n-1}^N M_{n}(\testfn)- \hat{\eta}_{n-1} M_{n}(\testfn) \vert^p\right]^{1/p}.
\end{align*}
Let $\sfmutation$ denote the $\sigma$-field generated by the particle system up to (and including) time $n-1$ before the mutation step at time $n$, $\sfmutation = \sigma\left(\widetilde{X}_p^i: i\in\lbrace 1,\ldots,N\rbrace\right)\vee \mathcal{F}_{n-1}^N$ and consider the sequence of functions $\Delta_{n}^i : E \mapsto \real$ for $i=1,\ldots, N$ 
\begin{equation*}
\Delta_{n}^i(x) := \testfn(x) -  \Exp\left[\testfn(X_n^i)\mid \sfmutation \right] = \testfn(x) - M_n\testfn(\widetilde{X}_{n-1}^{i}).
\end{equation*}
Conditionally on $\sfmutation$, $\Delta_{n}^i(X^{i}_{n}),\ i=1,\ldots, N$ are independent and have expectation equal to 0, moreover
\begin{align*}
\predictiveN(\testfn) - \hat{\eta}_{n-1}^N M_n(\testfn) &= \frac{1}{N}\sum_{i=1}^N \left[ \testfn(X_{n}^{i}) -  M_n\testfn(\widetilde{X}_{n-1}^{i})\right] =\frac{1}{N} \sum_{i=1}^N \Delta_{n}^i(X_{n}^{i}).
\end{align*}
Conditioning on $\sfmutation$ and applying \cref{lemma:delmoral} we have, for all $p\geq 1$,
\begin{align}
\label{eq:lp_mutation}
\sqrt{N}\Exp\left[\vert\predictiveN(\testfn) - \hat{\eta}_{n-1}^N M_n(\testfn)\vert^p\mid \sfmutation\right]^{1/p} &\leq b(p) ^{1/p} \frac{1}{\sqrt{N}} \left(\sum_{i=1}^N \left(\sup( \Delta_{n}^i) - \inf( \Delta_{n}^i)\right)^2\right)^{1/2} \\
&\leq b(p) ^{1/p} \frac{1}{\sqrt{N}} \left(\sum_{i=1}^N 4\left(\sup\vert \Delta_{n}^i\vert\right)^2\right)^{1/2} \notag\\
& \leq b(p) ^{1/p} \frac{1}{\sqrt{N}} \left(\sum_{i=1}^N 16\supnorm{\testfn}^2\right)^{1/2} \notag \\
& \leq 4b(p) ^{1/p} \supnorm{\testfn}\notag
\end{align}
with $b(p)$ as in~\eqref{eq:b(p)}.
Combining this result with the hypothesis yields
\begin{align*}
\Exp\left[\vert\predictiveN(\testfn) - \predictive(\testfn)\vert^p\right]^{1/p} \leq (4b(p)^{1/p} + C_{p,n-1} )\frac{\supnorm{\testfn}}{ N^{1/2}},
\end{align*}
and the result holds with $\widetilde{C}_{p,n}  = 4b(p)^{1/p} + C_{p,n-1} $.
\end{proof}

We then show that the use of the approximate weights $\weightN$ does not worsen the rate at which the error decays:
\begin{lemma}[Weight Comparison]
\label{lp:lemma_weight}
Under the conditions of \cref{prop:lp}, assume that for $p\geq 1$ and some finite constants and $\bar{C}_{p,n-1} $
\begin{equation*}
\Exp\left[\Psi_{G_{n-1}^N }(\eta_{n-1}^N)( \testfn)-  \Psi_{G_{n-1}}(\eta_{n-1})(\testfn)\vert^p\right]^{1/p}\leq \bar{C}_{p, n-1} \frac{\supnorm{\testfn}}{ N^{1/2}},
\end{equation*}
then
\begin{equation*}
\Exp\left[\predictiveN( \weightN \testfn)-  \predictive(\weight\testfn)\vert^p\right]^{1/p}\leq D_{p, n} \frac{1}{ N^{1/2}},
\end{equation*}
for any $\testfn\in \bounded$ and for some finite constant $D_{p,n} $.
\end{lemma}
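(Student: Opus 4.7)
The plan is to insert the intermediate quantity $\Psi_{G_{n-1}^N}(\eta_{n-1}^N)(K_n(U_n\testfn))$, which by \cref{prop:conditional_expe} equals $\Exp[\predictiveN(\weightN\testfn)\mid \mathcal{F}_{n-1}^N]$ and which by~\eqref{eq:conditional_expe2} would coincide with $\predictive(\weight\testfn)$ if $\Psi_{G_{n-1}^N}(\eta_{n-1}^N)$ were replaced by $\Psi_{G_{n-1}}(\eta_{n-1})$. I would therefore write
$$\predictiveN(\weightN\testfn) - \predictive(\weight\testfn) = A_N + B_N,$$
where $A_N := \predictiveN(\weightN\testfn) - \Psi_{G_{n-1}^N}(\eta_{n-1}^N)(K_n(U_n\testfn))$ is a centred sampling error at step $n$ and $B_N := \Psi_{G_{n-1}^N}(\eta_{n-1}^N)(K_n(U_n\testfn)) - \Psi_{G_{n-1}}(\eta_{n-1})(K_n(U_n\testfn))$ is the error inherited from the previous step, and control both pieces via Minkowski's inequality.

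The term $B_N$ is handled directly by the hypothesis applied to $K_n(U_n\testfn)$: since $K_n$ is a Markov kernel and $U_n$ is bounded by \cref{ass:weak}, one has $\supnorm{K_n(U_n\testfn)}\leq \supnorm{U_n}\supnorm{\testfn}$, so that $\Exp[|B_N|^p]^{1/p} \leq \bar{C}_{p,n-1}\supnorm{U_n}\supnorm{\testfn}N^{-1/2}$.

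For $A_N$ I would condition on $\mathcal{F}_{n-1}^N$ and apply \cref{lemma:delmoral}. The key facts are that (i) under multinomial resampling the samples $(X_n^i)_{i=1}^N$ are conditionally i.i.d.\ given $\mathcal{F}_{n-1}^N$ with common law $\Psi_{G_{n-1}^N}(\eta_{n-1}^N)M_n$ (since the $\widetilde{X}_{n-1}^i$ are i.i.d.\ categorical draws given $\mathcal{F}_{n-1}^N$ and $X_n^i \sim M_n(\widetilde{X}_{n-1}^i,\cdot)$), and (ii) viewed as a function of $x_n$, the approximate weight $\weightN$ is $\mathcal{F}_{n-1}^N$-measurable and, by \cref{ass:smc2}, uniformly bounded by $m_g$. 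Consequently, conditionally on $\mathcal{F}_{n-1}^N$, the summands $\weightN(X_n^i)\testfn(X_n^i) - \Psi_{G_{n-1}^N}(\eta_{n-1}^N)(K_n(U_n\testfn))$ are i.i.d.\ with zero mean and oscillation bounded by $2m_g\supnorm{\testfn}$, so \cref{lemma:delmoral} applied conditionally yields $\Exp[|A_N|^p\mid\mathcal{F}_{n-1}^N]^{1/p} \leq 4b(p)^{1/p}m_g\supnorm{\testfn}N^{-1/2}$. Taking the $p$-th power and then the unconditional expectation preserves the rate, and combining the two bounds produces the claim with $D_{p,n} = \supnorm{\testfn}\bigl(4b(p)^{1/p}m_g + \bar{C}_{p,n-1}\supnorm{U_n}\bigr)$.

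The main obstacle, and what distinguishes this step from the classical SMC analysis, is the treatment of the random weight $\weightN$: recognising that it is $\mathcal{F}_{n-1}^N$-measurable as a function of $x_n$ and uniformly bounded under \cref{ass:smc2} is what allows the conditional Marcinkiewicz--Zygmund step to go through with the composite function $\weightN\testfn$ playing the role of test function, and is the only ingredient not already present in the standard proof of the analogous bound.
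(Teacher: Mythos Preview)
Your proposal is correct and follows essentially the same approach as the paper: the decomposition $A_N+B_N$ is exactly the paper's splitting via the conditional expectation $\Exp[\predictiveN(\weightN\testfn)\mid\mathcal{F}_{n-1}^N]=\Psi_{G_{n-1}^N}(\eta_{n-1}^N)(K_n(U_n\testfn))$, with $B_N$ handled by the inductive hypothesis applied to $K_n(U_n\testfn)$ and $A_N$ by a conditional application of \cref{lemma:delmoral} using the $\mathcal{F}_{n-1}^N$-measurability and uniform boundedness of $\weightN$. The only cosmetic remark is that the upper bound $\weightN\leq m_g$ follows already from \cref{ass:weak} rather than \cref{ass:smc2}, though both are in force under the conditions of \cref{prop:lp}.
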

\begin{proof}
Using the hypothesis, \cref{ass:weak}, \cref{prop:conditional_expe} and~\eqref{eq:conditional_expe2}, we have
\begin{align*}
\Exp\left[\vert \Exp\left[\predictiveN( \weightN \testfn) \mid \mathcal{F}_{n-1}^N\right]-  \predictive(\weight\testfn)\vert^p\right]^{1/p}&=\Exp\left[\vert \Psi_{G_{n-1}^N}(\eta_{n-1}^N)(K_n(\testfn U_n)) -  \Psi_{G_{n-1}}(\eta_{n-1})(K_n(\testfn U_n))\vert^p\right]^{1/p}\\
&\leq \bar{C}_{p, n-1}\supnorm{U_n} \frac{\supnorm{\testfn}}{ N^{1/2}}.
\end{align*}
It follows that in order to obtain the result, it suffices to bound
\begin{align*}
\Exp\left[\vert\predictiveN( \weightN \testfn) - \Exp\left[\predictiveN( \weightN \testfn) \mid \mathcal{F}_{n-1}^N\right]\vert^p\right]^{1/p}.
\end{align*}
Consider the sequence of functions $\Delta_n^i : E \mapsto \mathbb{R}$ for $i=1,\ldots, N$ 
\begin{equation*}
\Delta_n^i(x) :=  G_n^N(x)\testfn(x) -  \Exp\left[G_n^N(X_n^i)\testfn(X_{n}^i) \mid \mathcal{F}_{n-1}^N\right].
\end{equation*}
Conditionally on $\mathcal{F}_{n-1}^N$, $\Delta_n^i(X_n^i)$ $i=1,\ldots, N$ are independent and have expectation equal to 0, moreover
\begin{align*}
\predictiveN( \weightN \testfn) - \Exp\left[\predictiveN( \weightN \testfn) \mid \mathcal{F}_{n-1}^N\right] = \frac{1}{N}\sum_{i=1}^N \Delta_n^i(X_n^i).
\end{align*}
By \cref{lemma:delmoral}, we have almost surely
\begin{align*}
\sqrt{N}\Exp\left[\predictiveN( \weightN \testfn) - \Exp\left[\predictiveN( \weightN \testfn) \mid \mathcal{F}_{n-1}^N\right]\vert^p \mid \mathcal{F}_{n-1}^N\right]^{1/p} & \leq b(p) ^{1/p} \frac{1}{\sqrt{N}} \left(\sum_{i=1}^N \left(\sup( \Delta_{n}^i) - \inf( \Delta_{n}^i)\right)^2\right)^{1/2} \\
&\leq b(p) ^{1/p} \frac{1}{\sqrt{N}} \left(\sum_{i=1}^N 4\left(\sup\vert \Delta_{n}^i\vert\right)^2\right)^{1/2} \\
&  \leq b(p) ^{1/p} \frac{1}{\sqrt{N}} \left(\sum_{i=1}^N 16m_g^2\supnorm{\testfn}^2\right)^{1/2} \\
& \leq 4b(p) ^{1/p} m_g\supnorm{\testfn}
\end{align*}
where $b(p)$ is given in~\eqref{eq:b(p)}, and the result follows.
\end{proof}

Using \cref{lp:lemma_weight} above and following \citet[Lemma 4]{smc:theory:CD02} we obtain an error bound for the approximate reweighting.
\begin{lemma}[Approximate reweighting]
\label{lp:lemma3}
Under the conditions of \cref{prop:lp}, assume that for $p\geq 1$ and some finite constants and $D_{p,n} $
\begin{equation*}
\Exp\left[\predictiveN( \weightN \testfn)-  \predictive(\weight\testfn)\vert^p\right]^{1/p}\leq D_{p, n} \frac{\supnorm{\testfn}}{ N^{1/2}},
\end{equation*}
then
\begin{equation*}
\Exp\left[\vert\bgN(\predictiveN)(\testfn) - \bg(\predictive)(\testfn)\vert^p\right]^{1/p} \leq \bar{C}_{p,n} \frac{\supnorm{ \testfn}}{N^{1/2}}
\end{equation*}
for any $\testfn\in \bounded$ and for some finite constant $\bar{C}_{p,n} $.
\end{lemma}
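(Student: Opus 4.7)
The plan is to mimic the standard ratio-estimator argument as in \citet[Lemma 4]{smc:theory:CD02}, exploiting the uniform lower bound on the normalizing sums provided by \cref{ass:smc2}. The starting point is the algebraic identity
\begin{align*}
\bgN(\predictiveN)(\testfn) - \bg(\predictive)(\testfn)
&= \frac{\predictiveN(\weightN\testfn)}{\predictiveN(\weightN)} - \frac{\predictive(\weight\testfn)}{\predictive(\weight)} \\
&= \frac{1}{\predictiveN(\weightN)}\bigl[\predictiveN(\weightN\testfn) - \predictive(\weight\testfn)\bigr] + \frac{\bg(\predictive)(\testfn)}{\predictiveN(\weightN)}\bigl[\predictive(\weight) - \predictiveN(\weightN)\bigr],
\end{align*}
obtained by adding and subtracting $\predictive(\weight\testfn)/\predictiveN(\weightN)$ and using $\bg(\predictive)(\testfn) = \predictive(\weight\testfn)/\predictive(\weight)$.

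Next, I would record two essentially trivial bounds that come for free from the assumptions. Since \cref{ass:smc2} guarantees $G_n^N(x)\geq m_g^{-1}$ uniformly in $x$, one has almost surely $\predictiveN(\weightN) = N^{-1}\sum_{i=1}^N G_n^N(X_n^i) \geq m_g^{-1}$, so the denominators are uniformly bounded below. Likewise $|\bg(\predictive)(\testfn)| \leq \supnorm{\testfn}$ since $\bg(\predictive)$ is a probability measure.

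With these in hand, applying Minkowski's inequality to the decomposition above and invoking the hypothesis once with test function $\testfn$ and once with the constant test function $\unit$ (of sup-norm $1$) yields
\begin{align*}
\Exp\left[\vert\bgN(\predictiveN)(\testfn) - \bg(\predictive)(\testfn)\vert^p\right]^{1/p}
&\leq m_g\, \Exp\left[\vert\predictiveN(\weightN\testfn) - \predictive(\weight\testfn)\vert^p\right]^{1/p} \\
&\quad + m_g\supnorm{\testfn}\, \Exp\left[\vert\predictiveN(\weightN) - \predictive(\weight)\vert^p\right]^{1/p} \\
&\leq 2 m_g D_{p,n}\, \frac{\supnorm{\testfn}}{N^{1/2}},
\end{align*}
so the conclusion holds with $\bar{C}_{p,n} = 2 m_g D_{p,n}$.

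There is no serious obstacle here: the only subtle point is making sure the lower bound on the denominator is deterministic (so that it commutes with the $\Lp[p]$ norm), which is exactly what \cref{ass:smc2} provides. If one only had \cref{ass:weak}, a more delicate argument controlling the event $\{\predictiveN(\weightN) < \epsilon\}$ would be needed, but under the stated assumptions the result is immediate from the decomposition and Minkowski.
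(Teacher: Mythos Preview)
Your proof is correct and follows essentially the same route as the paper's: both use the ratio-estimator decomposition of \citet[Lemma 4]{smc:theory:CD02}, bound the resulting denominators via the uniform lower bound from \cref{ass:smc2}, and apply Minkowski together with the hypothesis (once with $\testfn$, once with $\unit$). The only cosmetic difference is that the paper adds and subtracts $\predictiveN(\weightN\testfn)/\predictive(\weight)$ (so the denominator to control is the deterministic $\predictive(\weight)$), whereas you add and subtract $\predictive(\weight\testfn)/\predictiveN(\weightN)$ (controlling the random $\predictiveN(\weightN)$); since both quantities are $\geq m_g^{-1}$ under \cref{ass:smc2}, the two variants are interchangeable.
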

\begin{proof}
Apply the definition of $\bg$ and $\bgN$ and consider the following decomposition
\begin{align*}
\vert\bgN(\predictiveN)(\testfn) - \bg(\predictiveN)(\testfn)\vert & = \left\lvert \frac{\predictiveN( \weightN\testfn)}{\predictiveN( \weightN)} - \frac{\predictive( \weight\testfn)}{\predictive( \weight)} \right\rvert\\
&\leq \left\lvert \frac{\predictiveN( \weightN\testfn)}{\predictiveN( \weightN)} - \frac{\predictiveN( \weightN\testfn)}{\predictive( \weight)} \right\rvert\\
& + \left\lvert \frac{\predictiveN( \weightN\testfn)}{\predictive( \weight)} - \frac{\predictive( \weight\testfn)}{\predictive( \weight)} \right\rvert.
\end{align*}
Then, for the first term
\begin{align*}
\left\lvert \frac{\predictiveN( \weightN\testfn)}{\predictiveN( \weightN)} - \frac{\predictiveN( \weightN \testfn)}{\predictive( \weight)} \right\rvert & = \left\lvert \frac{\predictiveN( \weightN\testfn)}{\predictiveN( \weightN)}\right\rvert \left\lvert \frac{\predictive(\weight) - \predictiveN(\weightN)}{\predictive( \weight)} \right\rvert\\
&\leq \frac{\supnorm{\testfn}}{\vert \predictive( \weight) \vert} \vert \predictive(\weight) - \predictiveN(\weightN)\vert.
\end{align*}
For the second term
\begin{align*}
\left\lvert \frac{\predictiveN( \weightN\testfn)}{\predictive( \weight)} - \frac{\predictive( \weight\testfn)}{\predictive( \weight)} \right\rvert & = \frac{1}{\vert \predictive( \weight) \vert}  \vert \predictiveN( \weightN\testfn) - \predictive( \weight\testfn)\vert.
\end{align*}
Hence, using \cref{ass:smc2} to guarantee that $G_n\geq m_g^{-1}>0$ and the hypothesis, we have
\begin{align*}
\Exp\left[\vert\bgN(\predictiveN)(\testfn) - \bg(\predictive)(\testfn)\vert^p\right]^{1/p} &\leq \supnorm{\testfn}m_g\Exp\left[\vert \predictive(\weight) - \predictiveN(\weightN)\vert^p\right]^{1/p}\\
&+m_g\Exp\left[\vert \predictive(\weight\testfn) - \predictiveN(\weightN\testfn)\vert^p\right]^{1/p}\\
&\leq 2\supnorm{\testfn}m_g^2\frac{D_{p,n}}{N^{1/2}}.
\end{align*}
\end{proof}

Finally, we control the error induced by the resampling step using the same argument as \citet[Lemma 5]{smc:theory:CD02}.

\begin{lemma}[Multinomial resampling]
\label{lp:lemma4}
Under the conditions of \cref{prop:lp}, assume that for any $\testfn\in \bounded$, for some $p\geq 1$ and some finite constant $\bar{C}_{p, n}$
\begin{equation*}
\Exp\left[\vert\bgN(\predictiveN)(\testfn) - \update(\testfn)\vert^p\right]^{1/p}\leq \bar{C}_{p, n}\frac{\supnorm{\testfn}}{N^{1/2}},
\end{equation*}
then after the resampling step performed through multinomial resampling
\begin{equation*}
\Exp\left[\vert\updateN(\testfn) - \update(\testfn)\vert^p\right]^{1/p} \leq C_{p,n}\frac{\supnorm{\testfn}}{N^{1/2}}
\end{equation*}
for any $\testfn\in \bounded$ and for some finite constant $C_{p,n} $.
\end{lemma}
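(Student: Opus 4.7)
The plan is to follow the standard argument for the resampling step in SMC, in close analogy with \cref{lp:lemma1} for the mutation step. The natural decomposition is
\begin{align*}
\updateN(\testfn) - \update(\testfn) = \left[\updateN(\testfn) - \bgN(\predictiveN)(\testfn)\right] + \left[\bgN(\predictiveN)(\testfn) - \update(\testfn)\right],
\end{align*}
so that Minkowski's inequality reduces the claim to bounding the two summands separately. The second summand is already controlled at rate $N^{-1/2}$ by the inductive hypothesis, and it remains to quantify the additional noise introduced by the multinomial resampling step alone.

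For the first summand I would condition on $\mathcal{F}_n^N$, the $\sigma$-field generated by the particle system up to and including the reweighting at time $n$. Conditionally on $\mathcal{F}_n^N$ the variables $\widetilde{X}_n^1,\dots,\widetilde{X}_n^N$ produced by multinomial resampling are i.i.d.\ draws from the discrete measure $\bgN(\predictiveN)$, so in particular $\Exp[\testfn(\widetilde{X}_n^i)\mid \mathcal{F}_n^N]=\bgN(\predictiveN)(\testfn)$. Setting
\begin{align*}
\Delta_n^i(x) := \testfn(x) - \bgN(\predictiveN)(\testfn),
\end{align*}
the $\Delta_n^i(\widetilde{X}_n^i)$ are then conditionally independent, zero-mean, and have oscillation bounded by $2\supnorm{\testfn}$, and one can write
\begin{align*}
\updateN(\testfn) - \bgN(\predictiveN)(\testfn) = \frac{1}{N}\sum_{i=1}^N \Delta_n^i(\widetilde{X}_n^i).
\end{align*}
Applying the conditional form of \cref{lemma:delmoral} (exactly as in the proof of \cref{lp:lemma1}, cf.\ \eqref{eq:lp_mutation}) then yields an almost-sure bound
\begin{align*}
\Exp\!\left[\left\vert \updateN(\testfn) - \bgN(\predictiveN)(\testfn) \right\vert^p \,\middle|\, \mathcal{F}_n^N\right]^{1/p} \leq 4 b(p)^{1/p} \frac{\supnorm{\testfn}}{N^{1/2}},
\end{align*}
which after taking expectations transfers directly to the unconditional $\lp$-norm. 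Combining the two bounds through Minkowski's inequality delivers the claim with $C_{p,n} = \bar{C}_{p,n} + 4 b(p)^{1/p}$.

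The argument is essentially routine, as it involves neither the approximate weights $\weightN$ nor the marginalization structure specific to MSMC; the only point worth verifying carefully is that $\bgN(\predictiveN)(\testfn)$ is $\mathcal{F}_n^N$-measurable and plays the role of the conditional mean of each $\testfn(\widetilde{X}_n^i)$, which is exactly what characterizes unbiased multinomial resampling. Accordingly I do not anticipate any real obstacle; the lemma is the exact analogue of \citet[Lemma 5]{smc:theory:CD02} in the present setting.
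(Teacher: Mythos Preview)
Your proposal is correct and follows essentially the same argument as the paper: the same decomposition via Minkowski's inequality, the same conditioning on $\mathcal{F}_n^N$, the same use of the conditional form of \cref{lemma:delmoral} applied to the centered variables $\Delta_n^i(\widetilde{X}_n^i)$, and the same resulting constant $C_{p,n}=4b(p)^{1/p}+\bar{C}_{p,n}$. The paper even cites \citet[Lemma 5]{smc:theory:CD02} just as you do.
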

\begin{proof}
The proof follows that of \citet[Lemma 5]{smc:theory:CD02}.
Divide into two terms and apply Minkowski's inequality
\begin{align*}
\Exp\left[\vert\updateN(\testfn) - \update(\testfn)\vert^p\right]^{1/p} & \leq \Exp\left[\vert\updateN(\testfn) - \bgN(\predictiveN)(\testfn)\vert^p\right]^{1/p}\\
&+\Exp\left[\vert\bgN(\predictiveN)(\testfn) - \update(\testfn)\vert^p\right]^{1/p}.
\end{align*}
Denote by $\sfresampling$
the $\sigma$-field generated by the weighted samples up to (and including) time $n$, $\sfresampling:= \sigma\left(X_n^i: i\in\lbrace 1,\ldots,N\rbrace\right)\vee\mathcal{G}_{n-1}^N$ and consider the sequence of functions $\Delta_n^i : E \mapsto \mathbb{R}$ for $i=1,\ldots, N$ 
\begin{equation*}
\Delta_n^i(x) :=  \testfn(x) -  \Exp\left[\testfn(\widetilde{X}_{n}^i) \mid \sfresampling\right].
\end{equation*}
Conditionally on $\sfresampling$, $\Delta_n^i(\widetilde{X}_n^i)$ $i=1,\ldots, N$ are independent and have expectation equal to 0, moreover
\begin{align*}
\updateN(\testfn) - \bgN(\predictiveN)(\testfn) &= \frac{1}{N} \sum_{i=1}^N \left( \testfn(\widetilde{X}_{n}^i) - \Exp\left[\testfn(\widetilde{X}_{n}^i) \mid \sfresampling\right]\right)\\
&\qquad  = \frac{1}{N} \sum_{i=1}^N \Delta_n^i(\widetilde{X}_n^i).
\end{align*}
By \cref{lemma:delmoral},
\begin{align}
\label{eq:lp_resampling}
\sqrt{N}\Exp\left[\vert\updateN(\testfn) - \bgN(\predictiveN)(\testfn)\vert^p \mid \sfresampling\right]^{1/p} &\leq b(p) ^{1/p} \frac{1}{\sqrt{N}} \left(\sum_{i=1}^N \left(\sup( \Delta_{n}^i) - \inf( \Delta_{n}^i)\right)^2\right)^{1/2}  \\
&\leq b(p) ^{1/p} \frac{1}{\sqrt{N}} \left(\sum_{i=1}^N 4\left(\sup\vert \Delta_{n}^i\vert\right)^2\right)^{1/2} \notag \\
&  \leq b(p) ^{1/p} \frac{1}{\sqrt{N}} \left(\sum_{i=1}^N 16\supnorm{\testfn}^2\right)^{1/2} \notag \\
& \leq 4b(p) ^{1/p} \supnorm{\testfn},\notag
\end{align}
where $b(p)$ is as in~\eqref{eq:b(p)}. 
Since $\update(\testfn)\equiv \bg(\predictive)(\testfn)$, this result combined with the hypothesis yields
\begin{align*}
\Exp\left[\vert\updateN(\testfn) - \update(\testfn)\vert^p\right]^{1/p} & \leq 4b(p)^{1/p}\frac{\supnorm{\testfn}}{N^{1/2} }+ \bar{C}_{p,n} \frac{\supnorm{\testfn}}{N^{1/2}}\\
& \leq (4b(p)^{1/p}+\bar{C}_{p,n})\frac{\supnorm{\testfn}}{N^{1/2}}.
\end{align*}
Thus, $C_{p,n}  = 4b(p)^{1/p}+\bar{C}_{p,n} $.
\end{proof}

\section{Proof of the WLLN in \cref{prop:wlln}}
\label{app:wlln}

The proof of the WLLN follows a similar strategy to that of \cref{prop:lp} and makes use of the Marcinkiewicz-Zygmund-type inequality in \cref{lemma:delmoral}.
We stress that the WLLN is obtained under weaker assumptions that those used for \cref{prop:lp} as \cref{lemma:delmoral} is only used to bound quantities for which \cref{ass:smc2} is not necessary.

\begin{proof}[Proof of \cref{prop:wlln}]
The proof follows an inductive argument similar to that for the $\lp$-inequality.
For the case $n=0$ we use the fact that $G_0^N\equiv G_0$ and the WLLN for self normalized importance sampling (e.g. \citet[Theorem 9.1.8]{cappe2005inference}) to show $\Psi_{G_0^N}(\eta_0^N)(\testfn)\pconverges\Psi_{G_0}(\eta_0)(\testfn)$.
Using this result, the decomposition
\begin{align*}
\hat{\eta}_0^N(\testfn) - \hat{\eta}_0(\testfn) = \hat{\eta}_0^N(\testfn) - \Psi_{G_0^N}(\eta_0^N)(\testfn) + \Psi_{G_0^N}(\eta_0^N)(\testfn) - \hat{\eta}_0(\testfn), 
\end{align*}
~\eqref{eq:lp0} and Markov's inequality we can conclude that $\hat{\eta}_0^N(\testfn) - \Psi_{G_0^N}(\eta_0^N)(\testfn)\pconverges 0$. The convergence of $\hat{\eta}_0^N(\testfn) - \hat{\eta}_0(\testfn) \pconverges 0$ then follows using the hypothesis since $\Psi_{G_0}(\eta_0)(\testfn) = \hat{\eta}_0(\testfn)$.
\cref{wlln:lemma1,wlln:lemma_weight,wlln:lemma3,wlln:lemma4} below show that if the results hold at time $n-1$ then it also holds at time $n$. We then conclude the proof using induction. 
\end{proof}
\subsection{Auxiliary results for the proof of \cref{prop:wlln}}

As for the proof of the $\lp$ inequality we combine well-known arguments with a result (in this case \cref{wlln:lemma_weight}) which controls the contribution of the weights $\weightN$. In particular, \cref{wlln:lemma1,wlln:lemma3,wlln:lemma4} can be found in \citet[Section 9.3]{cappe2005inference}.

\begin{lemma}[Mutation]
\label{wlln:lemma1}
Assume that, for any $\testfn\in \bounded$, $\hat{\eta}^N_{n-1}(\testfn) \pconverges\hat{\eta}_{n-1}(\testfn)$, then, after the mutation step, $\predictiveN(\testfn) \pconverges\predictive(\testfn)$
for any $\testfn\in \bounded$.
\end{lemma}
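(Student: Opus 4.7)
The plan is to use the standard mutation-step decomposition, splitting the error into a ``sampling noise'' term conditional on the resampled particles and a ``propagated inductive hypothesis'' term. Concretely, I would write
\begin{align*}
\predictiveN(\testfn) - \predictive(\testfn)
= \bigl[\predictiveN(\testfn) - \hat{\eta}_{n-1}^N M_n(\testfn)\bigr]
+ \bigl[\hat{\eta}_{n-1}^N M_n(\testfn) - \hat{\eta}_{n-1} M_n(\testfn)\bigr],
\end{align*}
noting that $\predictive = \hat{\eta}_{n-1} M_n$ by~\eqref{eq:predictive}.

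For the second term, observe that $M_n \testfn : x_{n-1} \mapsto \int M_n(x_{n-1}, \rmd x_n)\testfn(x_n)$ is a measurable bounded function with $\supnorm{M_n\testfn}\leq \supnorm{\testfn}$ because $M_n$ is a Markov kernel. The inductive hypothesis applied to the test function $M_n\testfn \in \bounded$ therefore yields $\hat{\eta}_{n-1}^N M_n(\testfn) - \hat{\eta}_{n-1} M_n(\testfn) \pconverges 0$.

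For the first term, condition on $\sfmutation = \sigma\left(\widetilde{X}_{n-1}^i: i \in \{1,\ldots,N\}\right)\vee \mathcal{F}_{n-1}^N$. Conditional on this $\sigma$-field, the $X_n^i \sim M_n(\widetilde{X}_{n-1}^i, \cdot)$ are independent and $\Exp[\testfn(X_n^i)\mid\sfmutation] = M_n\testfn(\widetilde{X}_{n-1}^i)$, so that the centred functions $\Delta_n^i(x) := \testfn(x) - M_n\testfn(\widetilde{X}_{n-1}^i)$ satisfy the hypotheses of \cref{lemma:delmoral}. Exactly as in the estimate~\eqref{eq:lp_mutation}, for any $p \geq 1$ we obtain
\begin{align*}
\Exp\!\left[\,\bigl|\predictiveN(\testfn) - \hat{\eta}_{n-1}^N M_n(\testfn)\bigr|^p \,\bigm|\, \sfmutation\right]^{1/p}
\leq 4\,b(p)^{1/p}\,\frac{\supnorm{\testfn}}{\sqrt{N}},
\end{align*}
almost surely, and taking expectations gives an unconditional $\lp$ bound of the same order. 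Markov's inequality then yields convergence in probability of the first term to zero. Combining the two terms via the union bound completes the argument.

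The main point to be careful about — although it is not really an obstacle — is that the inductive hypothesis is applied only to bounded measurable functions, so we must verify that $M_n\testfn$ belongs to $\bounded$; this follows immediately from $M_n$ being a Markov kernel. Notice also that this step does not require \cref{ass:smc2}: only the Markov-kernel structure of $M_n$ and the boundedness of $\testfn$ are used, which is why the WLLN holds under the weaker assumptions \cref{ass:smc0,ass:weak}.
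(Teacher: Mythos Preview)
Your proposal is correct and follows essentially the same route as the paper: the same two-term decomposition, the conditional Marcinkiewicz--Zygmund bound~\eqref{eq:lp_mutation} followed by Markov's inequality for the sampling-noise term, and the inductive hypothesis applied to $M_n\testfn\in\bounded$ for the second term, with the two pieces combined (the paper phrases this last step as Slutzky's lemma rather than a union bound). Your explicit check that $M_n\testfn\in\bounded$ and your remark that \cref{ass:smc2} is not used are both accurate and consistent with the paper's treatment.
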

\begin{proof}
Using~\eqref{eq:lp_mutation} and Markov's inequality, it follows that $\predictiveN(\testfn) - \hat{\eta}_{n-1}^N M_n(\testfn)\pconverges0$.
We can use the hypothesis to get
\begin{align*}
\hat{\eta}_{n-1}^N M_n(\testfn)\pconverges \hat{\eta}_{n-1} M_n (\testfn).
\end{align*}
Combining the two results above and using Slutzky's lemma we obtain $\predictiveN(\testfn)\pconverges \predictive(\testfn)$.
\end{proof}

\begin{lemma}[Weight Comparison]
\label{wlln:lemma_weight}
Assume that, for any $\testfn\in \bounded$, $\Psi_{G_{n-1}^N}(\eta_{n-1}^N)(\testfn) \pconverges \Psi_{G_{n-1}}(\eta_{n-1})(\testfn)$, 
then $\predictiveN(\weightN\testfn) \pconverges\predictive(\weight\testfn)$, for any $\testfn\in \bounded$.
\end{lemma}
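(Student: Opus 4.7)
The plan is to mimic the decomposition used in \cref{lp:lemma_weight}, replacing each $\lp$ estimate by a corresponding convergence-in-probability statement. Write
\begin{align*}
\predictiveN(\weightN\testfn) - \predictive(\weight\testfn)
= A_N + B_N,
\end{align*}
where
\begin{align*}
A_N &:= \predictiveN(\weightN\testfn) - \Exp\left[\predictiveN(\weightN\testfn)\mid\mathcal{F}_{n-1}^N\right], \\
B_N &:= \Exp\left[\predictiveN(\weightN\testfn)\mid\mathcal{F}_{n-1}^N\right] - \predictive(\weight\testfn).
\end{align*}

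For $B_N$, \cref{prop:conditional_expe} identifies the conditional expectation as $\Psi_{G_{n-1}^N}(\eta_{n-1}^N)(K_n(\testfn U_n))$, while the identity \eqref{eq:conditional_expe2} rewrites the limit as $\Psi_{G_{n-1}}(\eta_{n-1})(K_n(\testfn U_n))$. Under \cref{ass:weak}, the test function $K_n(\testfn U_n)$ lies in $\bounded$ with $\supnorm{K_n(\testfn U_n)}\leq\supnorm{U_n}\supnorm{\testfn}<\infty$, so the inductive hypothesis applied to this bounded test function gives $B_N\pconverges 0$.

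For $A_N$, I would repeat the conditional Marcinkiewicz--Zygmund argument from \cref{lp:lemma_weight} with $p=2$ (or any fixed $p\geq1$). Set $\Delta_n^i(x):=G_n^N(x)\testfn(x)-\Exp[G_n^N(X_n^i)\testfn(X_n^i)\mid\mathcal{F}_{n-1}^N]$; conditionally on $\mathcal{F}_{n-1}^N$ the $X_n^i$ are independent draws, the $\Delta_n^i(X_n^i)$ have conditional mean zero and, because \cref{ass:weak} gives $G_n^N(x)\leq m_g$ uniformly in $x$, they have oscillations bounded by $4m_g\supnorm{\testfn}$. The conditional version of \cref{lemma:delmoral} therefore yields
\begin{align*}
\Exp\!\left[|A_N|^p \,\middle|\, \mathcal{F}_{n-1}^N\right]^{1/p}
\leq 4\,b(p)^{1/p}\, m_g \supnorm{\testfn}\, N^{-1/2}
\end{align*}
almost surely; taking unconditional expectations and applying Markov's inequality gives $A_N\pconverges0$. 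Combining the two convergences via Slutsky's lemma yields the claim.

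The only delicate point is ensuring that \cref{ass:smc2} is not secretly required: both steps use only the upper bound $G_n^N\leq m_g$ provided by \cref{ass:weak}, so the argument is valid under the assumptions of \cref{prop:wlln}. There is no genuine obstacle here beyond bookkeeping.
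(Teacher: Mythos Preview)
Your proposal is correct and follows essentially the same route as the paper: the paper uses the identical decomposition into $A_N$ and $B_N$, handles $B_N$ via \cref{prop:conditional_expe} and~\eqref{eq:conditional_expe2} together with the hypothesis applied to the bounded function $K_n(\testfn U_n)$, and handles $A_N$ by citing the conditional Marcinkiewicz--Zygmund bound established in \cref{lp:lemma_weight} and then invoking Markov's inequality. Your observation that only \cref{ass:weak} (the upper bound $G_n^N\leq m_g$) is needed, and not \cref{ass:smc2}, is also exactly right and matches the paper's remark preceding the proof of \cref{prop:wlln}.
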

\begin{proof}
Consider the decomposition
\begin{align*}
\predictiveN(\weightN\testfn) - 
\predictive(\weight\testfn) & =\predictiveN(\weightN\testfn) - \Exp\left[\predictiveN(\weightN\testfn)\mid\mathcal{F}_{n-1}^N\right]+\Exp\left[\predictiveN(\weightN\testfn)|\mathcal{F}_{n-1}^N\right] -\predictive(\weight\testfn).
\end{align*}
Using the hypothesis,  \cref{prop:conditional_expe} and~\eqref{eq:conditional_expe2}, we have that $\Exp\left[\predictiveN(\weightN\testfn)\mid\mathcal{F}_{n-1}^N\right]\pconverges \predictive(\weight\testfn)$.
In addition, we have that, almost surely,
\begin{align*}
\Exp\left[\vert \predictiveN(\weightN\testfn)-\Exp\left[\predictiveN(\weightN\testfn)\mid\mathcal{F}_{n-1}^N\right]\vert^p\mid \mathcal{F}_{n-1}^N \right]^{1/p}\leq 4b(p)^{1/p}m_g\frac{\supnorm{\testfn}}{N^{1/2}},
\end{align*}
as shown in \cref{lp:lemma_weight}.
Hence, by Markov's inequality, $\predictiveN(\weightN\testfn)-\Exp\left[\predictiveN(\weightN\testfn)\mid\mathcal{F}_{n-1}^N\right]\pconverges 0$.
Therefore, $\predictiveN(\weightN \testfn)\pconverges \predictive(\weight \testfn)$.
\end{proof}

\begin{lemma}[Approximate reweighting]
\label{wlln:lemma3}
Assume that, for any $\testfn\in \bounded$, $\predictiveN(\weightN\testfn) \pconverges\predictive(\weight\testfn)$, then $\bgN(\predictiveN)(\testfn) \pconverges \bg(\predictive)(\testfn)$
for any $\testfn\in \bounded$.
\end{lemma}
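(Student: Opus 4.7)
The plan is to prove convergence in probability of the ratio defining $\bgN(\predictiveN)$ by obtaining convergence of its numerator and denominator separately, and then invoking a continuous mapping argument. Write
$$
\bgN(\predictiveN)(\testfn) = \frac{\predictiveN(\weightN\testfn)}{\predictiveN(\weightN\cdot \unit)}, \qquad \bg(\predictive)(\testfn) = \frac{\predictive(\weight\testfn)}{\predictive(\weight\cdot \unit)},
$$
so that the statement is about convergence of a ratio of two quantities of the form handled by the hypothesis.

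First, I would apply the hypothesis directly to the bounded test function $\testfn$ to obtain $\predictiveN(\weightN\testfn) \pconverges \predictive(\weight\testfn)$. Second, I would apply the hypothesis with $\testfn \equiv \unit \in \bounded$ to obtain $\predictiveN(\weightN) \pconverges \predictive(\weight)$. Both applications are legal because $\testfn \in \bounded$ and $\unit \in \bounded$.

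The only nontrivial point is that the limit denominator $\predictive(\weight)$ is strictly positive, which is needed to make the map $(a,b)\mapsto a/b$ continuous at $(\predictive(\weight\testfn),\predictive(\weight))$. This follows from \cref{ass:smc0}: since $\weight > 0$ everywhere and $\predictive$ is a probability measure, we have $\predictive(\weight) > 0$ (in fact, under \cref{ass:smc2} one even has $\predictive(\weight) \geq m_g^{-1} > 0$, giving a quantitative lower bound, but the WLLN only needs positivity).

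With these two convergences in hand and $\predictive(\weight) > 0$, an application of Slutzky's lemma (or equivalently the continuous mapping theorem applied jointly to the pair, together with the fact that convergence in probability of each coordinate implies joint convergence in probability) gives
$$
\frac{\predictiveN(\weightN\testfn)}{\predictiveN(\weightN)} \pconverges \frac{\predictive(\weight\testfn)}{\predictive(\weight)},
$$
which is the desired conclusion. There is no real obstacle here; the lemma is essentially a bookkeeping step that translates convergence of the unnormalized reweighted empirical measures (established under the hypothesis, and in particular verified in \cref{wlln:lemma_weight}) into convergence of the normalized reweighted empirical measure $\bgN(\predictiveN)$.
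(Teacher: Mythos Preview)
Your proposal is correct and follows essentially the same approach as the paper: write $\bgN(\predictiveN)(\testfn)$ and $\bg(\predictive)(\testfn)$ as ratios, apply the hypothesis to both $\testfn$ and $\unit$, and invoke the continuous mapping theorem (the paper phrases it slightly more tersely, without explicitly isolating the positivity of $\predictive(\weight)$, but the argument is the same). Your additional remark that \cref{ass:smc0} ensures $\predictive(\weight)>0$ is a welcome clarification that the paper leaves implicit.
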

\begin{proof}
Take the definition of $\bg$ and $\bgN$
\begin{align*}
\bgN(\predictiveN)(\testfn) - \bg(\predictive)(\testfn) =  \frac{\predictiveN( \weightN\testfn)}{\predictiveN( \weightN)} - \frac{\predictive( \weight\testfn)}{\predictive( \weight)}.
\end{align*}
A simple application of the continuous mapping theorem to the hypothesis gives
\begin{align*}
\frac{\predictiveN( \weightN\testfn)}{\predictiveN( \weightN)}  \pconverges  \frac{\predictive( \weight\testfn)}{\predictive( \weight)},
\end{align*}
and the result follows.
\end{proof}

\begin{lemma}[Multinomial resampling]
\label{wlln:lemma4}
Assume that, for any $\testfn\in \bounded$, $\bgN(\predictiveN)(\testfn) \pconverges\update(\testfn)$, 
then after the resampling step performed through multinomial resampling $\updateN(\testfn) \pconverges\update(\testfn)$
for any $\testfn\in \bounded$.
\end{lemma}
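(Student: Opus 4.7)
The plan is to mimic the structure of \cref{lp:lemma4}, but to extract convergence in probability rather than an $\Lp$-rate. First, I would write the telescoping decomposition
\begin{align*}
\updateN(\testfn) - \update(\testfn) = \bigl[\updateN(\testfn) - \bgN(\predictiveN)(\testfn)\bigr] + \bigl[\bgN(\predictiveN)(\testfn) - \update(\testfn)\bigr].
\end{align*}
The second bracket converges to zero in probability by the hypothesis of the lemma, so by Slutsky's lemma it suffices to show that the first bracket also converges to zero in probability.

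For the first bracket, I would condition on the $\sigma$-field $\sfresampling$. Under multinomial resampling, conditionally on $\sfresampling$ the particles $\widetilde{X}_n^i$ are i.i.d.\ draws from the categorical distribution with atoms $X_n^i$ and weights $W_n^i$, i.e.\ from $\bgN(\predictiveN)$. Hence $\Exp[\testfn(\widetilde{X}_n^i)\mid \sfresampling] = \bgN(\predictiveN)(\testfn)$, and the centered summands
\begin{align*}
\Delta_n^i(x) := \testfn(x) - \Exp[\testfn(\widetilde{X}_n^i)\mid \sfresampling]
\end{align*}
have conditional mean zero, are conditionally independent, and satisfy $\sup \Delta_n^i - \inf \Delta_n^i \leq 2\supnorm{\testfn}$ almost surely. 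Applying \cref{lemma:delmoral} conditionally on $\sfresampling$ with any fixed $p\geq 1$ reproduces the bound \eref{eq:lp_resampling},
\begin{align*}
\Exp\bigl[\vert\updateN(\testfn) - \bgN(\predictiveN)(\testfn)\vert^p \mid \sfresampling\bigr]^{1/p} \leq 4\, b(p)^{1/p}\,\frac{\supnorm{\testfn}}{N^{1/2}},
\end{align*}
almost surely. Taking (unconditional) expectations on both sides via the tower property and then applying Markov's inequality gives $\updateN(\testfn) - \bgN(\predictiveN)(\testfn) \pconverges 0$, which combined with the hypothesis yields the desired conclusion.

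There is no substantive obstacle: the resampling step is a conditionally i.i.d.\ sampling step with uniformly bounded integrand, so the Marcinkiewicz--Zygmund bound is immediate and, crucially, does not rely on \cref{ass:smc2}. This is why the WLLN can be established here under the weaker hypotheses (\cref{ass:smc0,ass:weak}) that govern \cref{prop:wlln}: the only step that needed $\cref{ass:smc2}$ in the $\Lp$-case was the lower bound on $G_n$ used in the reweighting step (\cref{lp:lemma3}), while the multinomial resampling argument is unaffected.
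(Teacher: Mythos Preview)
Your proposal is correct and follows essentially the same approach as the paper: the same telescoping decomposition, followed by invoking the conditional Marcinkiewicz--Zygmund bound \eqref{eq:lp_resampling} together with Markov's inequality for the first term and the hypothesis for the second. Your observation that this step does not require \cref{ass:smc2} is also in agreement with the paper's remarks preceding the proof of \cref{prop:wlln}.
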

\begin{proof}
Consider the decomposition
\begin{align*}
\updateN(\testfn) - \update(\testfn) = \updateN(\testfn) - \bgN(\predictiveN)(\testfn)+\bgN(\predictiveN)(\testfn) - \update(\testfn).
\end{align*}
Using~\eqref{eq:lp_resampling} and Markov's inequality, we have that $\updateN(\testfn) - \bgN(\predictiveN)(\testfn)\pconverges0$.
Since $\update(\testfn)\equiv \bg(\predictive)(\testfn)$, this result combined with the hypothesis yields
\begin{align*}
\updateN(\testfn) \pconverges \update(\testfn).
\end{align*}
\end{proof}

\section{Proof of the bias estimates in \cref{prop:bias}}
\label{app:bias}

The proof of the bias estimate in \cref{prop:bias} uses an inductive approach similar to that of the proof of \cref{prop:lp}, and follows the approach of \cite{olsson2004bootstrap}.
\begin{proof}[Proof of \cref{prop:bias}]
At time $n=0$, the particles $(X_0^{i})_{i=1}^N$ are i.i.d samples from $\eta_0\equiv M_0$ which have been reweighted according to $G^N_0 \equiv G_0 \equiv \rmd K_0/\rmd M_0$. Thus, using standard results for self normalized importance sampling (e.g.~\citet[p.35]{liu2001monte}) we have
\begin{align*}
\left\lvert \Exp\left[ \Psi_{G^N_0}(\eta_0)(\testfn)\right] - \Psi_{G_0}(\eta_0)(\testfn)\right\rvert \leq \bar{C}_0\frac{\supnorm{\testfn}}{N}.
\end{align*}
To prove~\ref{prop:bias-upd}, consider $\mathcal{F}_0^N:=\sigma\left(X_0^{i}: i\in\lbrace 1,\ldots,N\rbrace\right)$ the $\sigma$-field generated by the weighted particles at time $n=0$. Then,
\begin{align*}
\Exp\left[ \hat{\eta}_0(\testfn)\right] =\Exp\left[ \Exp\left[ \hat{\eta}_0(\testfn)\mid \mathcal{F}_0^N\right]\right] = \Exp\left[ \Psi_{G^N_0}(\eta_0)(\testfn)\right],
\end{align*}
and the result follows from~\ref{prop:bias-pred}.
As for the proof of the $\lp$-inequality, assume the result holds at time $n-1$.

Consider~\ref{prop:bias-pred} and use the triangle inequality:
\begin{align}
\label{eq:statement2}
\left\lvert \Exp\left[ \bgN(\predictiveN)(\testfn) \right] - \bg(\predictive)(\testfn)\right\rvert  &= \left\lvert \Exp\left[\frac{\predictiveN(\weightN\testfn)}{\predictiveN(\weightN)}\right] - \bg(\predictive)(\testfn)\right\rvert  \\
&\leq \left\lvert \Exp\left[ \frac{\predictiveN(\weightN\testfn)}{\predictiveN(\weightN)}\right] - \frac{\Exp\left[ \predictiveN(\weightN\testfn)\right]}{\Exp\left[\predictiveN(\weightN)\right]}\right\rvert +  \left\lvert  \frac{\Exp\left[ \predictiveN(\weightN\testfn)\right]}{\Exp\left[\predictiveN(\weightN)\right]} -  \frac{\predictive(\weight\testfn)}{\predictive(\weight)} \right\rvert.\notag
\end{align}
To control the bias of the reweighting step, we bound~\eqref{eq:statement2}.
The second term is bounded applying the triangle inequality and using \cref{ass:smc2} as in \citet[Lemma 2.3]{olsson2004bootstrap}
\begin{align*}
\left\lvert  \frac{\Exp\left[ \predictiveN(\weightN\testfn)\right]}{\Exp\left[\predictiveN(\weightN)\right]} -  \frac{\predictive(\weight\testfn)}{\predictive(\weight)} \right\rvert &\leq \left\lvert  \frac{\Exp\left[ \predictiveN(\weightN\testfn)\right]}{\Exp\left[\predictiveN(\weightN)\right]} -  \frac{\predictive(\weight\testfn)}{\Exp\left[\predictiveN(\weightN)\right]} \right\rvert + \left\lvert  \frac{\predictive(\weight\testfn)}{\Exp\left[\predictiveN(\weightN)\right]} -  \frac{\predictive(\weight\testfn)}{\predictive(\weight)} \right\rvert\\
&=\left\lvert  \frac{\Exp\left[ \predictiveN(\weightN\testfn)\right] - \predictive(\weight\testfn)}{\Exp\left[\predictiveN(\weightN)\right]} \right\rvert + \left\lvert  \predictive(\weight\testfn)\frac{ \predictive(\weight) - \Exp\left[\predictiveN(\weightN)\right]}{\Exp\left[\predictiveN(\weightN)\right]\predictive(\weight)} \right\rvert\\
&\leq m_g\left\lvert  \Exp\left[ \predictiveN(\weightN\testfn)\right] - \predictive(\weight\testfn) \right\rvert + m_g\supnorm{\testfn} \left\lvert   \predictive(\weight) - \Exp\left[\predictiveN(\weightN)\right] \right\rvert.
\end{align*}
Then, using \cref{prop:conditional_expe} and~\eqref{eq:conditional_expe2}, we have
\begin{align*}
\left\vert \Exp\left[ \predictiveN(\weightN\testfn)\right] - \predictive(\weight\testfn)\right\vert &=\left\vert\Exp\left[ \Exp\left[ \predictiveN(\weightN\testfn)\mid \mathcal{F}_{n-1}^N\right]\right] - \predictive(\weight\testfn)\right\vert\\
& =\left\vert \Exp\left[\Psi_{G^N_{n-1}}(\eta^N_{n-1})(K_n(U_n\testfn))\right] - \Psi_{G_{n-1}}(\eta_{n-1})(K_n(U_n\testfn))\right\vert \\
&\leq \frac{\bar{C}_{n-1}\supnorm{U_n}\supnorm{\testfn}}{N}
\end{align*}
where the last inequality follows from the inductive hypothesis.
Hence, 
\begin{align}
\label{eq:statement3}
\left\lvert  \frac{\Exp\left[ \predictiveN(\weightN\testfn)\right]}{\Exp\left[\predictiveN(\weightN)\right]} -  \frac{\predictive(\weight\testfn)}{\predictive(\weight)} \right\rvert \leq 2m_g\frac{\bar{C}_{n-1}\supnorm{U_n}\supnorm{\testfn}}{N}.
\end{align}

For the first term in~\eqref{eq:statement2}, consider a two-dimensional Taylor expansion of the function $(u, v) \mapsto u/v$ around $(u_0, v_0)$ with a second order remainder of Lagrange form
\begin{align*}
\frac{u}{v} = \frac{u_0}{v_0} + \frac{1}{v_0}(u - u_0) - \frac{u_0}{v_0^2}(v -v_0) + \frac{\theta_u}{\theta_v^3}(v -v_0) ^2 - \frac{1}{\theta_v^2}(v -v_0) (u - u_0)
\end{align*}
where $(\theta_u, \theta_v)$ is a point on the line segment between $(u, v)$ and $(u_0, v_0)$.
Applying this Taylor expansion to $\predictiveN(\weightN\testfn)/\predictiveN(\weightN)$ around the point $\left( \Exp\left[ \predictiveN(\weightN\testfn)\right], \Exp\left[\predictiveN(\weightN)\right]\right)$, as in \citet[Lemma 2.4]{olsson2004bootstrap}, gives
\begin{align}
\label{eq:biaslagrange}
\frac{\predictiveN(\weightN\testfn)}{\predictiveN(\weightN)} =& \frac{\Exp\left[ \predictiveN(\weightN\testfn)\right]}{\Exp\left[\predictiveN(\weightN)\right]} + \frac{1}{\Exp\left[\predictiveN(\weightN)\right]}\left(\predictiveN(\weightN\testfn) - \Exp\left[ \predictiveN(\weightN\testfn)\right] \right)\\
&- \frac{ \Exp\left[ \predictiveN(\weightN\testfn)\right]}{\Exp\left[\predictiveN(\weightN)\right]^2}\left(\predictiveN(\weightN) - \Exp\left[ \predictiveN(\weightN)\right] \right) + R_n^N(\theta_u, \theta_v) \notag
\end{align}
where the remainder is a function of $(\theta_u, \theta_v)$, a point on the line segment between $(\predictiveN(\weightN\testfn), \predictiveN(\weightN))$ and $\left(\Exp\left[\predictiveN(\weightN\testfn)\right], \Exp\left[\predictiveN(\weightN)\right]\right)$
\begin{align*}
R_n^N(\theta_u, \theta_v) &:= \frac{\theta_u}{\theta_v^3} \left( \predictiveN(\weightN) - \Exp\left[ \predictiveN(\weightN)\right]\right)^2\\
& - \frac{1}{\theta_v^2} \left( \predictiveN(\weightN\testfn) - \Exp\left[ \predictiveN(\weightN\testfn)\right]\right) \left( \predictiveN(\weightN) - \Exp\left[ \predictiveN(\weightN)\right]\right).
\end{align*}
Taking the expectation of both sides of~\eqref{eq:biaslagrange} yields
\begin{align*}
\Exp\left[ \frac{\predictiveN(\weightN\testfn)}{\predictiveN(\weightN)}\right] = \frac{\Exp\left[ \predictiveN(\weightN\testfn)\right]}{\Exp\left[\predictiveN(\weightN)\right]} + \Exp\left[R_n^N(\theta_u, \theta_v)\right].
\end{align*}
Since one of the extremal points of the segment is random, $(\theta_u, \theta_v)$ is random too; because we have $0<m_g^{-1}\leq\weightN\leq m_g$ it follows that
$\predictiveN(\weightN\testfn)\leq m_g\supnorm{\testfn}$, $ \predictiveN(\weightN)\geq m_g^{-1}>0$, so that 
$\vert \theta_u \vert \leq m_g\supnorm{\testfn}$, $\vert \theta_v^{-1} \vert \geq m_g^{-1}$ almost surely. Therefore,
\begin{align*}
\vert \Exp\left[R_n^N(\theta_u, \theta_v)\right] \vert &\leq m_g^4\supnorm{\testfn}\Exp \left[ \left\lvert \predictiveN(\weightN) - \Exp\left[ \predictiveN(\weightN)\right]\right\vert^2\right]\\
&+ m_g^2\left\lvert\Exp\left[ \left\lvert \predictiveN(\weightN\testfn) - \Exp\left[ \predictiveN(\weightN\testfn)\right]\right\rvert\left\lvert \predictiveN(\weightN) - \Exp\left[ \predictiveN(\weightN)\right]\right\rvert\right]\right\rvert.
\end{align*}
By \cref{lemma:bias}, below, with $l=m=1$ we then have
\begin{align}
\label{eq:bias2}
\vert \Exp\left[R_n^N(\theta_u, \theta_v)\right] \vert &\leq  m_g^4 A_{2, n}^2\frac{\supnorm{\testfn}}{N} + m_g^2 A_{2, n}^2\frac{\supnorm{\testfn}}{N}.
\end{align}
Combining~\eqref{eq:statement2} with~\eqref{eq:statement3} and~\eqref{eq:bias2} we obtain the result.

Having obtained a bias estimate in~\ref{prop:bias-pred} we can then obtain~\ref{prop:bias-upd} by conditioning on $\sfresampling$, the $\sigma$-field generated by the particle system up to (and including) the resampling step at time $n$,
\begin{align*}
\left\lvert \Exp\left[ \updateN(\testfn)\right] - \update(\testfn)\right\rvert  & = \left\lvert \Exp\left[\Exp\left[ \updateN(\testfn) \mid \sfresampling\right]\right] - \update(\testfn)\right\rvert  = \left\lvert \Exp\left[\bgN(\predictiveN)(\testfn)\right] - \update(\testfn)\right\rvert\\
&\leq \frac{\bar{C}_n\supnorm{\testfn}}{N},
\end{align*}
since $\update(\testfn) = \bg(\predictive)(\testfn)$.
\end{proof}
\subsection{Auxiliary results for the proof of the bias estimate}
The following auxiliary result is a direct consequence of the $\lp$-inequality in \cref{prop:lp} and \cref{lp:lemma_weight} and follows by applying Jensen's inequality and the Cauchy-Schwarz inequality as in the proof of \citet[Lemma 2.2]{olsson2004bootstrap}. 
\begin{lemma}
\label{lemma:bias}
Under \cref{ass:smc0,ass:weak,ass:smc2}, for every $n\geq 0$ and every $p\geq 1$ there exists a finite constant $A_{p, n}$ such that, for every measurable bounded function $\testfn\in \bounded$
\begin{equation*}
\Exp\left[ \left \lvert \predictiveN(\weightN\testfn) - \Exp\left[ \predictiveN(\weightN\testfn) \right]\right\rvert^p\right]^{1/p} \leq A_{p, n}\frac{\supnorm{\testfn}}{\sqrt{N}}.
\end{equation*}
Additionally, for all $\testfn, \psi \in \bounded$ and for integers $0\leq l, m < \infty$ 
\begin{align*}
&\left\lvert \Exp\left[ \left\lvert\predictiveN(\weightN\testfn) - \Exp\left[ \predictiveN(\weightN\testfn)\right]\right\rvert^l \left\lvert \predictiveN(\weightN\psi)- \Exp\left[ \predictiveN(\weightN\psi) \right]\right\rvert^m\right] \right\rvert \\
&\qquad\qquad \leq \supnorm{\testfn}^l\supnorm{\psi}^m\frac{A_{p, n}^{l+m}}{N^{(l + m)/2}}.
\end{align*}
\end{lemma}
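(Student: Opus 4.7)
The plan is to reduce both inequalities to the $\lp$-type bound provided by \cref{lp:lemma_weight}, so that the only real work is organising which centering we compare against. For the first inequality, I would compare $\predictiveN(\weightN\testfn)$ to its deterministic limit $\predictive(\weight\testfn)$ rather than directly to its mean. By the triangle inequality in $\Lp$,
\begin{align*}
\Exp\bigl[\lvert \predictiveN(\weightN\testfn) - \Exp[\predictiveN(\weightN\testfn)] \rvert^p\bigr]^{1/p}
\leq \Exp\bigl[\lvert \predictiveN(\weightN\testfn) - \predictive(\weight\testfn)\rvert^p\bigr]^{1/p}
+ \bigl\lvert \Exp[\predictiveN(\weightN\testfn)] - \predictive(\weight\testfn) \bigr\rvert,
\end{align*}
and the second summand, being a deterministic bias, is bounded by $\Exp[\lvert \predictiveN(\weightN\testfn) - \predictive(\weight\testfn)\rvert^p]^{1/p}$ via Jensen's inequality applied twice (first pulling the absolute value inside the expectation, then using $\Exp[|Z|] \leq \Exp[|Z|^p]^{1/p}$ for $p\geq 1$). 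Both summands are therefore controlled by \cref{lp:lemma_weight}, applied inductively (the statement uses the predecessor bound on $\Psi_{G_{n-1}^N}(\eta_{n-1}^N) - \Psi_{G_{n-1}}(\eta_{n-1})$, which is part of the inductive hypothesis of \cref{prop:lp}), and together they give the bound with a constant $A_{p,n}$ of the form $2 D_{p,n}$ up to a factor of $\supnorm{\testfn}$.

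For the second inequality, I would use Hölder's inequality with conjugate exponents $(l+m)/l$ and $(l+m)/m$ to split
\begin{align*}
\Exp\bigl[\lvert U\rvert^l \lvert V \rvert^m\bigr]
\leq \Exp\bigl[\lvert U\rvert^{l+m}\bigr]^{l/(l+m)} \Exp\bigl[\lvert V\rvert^{l+m}\bigr]^{m/(l+m)},
\end{align*}
where $U$ and $V$ are the centred quantities associated with $\testfn$ and $\psi$. Each factor is then bounded using the first inequality at exponent $p = l+m$, giving the product bound with constant $A_{l+m,n}^{l+m}\supnorm{\testfn}^l\supnorm{\psi}^m N^{-(l+m)/2}$, which is the stated rate; the edge cases $l=0$ or $m=0$ are trivial since the corresponding factor is $1$.

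There is no real obstacle here: the centering trick in the first step is the only thing worth flagging, because it is tempting (but unnecessary) to try to bound the centred fluctuation directly via the Marcinkiewicz–Zygmund inequality of \cref{lemma:delmoral}. Routing through the deterministic limit $\predictive(\weight\testfn)$ lets us borrow the existing conditional and unconditional $\Lp$ estimates assembled in the proof of \cref{lp:lemma_weight} verbatim, so the proof is essentially a two-line combination of Minkowski, Jensen, and Hölder.
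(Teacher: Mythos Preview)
Your proposal is correct and follows essentially the same route as the paper: Minkowski to insert the deterministic centering $\predictive(\weight\testfn)$, Jensen to control the bias term, and then \cref{lp:lemma_weight} (together with the inductive input from \cref{prop:lp}) to bound both pieces. The only difference is in the second inequality, where the paper uses Cauchy--Schwarz to obtain $\Exp[|U|^l|V|^m]\leq \Exp[|U|^{2l}]^{1/2}\Exp[|V|^{2m}]^{1/2}$, yielding a constant of the form $A_{2l,n}^l A_{2m,n}^m$, whereas you use H\"older with conjugate exponents $(l+m)/l$ and $(l+m)/m$, yielding $A_{l+m,n}^{l+m}$; your choice in fact matches the stated constant more cleanly, but both are equally valid and the distinction is cosmetic.
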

\begin{proof}
To prove the first assertion, apply Minkowski's inequality
\begin{align*}
\Exp\left[ \left \lvert \predictiveN(\weightN\testfn) - \Exp\left[ \predictiveN(\weightN\testfn) \right]\right\rvert^p\right]^{1/p} &\leq \Exp\left[ \left \lvert\predictiveN(\weightN\testfn) - \predictive(\weight\testfn)\right\rvert^p\right]^{1/p} \\
&+ \Exp\left[ \left \lvert \Exp\left[\predictive(\weight\testfn) -  \predictiveN(\weightN\testfn) \right]\right\rvert^p\right]^{1/p} \\
&\leq 2\Exp\left[ \left \lvert\predictiveN(\weightN\testfn) - \predictive(\weight\testfn)\right\rvert^p\right]^{1/p},
\end{align*}
where the second inequality follows from  Jensen's inequality applied to the second expectation in line 1.
We can apply \cref{prop:lp} and \cref{lp:lemma_weight} and obtain the first result.
The first result and the Cauchy-Schwarz inequality give
\begin{align*}
&\left\lvert \Exp\left[ \left\lvert\predictiveN(\weightN\testfn) - \Exp\left[ \predictiveN(\weightN\testfn)\right]\right\rvert^l \left\lvert \predictiveN(\weightN\psi)- \Exp\left[ \predictiveN(\weightN\psi) \right]\right\rvert^m\right] \right\rvert \\
&\qquad\leq \Exp\left[ \left\lvert\predictiveN(\weightN\testfn) - \Exp\left[ \predictiveN(\weightN\testfn)\right]\right\rvert^{2l}\right]^{1/2} \Exp\left[ \left\lvert\predictiveN(\weightN\psi) - \Exp\left[ \predictiveN(\weightN\psi)\right]\right\rvert^{2m}\right]^{1/2}\\
&\qquad\leq \supnorm{\testfn}^l\supnorm{\psi}^m\frac{A_{2l, n}^{l}A_{2m, n}^{m}}{N^{(l + m)/2}}.
\end{align*}
\end{proof}

\section{Proof of the CLT in \cref{prop:clt}}
\label{app:clt}

The proof of the CLT in \cref{prop:clt} follows that of \cite{chopin2004central} (see also \cite{douc2007limit}) and, similarly to the proof of the $\lp$-inequalities, consists of three Lemmata which establish convergence in distribution to a Normal random variable for the mutation step, the approximate reweighting step and the multinomial resampling step and one result controlling the behaviour of the approximate weights.

\begin{proof}[Proof of \cref{prop:clt}]
At time $n=0$, we have a self-normalized importance sampling estimator with proposal $M_0$ and importance weights $G_0^N\equiv G_0$, thus (see, e.g., \citet[Section 8.5]{chopin2020})
\begin{align*}
\sqrt{N}\left[ \Psi_{G_0^N}(\eta_0^N)(\testfn) - \Psi_{G_0}(\eta_0)(\testfn)\right] \dconverges \N\left( 0, \bar{\V}_0(\testfn)\right),
\end{align*}
where $\bar{\V}_0(\testfn) = \eta_0(G_0)^{-2}\var_{M_0}\left(G_0(\testfn-\hat{\eta}_0(\testfn))\right)$, which is finite as $\testfn \in \bounded$.
To obtain the result after resampling, we apply \cref{clt:lemma4} below and obtain
\begin{align*}
\sqrt{N}\left[ \hat{\eta}^N_0(\testfn) - \hat{\eta}_0(\testfn)\right] \dconverges \N\left( 0, \V_0(\testfn)\right),
\end{align*}
with $\V_0(\testfn) = \bar{\V}_0(\testfn)+\var_{\hat{\eta}_0}(\testfn)$, which is finite as $\testfn \in \bounded$.

Then, assume that the result holds at time $n-1$.
\cref{clt:lemma1} gives for every $\testfn \in \bounded$
\begin{equation*}
\sqrt{N}\left[\predictiveN(\testfn) - \predictive(\testfn)\right] \dconverges \N\left( 0, \widetilde{\V}_n(\testfn)\right)
\end{equation*}
where
\begin{equation*}
\widetilde{\V}_n(\testfn) = \hat{\eta}_{n-1}\left( \var_{M_n}(\testfn)\right) + \V_{n-1}(M_n\testfn)
\end{equation*}
is finite because $\hat{\eta}_{n-1}\left( \var_{M_n}(\testfn)\right)  \leq \supnorm{\testfn}^2$.
\cref{clt:weight_comparison} shows that
\begin{equation*}
\sqrt{N}\left[\predictiveN(\weightN\testfn) - \predictive(\weight\testfn)\right] \dconverges \N\left( 0, \widehat{\V}_{n}(\testfn)\right)
\end{equation*}
for any $\testfn\in \bounded$, where $\widehat{\V}_n(\testfn) = \var_{\predictive}(\weight\testfn) + \bar{\V}_{n-1}(K_n(U_n\testfn))$ is finite since the functions $\weight\testfn$ are bounded.
Then, \cref{clt:lemma3} give for every $\testfn \in \bounded$
\begin{equation*}
\sqrt{N}\left[\bgN(\predictiveN)(\testfn) - \update(\testfn)\right] \dconverges \N\left( 0, \bar{\V}_n(\testfn)\right)
\end{equation*}
where
\begin{equation*}
\bar{\V}_n(\testfn) = \frac{1}{\predictive(G_n)^2}\widehat{\V}_n\left(\testfn - \bg(\predictive)(\testfn)\right).
\end{equation*}
Finally, by applying \cref{clt:lemma4} we conclude that for every $\testfn \in \bounded$
\begin{equation*}
\sqrt{N}\left[ \updateN(\testfn) - \update(\testfn)\right] \dconverges \N\left( 0, \V_n(\testfn)\right)
\end{equation*}
where
\begin{equation*}
\V_n(\testfn) = \var_{\update}(\testfn) + \bar{\V}_n(\testfn)
\end{equation*}
is finite as $\var_{\update}(\testfn) \leq \supnorm{\testfn}^2$.
Hence, the result holds for all $n\in \mathbb{N}$ by induction.
\end{proof}

\subsection{Auxiliary results for the proof of \cref{prop:clt}}
We collect here four auxiliary results for the proof of the CLT in \cref{prop:clt}. \cref{clt:lemma1,clt:lemma3,clt:lemma4} are well known in the SMC literature and can be found in \cite{chopin2004central} and are combined in our case with \cref{clt:weight_comparison} which controls the influence of the approximation in the importance weights.

The first results concerns the mutation step, and is given in \citet[Lemma A.1]{chopin2004central}.
\begin{lemma}[Mutation]
\label{clt:lemma1}
Assume that for any $\testfn\in \bounded$ and for some finite $\V_{n-1}(\testfn)$
\begin{equation}
\label{eq:p(t-1)hpCLT}
\sqrt{N}\left[\hat{\eta}^N_{n-1}(\testfn) - \hat{\eta}_{n-1}(\testfn)\right] \dconverges \N\left( 0, \V_{n-1}(\testfn)\right)
\end{equation}
then, after the mutation step
\begin{equation*}
\sqrt{N}\left[\predictiveN(\testfn) - \predictive(\testfn)\right] \dconverges \N\left( 0, \widetilde{\V}_{n}(\testfn)\right)
\end{equation*}
for any $\testfn\in \bounded$, where $\widetilde{\V}_{n}(\testfn) = \hat{\eta}_{n-1}\left(\var_{M_n}( \testfn)\right) + \V_{n-1}(M_n\testfn)$.
\end{lemma}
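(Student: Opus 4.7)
The strategy is to split the mutation-step fluctuation into a piece driven by the fresh $M_n$-draws at time $n$ and a piece inherited from the particle approximation at time $n-1$. Using $\predictive = \hat{\eta}_{n-1}M_n$ I would write
\begin{align*}
\sqrt{N}\bigl[\predictiveN(\testfn) - \predictive(\testfn)\bigr] = A_N + B_N,
\end{align*}
with $A_N := \sqrt{N}\bigl[\predictiveN(\testfn) - \hat{\eta}^N_{n-1}M_n(\testfn)\bigr]$ and $B_N := \sqrt{N}\bigl[\hat{\eta}^N_{n-1}M_n(\testfn) - \hat{\eta}_{n-1}M_n(\testfn)\bigr]$. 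The quantity $B_N$ is $\sfmutation$-measurable whereas $A_N$ captures only the noise injected by the mutation step at time $n$.

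The term $B_N$ is immediate from the inductive hypothesis: since $\supnorm{M_n\testfn}\leq \supnorm{\testfn}<\infty$, the function $M_n\testfn$ belongs to $\bounded$ and \eqref{eq:p(t-1)hpCLT} applied with test function $M_n\testfn$ yields $B_N \dconverges \N\bigl(0,\V_{n-1}(M_n\testfn)\bigr)$.

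For $A_N$ I would argue conditionally on $\sfmutation$. Given this $\sigma$-field, the particles $X_n^i\sim M_n(\widetilde{X}_{n-1}^i,\cdot)$ are independent with conditional mean $M_n\testfn(\widetilde{X}_{n-1}^i)$, so $A_N = N^{-1/2}\sum_{i=1}^N\bigl[\testfn(X_n^i) - M_n\testfn(\widetilde{X}_{n-1}^i)\bigr]$ is a sum of centred, conditionally independent and uniformly bounded summands. Its conditional variance equals $\hat{\eta}^N_{n-1}\bigl(\var_{M_n}(\testfn)\bigr)$, which by \cref{prop:wlln} converges in probability to the deterministic limit $\hat{\eta}_{n-1}\bigl(\var_{M_n}(\testfn)\bigr)$. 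The pointwise bound $|\testfn(x) - M_n\testfn(y)|\leq 2\supnorm{\testfn}$ immediately yields a Lindeberg-type condition, so a conditional Lindeberg--Feller central limit theorem (or a direct expansion of the conditional characteristic function summand-by-summand) gives
\begin{align*}
\Exp\bigl[e^{itA_N}\mid \sfmutation\bigr] \pconverges \exp\Bigl(-\tfrac{t^2}{2}\,\hat{\eta}_{n-1}(\var_{M_n}(\testfn))\Bigr)
\end{align*}
for every $t\in\real$.

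The last step is to combine the two limits via characteristic functions. Since the conditional characteristic function above is bounded by $1$ and its limit is deterministic, a dominated-convergence argument together with \eqref{eq:p(t-1)hpCLT} yields
\begin{align*}
\Exp\bigl[e^{it(A_N+B_N)}\bigr] = \Exp\Bigl[e^{itB_N}\,\Exp\bigl[e^{itA_N}\mid\sfmutation\bigr]\Bigr] \longrightarrow \exp\bigl(-\tfrac{t^2}{2}\hat{\eta}_{n-1}(\var_{M_n}(\testfn))\bigr)\cdot\exp\bigl(-\tfrac{t^2}{2}\V_{n-1}(M_n\testfn)\bigr),
\end{align*}
which is the characteristic function of $\N\bigl(0,\widetilde{\V}_n(\testfn)\bigr)$; L\'evy's continuity theorem then delivers the claim. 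I expect the delicate point to be precisely this combination step: one has to verify rigorously that replacing the random conditional characteristic function of $A_N$ by its deterministic limit incurs only a vanishing error, so that the asymptotic factorisation into two independent Gaussian pieces is legitimate rather than requiring a more careful stable-convergence argument.
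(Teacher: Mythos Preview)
Your proposal is correct and follows essentially the same route as the paper: the same decomposition $A_N+B_N$ (the paper's $U_N+R_N$), the same conditional martingale-difference/Lindeberg argument for $A_N$ with the conditional variance handled via \cref{prop:wlln}, and the same characteristic-function combination via Slutzky, dominated convergence, and L\'evy's continuity theorem. The ``delicate point'' you flag is exactly what the paper handles by writing $\Phi_{T_N}(u)=\Exp\bigl[e^{iuR_N}\Exp[e^{iuU_N}\mid\sfmutation]\bigr]$ and invoking the continuous mapping theorem plus dominated convergence, so your instinct there is on target.
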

\begin{proof}
As for the proof of convergence in mean of order $p$, consider the following decomposition
\begin{align*}
T_N :=\sqrt{N}\left[\predictiveN(\testfn) - \predictive(\testfn)\right] = \sqrt{N}\left[\predictiveN(\testfn) - \hat{\eta}_{n-1}^NM_n(\testfn)\right] + \sqrt{N}\left[\hat{\eta}_{n-1}^NM_n(\testfn) - \predictive(\testfn)\right] = U_N + R_N
\end{align*}
where $U_N:=  \sqrt{N}\left[\predictiveN(\testfn) - \hat{\eta}_{n-1}^NM_n(\testfn)\right] $ and $R_N:= \sqrt{N}\left[\hat{\eta}_{n-1}^NM_n(\testfn) - \predictive(\testfn)\right]$.

Consider for $i=1,\ldots, N$ the functions $\Delta_n^i : E\mapsto \mathbb{R}$
\begin{equation*}
\Delta_n^i(x) := \frac{1}{\sqrt{N}}\left[ \testfn(x) -  \Exp\left[\testfn(X_{n}^{i})\mid \sfmutation \right]\right],
\end{equation*}
so that
\begin{align*}
U_N & =\sqrt{N} \left[ \predictiveN(\testfn) - \hat{\eta}_{n-1}^NM_n(\testfn)\right]\\
& = \frac{1}{\sqrt{N}} \sum_{i=1}^N \testfn(X_{n}^{i}) -  \Exp\left[\testfn(X_{n}^{i})\mid \mathcal{G}^N_{n-1} \right]\\
& = \sum_{i=1}^N \Delta_n^i(X^i_n).
\end{align*}
Let us denote $\mathcal{H}_{n, i}^N:=\sfmutation\vee\sigma \left(\left(X_{n}^{j}\right)_{j=1}^i\right)$. 
Conditional on $\sfmutation$, the $\Delta_n^i(X^i_n),\ i=1,\ldots, N$ are independent; in addition, due to the boundedness of $\testfn$,  $(\Delta_n^i(X_n^i), \mathcal{H}_{n, i}^N)_{i=1}^N$ is a square integrable martingale difference sequence which satisfies the Lindeberg condition, and for which
\begin{align*}
\sum_{i=1}^N\var\left( \Delta_n^i(X_n^i ) \mid \mathcal{H}^N_{n, i-1}\right) & = \sum_{i=1}^N\Exp\left[ \Delta_n^i(X_n^i )^2 \mid \mathcal{G}^N_{n-1}\right]\\
& = \frac{1}{N}\sum_{i=1}^N \Exp\left[\left(\testfn(X_{n}^{i}  ) -  \Exp\left[\testfn(X_{n}^{i}  )\mid \mathcal{G}^N_{n-1} \right] \right)^2\mid \mathcal{G}^N_{n-1} \right]\\
& = \frac{1}{N}\sum_{i=1}^N \left(\Exp\left[\testfn(X_{n}^{i} )^2 \mid \mathcal{G}^N_{n-1} \right] -  \Exp\left[\testfn(X_{n}^{i} )\mid \mathcal{G}^N_{n-1} \right] ^2 \right)\\
& = \frac{1}{N}\sum_{i=1}^N \left( M_n\testfn^2(\widetilde{X}_{n-1}^i) - \left(M_n\testfn(\widetilde{X}_{n-1}^i)\right)^2 \right)\\
& = \hat{\eta}_{n-1}^N\left(\var_{M_n}\left( \testfn\right)\right) \rightarrow \hat{\eta}_{n-1}\left(\var_{M_n}\left( \testfn \right)\right)
\end{align*}
in probability as $N\rightarrow\infty$ (\cref{prop:wlln}).
Thus the conditions of a suitable CLT for triangular arrays, such as 
\citet[Theorem A.3]{douc2007limit}, are satisfied and we have that
$ \Exp\left[\exp\left( iu U_N\right) \mid \mathcal{G}_{n-1}^N \right] \pconverges \exp\left( -\hat{\eta}_{n-1}\left(\var_{M_n}\left( \testfn \right)\right) u^2/2 \right)$.
Straightforward application of the hypothesis leads to
\begin{align*}
R_N =\sqrt{N} \left[ \hat{\eta}_{n-1}^N M_n(\testfn) - \predictive(\testfn) \right] = \sqrt{N} \left[ \hat{\eta}_{n-1}^N M_n(\testfn) - \hat{\eta}_{n-1} M_n(\testfn) \right] \dconverges \N\left( 0, \V_{n-1}(M_n\testfn)\right).
\end{align*}
The characteristic function of $T_N$ is
\begin{equation*}
\Phi_{T_N}(u) = \Exp\left[\exp\left( iuT_N\right) \right] = \Exp\left[\exp\left( iu R_N\right) \Exp\left[\exp\left( iu U_N\right) \mid \mathcal{G}_{n-1}^N \right]\right].
\end{equation*}
Thus, by the continuous mapping Theorem and  Slutzky's Lemma,
\begin{equation*}
\exp\left( iu R_N\right)\Exp\left[\exp\left( iu U_N\right)\mid \mathcal{G}_{n-1}^N \right] \dconverges \exp\left( -\hat{\eta}_{n-1}\left(\var_{M_n}\left( \testfn \right)\right) u^2/2 + iu Z\right)
\end{equation*}
where $Z\sim \N(0, \V_{n-1}(M_n\testfn))$.
Hence, by the dominated convergence theorem for any real $u$,
\begin{align*}
\Phi_{T_N}(u) = \Exp\left[\exp\left( iu R_N\right)\Exp\left[\exp\left( iu U_N \right) \mid \mathcal{G}_{n-1}^N \right]\right] \rightarrow \exp\left( -(\hat{\eta}_{n-1}\left(\var_{M_n}\left( \testfn \right)\right) +\V_{n-1}(M_n\testfn)) u^2/2\right)
\end{align*}
and L\'evy's continuity Theorem (e.g. \citet[Theorem 1, page 322]{shiryaev1996probability}) gives that $T_N$ follows a Normal distribution with mean 0 and variance $\widetilde{\V}_n(\testfn) = \hat{\eta}_{n-1}\left(\var_{M_n}( \testfn)\right) + \V_{n-1}(M_n\testfn)$.
\end{proof}

We next show that the approximate weights also give rise to a random variable converging to a Normal distribution:
\begin{lemma}[Weight Comparison]
\label{clt:weight_comparison}
Assume that for any $\testfn\in \bounded$ and for some finite $\V_{n-1}(\testfn)$
\begin{equation*}
\sqrt{N}\left[\Psi_{G_{n-1}^N}(\eta^N_{n-1})(\testfn) - \Psi_{G_{n-1}}(\eta_{n-1})(\testfn)\right] \dconverges \N\left( 0, \bar{\V}_{n-1}(\testfn)\right)
\end{equation*}
then, 
\begin{equation*}
\sqrt{N}\left[\predictiveN(\weightN\testfn) - \predictive(\weight\testfn)\right] \dconverges \N\left( 0, \widehat{\V}_{n}(\testfn)\right)
\end{equation*}
for any $\testfn\in \bounded$, where $\widehat{\V}_n(\testfn) = \var_{\predictive}(\weight\testfn) + \bar{\V}_{n-1}(K_n(U_n\testfn))$.
\end{lemma}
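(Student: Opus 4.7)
The strategy is to mimic the argument used for Lemma~\ref{clt:lemma1}, pivoting on the conditional expectation identity of Proposition~\ref{prop:conditional_expe} to separate the sampling error contributed by the current mutation step from that already present in $\mathcal{F}_{n-1}^N$. Specifically, I would decompose
$$
T_N := \sqrt{N}\left[\predictiveN(\weightN\testfn) - \predictive(\weight\testfn)\right] = U_N + R_N,
$$
with $U_N := \sqrt{N}\bigl[\predictiveN(\weightN\testfn) - \Exp[\predictiveN(\weightN\testfn)\mid\mathcal{F}_{n-1}^N]\bigr]$ and, by Proposition~\ref{prop:conditional_expe} combined with identity \eref{eq:conditional_expe2},
$R_N = \sqrt{N}\bigl[\Psi_{G_{n-1}^N}(\eta_{n-1}^N)(K_n(U_n\testfn)) - \Psi_{G_{n-1}}(\eta_{n-1})(K_n(U_n\testfn))\bigr]$. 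Because $K_n(U_n\testfn)$ is bounded by $\alpha\supnorm{U_n}\supnorm{\testfn}$ under Assumption~\ref{ass:weak}, the inductive hypothesis applied to the test function $K_n(U_n\testfn)$ yields $R_N \dconverges \N(0,\bar{\V}_{n-1}(K_n(U_n\testfn)))$, and crucially $R_N$ is $\mathcal{F}_{n-1}^N$-measurable.

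For $U_N$, conditionally on $\mathcal{F}_{n-1}^N$ the particles $X_n^1,\ldots,X_n^N$ are i.i.d.\ draws from $\Psi_{G_{n-1}^N}(\eta_{n-1}^N)M_n$, since multinomial resampling mixes $\widetilde{X}_{n-1}^i$ according to $W_{n-1}^i$ and mutation is then independent across particles. Setting $\Delta_n^i := N^{-1/2}\bigl(\weightN(X_n^i)\testfn(X_n^i) - \Exp[\weightN(X_n^i)\testfn(X_n^i)\mid\mathcal{F}_{n-1}^N]\bigr)$ produces a conditionally centred triangular array whose summands are uniformly bounded by $2m_g\supnorm{\testfn}/\sqrt{N}$ under Assumptions~\ref{ass:weak} and~\ref{ass:smc2}, so the Lindeberg condition is immediate and it suffices to identify the probability limit of the conditional variance in order to invoke the same martingale CLT for triangular arrays used in Lemma~\ref{clt:lemma1}.

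The key computation, and the step I expect to be the main obstacle, is this conditional-variance limit. Using the Radon--Nikod\'ym definition $\weightN = \rmd \Psi_{G_{n-1}^N}(\eta_{n-1}^N)(U_n\cdot K_n)/\rmd \Psi_{G_{n-1}^N}(\eta_{n-1}^N)M_n$, one rewrites
$$
\sum_{i=1}^N \Exp\bigl[(\Delta_n^i)^2\mid\mathcal{F}_{n-1}^N\bigr] = \Psi_{G_{n-1}^N}(\eta_{n-1}^N)(K_n(U_n\weightN\testfn^2)) - \bigl(\Psi_{G_{n-1}^N}(\eta_{n-1}^N)(K_n(U_n\testfn))\bigr)^2.
$$
The subtlety is that the integrand $\weightN$ in the first term is itself random. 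However, the argument of Lemma~\ref{wlln:lemma_weight} adapts: $\weightN$ is uniformly bounded under the standing assumptions, and both its numerator and denominator, viewed as functionals of $\Psi_{G_{n-1}^N}(\eta_{n-1}^N)$, converge in probability to their deterministic counterparts by Proposition~\ref{prop:wlln}, so a continuous-mapping argument yields convergence in probability of the displayed expression to $\hat{\eta}_{n-1}(K_n(U_n\weight\testfn^2)) - \hat{\eta}_{n-1}(K_n(U_n\testfn))^2$. Applying \eref{eq:conditional_expe2} twice, with test functions $\weight\testfn^2$ and $\testfn$ respectively, rewrites this limit as $\predictive(\weight^2\testfn^2)-\predictive(\weight\testfn)^2 = \var_{\predictive}(\weight\testfn)$, whence $\Exp[\exp(iuU_N)\mid\mathcal{F}_{n-1}^N]\pconverges \exp(-\var_{\predictive}(\weight\testfn)u^2/2)$.

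The two pieces are combined exactly as in the closing argument of Lemma~\ref{clt:lemma1}: writing $\Exp[e^{iuT_N}] = \Exp\bigl[e^{iuR_N}\Exp[e^{iuU_N}\mid\mathcal{F}_{n-1}^N]\bigr]$, Slutzky's lemma and the continuous mapping theorem give joint convergence of $(R_N, \Exp[e^{iuU_N}\mid\mathcal{F}_{n-1}^N])$ to $(Z, \exp(-\var_{\predictive}(\weight\testfn)u^2/2))$ with $Z\sim\N(0,\bar{\V}_{n-1}(K_n(U_n\testfn)))$, and the product $e^{iuR_N}\Exp[e^{iuU_N}\mid\mathcal{F}_{n-1}^N]$ is bounded by $1$, so dominated convergence followed by L\'evy's continuity theorem yields $T_N \dconverges \N(0,\widehat{\V}_n(\testfn))$ with $\widehat{\V}_n(\testfn) = \var_{\predictive}(\weight\testfn)+\bar{\V}_{n-1}(K_n(U_n\testfn))$, as required.
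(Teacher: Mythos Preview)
Your proposal is correct and follows essentially the same route as the paper: the same decomposition $T_N=U_N+R_N$ pivoting on \cref{prop:conditional_expe}, the same martingale-array CLT for $U_N$, and the same characteristic-function combination at the end. Your rewriting of the conditional second moment as $\Psi_{G_{n-1}^N}(\eta_{n-1}^N)(K_n(U_n\weightN\testfn^2))$ is equivalent to the paper's $\nu_{n-1}^N((\weightN\testfn)^2)$ via one application of the Radon--Nikod\'ym identity.

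The only substantive difference is how the conditional-variance limit is justified. You invoke a continuous-mapping heuristic together with the idea of \cref{wlln:lemma_weight}; the paper instead isolates this as a separate lemma (\cref{clt:variance_convergence}) and proves it by an $\lp$ argument: one splits $\Psi_{G_{n-1}^N}(\eta_{n-1}^N)\bigl(M_n((\weightN\testfn)^2)\bigr)-\Psi_{G_{n-1}^N}(\eta_{n-1}^N)\bigl(M_n((\weight\testfn)^2)\bigr)$, bounds it via $\Exp\bigl[\weightN(X_n^i)-\weight(X_n^i)\mid\mathcal{F}_{n-1}^N\bigr]$, and then controls the latter with \cref{prop:lp}. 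Your sketch is on the right track but is the one place that would need tightening to be fully rigorous, since pointwise convergence of numerator and denominator of $\weightN(x_n)$ does not immediately yield convergence of the integral against a measure that itself varies with $N$; the paper's $\lp$ route sidesteps this. (Minor: $K_n(U_n\testfn)$ is bounded by $\supnorm{U_n}\supnorm{\testfn}$, no factor $\alpha$, since $K_n$ is a Markov kernel.)
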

\begin{proof}
Consider the following decomposition
\begin{align*}
T_N &:=\sqrt{N}\left[\predictiveN(\weightN\testfn) - \predictive(\weight\testfn)\right] \\
&= \sqrt{N}\left[\predictiveN(\weightN\testfn) - \Exp\left[\predictiveN(\weightN\testfn)\mid \mathcal{F}_{n-1}^N\right]\right] + \sqrt{N}\left[\Exp\left[\predictiveN(\weightN\testfn)\mid \mathcal{F}_{n-1}^N\right] - \predictive(\weight\testfn)\right] \\
&= U_N + R_N
\end{align*}
where $U_N:=  \sqrt{N}\left[\predictiveN(\weightN\testfn) - \Exp\left[\predictiveN(\weightN\testfn)\mid \mathcal{F}_{n-1}^N\right]\right] $ and $R_N:= \sqrt{N}\left[\Exp\left[\predictiveN(\weightN\testfn)\mid \mathcal{F}_{n-1}^N\right] - \predictive(\weight\testfn)\right]$.

To control $U_N$ let us consider the sequence of functions $\Delta_n^i : E \mapsto \mathbb{R}$ for $i=1,\ldots, N$ 
\begin{equation*}
\Delta_n^i(x) :=  \frac{1}{\sqrt{N}}\left[\weightN(x)\testfn(x) -  \Exp\left[\weightN(X_n^i)\testfn(X_{n}^i) \mid \mathcal{F}_{n-1}^N\right]\right].
\end{equation*}
so that
\begin{align*}
U_N & =\sqrt{N}\left[\predictiveN(\weightN\testfn) - \Exp\left[\predictiveN(\weightN\testfn)\mid \mathcal{F}_{n-1}^N\right]\right]\\
& = \frac{1}{\sqrt{N}} \sum_{i=1}^N G_n^N(X_n^i)\testfn(X_n^i) -  \Exp\left[G_n^N(X_n^i)\testfn(X_{n}^i) \mid \mathcal{F}_{n-1}^N\right]\\
& = \sum_{i=1}^N \Delta_n^i(X^i_n).
\end{align*}
Let us denote $\mathcal{H}_{n, i}^N:= \mathcal{F}_{n-1}^N\vee \sigma\left(\left(X_{n}^{j}\right)_{j=1}^i\right)$. 
Conditional on $\mathcal{F}_{n-1}^N$, the $\Delta_n^i(X^i_n),\ i=1,\ldots, N$ are independent; in addition, due to the boundedness of $\testfn$ and $\weightN$,  $(\Delta_n^i(X_n^i), \mathcal{H}_{n, i}^N)_{i=1}^N$ is a square integrable martingale sequence which satisfies the Lindeberg condition, and for which
\begin{align*}
\sum_{i=1}^N\var\left( \Delta_n^i(X_n^i ) \mid \mathcal{H}^N_{n, i-1}\right) & = \sum_{i=1}^N\Exp\left[ \Delta_n^i(X_n^i )^2 \mid \mathcal{F}^N_{n-1}\right]\\
& = \frac{1}{N}\sum_{i=1}^N \Exp\left[\left(\weightN(X_n^i)\testfn(X_{n}^{i}  ) -  \Exp\left[\weightN(X_n^i)\testfn(X_{n}^{i}  )\mid \mathcal{F}^N_{n-1} \right] \right)^2\mid \mathcal{F}^N_{n-1} \right]\\
& = \frac{1}{N}\sum_{i=1}^N \left(\Exp\left[\left(\weightN(X_n^i)\testfn(X_{n}^{i} )\right)^2 \mid \mathcal{F}^N_{n-1} \right] -  \Exp\left[\weightN(X_n^i)\testfn(X_{n}^{i} )\mid \mathcal{F}^N_{n-1} \right] ^2 \right)\\
& = \sum_{j=1}^N W_{n-1}^jM_n\left((\weightN\testfn)^2\right)(X_{n-1}^j)  -  \left( \sum_{j=1}^N W_{n-1}^jM_n\left(\weightN\testfn\right)(X_{n-1}^j)\right) ^2 \\
&= \var_{\nu^N_{n-1}}\left(\weightN\testfn\right)
\end{align*}
where the last equality follows using~\eqref{eq:conditional_expe1} with $\nu^N_{n-1}:= \Psi_{G_{n-1}^N}(\eta_{n-1}^N)M_n$. 
Using \cref{clt:variance_convergence}, we have
\begin{align*}
\var_{\nu^N_{n-1}}\left(\weightN\testfn\right) \pconverges \var_{\eta_n}\left(\weight\testfn\right).
\end{align*}
Thus, we can apply a CLT for triangular arrays (see 
\citet[Theorem A.3]{douc2007limit}) to obtain $\Exp\left[\exp\left( iu U_N\right) \mid \mathcal{F}_{n-1}^N \right]\pconverges \exp\left( -\var_{\eta_n}(\testfn) u^2/2 \right)$.

As in the proof of \cref{clt:lemma1}, straightforward application of the hypothesis leads to
\begin{align*}
R_N &=\sqrt{N}\left[\Exp\left[\predictiveN(\weightN\testfn)\mid \mathcal{F}_{n-1}^N\right] - \predictive(\weight\testfn)\right]\\
& = \sqrt{N} \left[ \Psi_{G_{n-1}^N}(\eta_{n-1}^N)(K_n(U_n\testfn)) - \Psi_{G_{n-1}}(\eta_{n-1})(K_n(U_n\testfn)) \right] \dconverges \N\left( 0, \bar{\V}_{n-1}(K_n(U_n\testfn))\right).
\end{align*}
The characteristic function of $T_N$ is 
\begin{equation*}
\Phi_{T_N}(u) = \Exp\left[\exp\left( iuT_N\right) \right] = \Exp\left[\exp\left( iu R_N\right) \Exp\left[\exp\left( iu U_N\right) \mid \mathcal{F}_{n-1}^N \right]\right].
\end{equation*}
By the continuous mapping Theorem and  Slutzky's Lemma,
\begin{align*}
\exp\left( iu R_N\right)\Exp\left[\exp\left( iu U_N\right)\mid \mathcal{F}_{n-1}^N \right] \dconverges \exp\left( -\var_{\predictive}(\weight\testfn) u^2/2 + iu Z\right)
\end{align*}
where $Z\sim \N(0, \bar{\V}_{n-1}(K_n(U_n\testfn)))$.
Then, by the same argument used in \cref{clt:lemma1}, $T_N$ follows a Normal distribution with mean 0 and variance $\widehat{\V}_n(\testfn) = \var_{\predictive}(\weight\testfn) + \bar{\V}_{n-1}(K_n(U_n\testfn)) $.
\end{proof}

Combining \cref{clt:weight_comparison} above with the argument in \citet[Lemma A.2]{chopin2004central} we obtain an equivalent result for the approximate reweighting step:
\begin{lemma}[Approximate reweighting]
\label{clt:lemma3}
Assume that for any $\testfn\in \bounded$ and some finite $\widehat{\V}_n(\testfn)$
\begin{equation*}
\sqrt{N}\left[\predictiveN( \weightN\testfn) - \predictive( \weight\testfn)\right] \dconverges \N\left( 0,\widehat{\V}_n(\testfn)\right)
\end{equation*}
then
\begin{equation*}
\sqrt{N}\left[\bgN(\predictiveN)(\testfn) - \bg(\predictive)(\testfn)\right] \dconverges \N\left( 0, \bar{\V}_{n}(\testfn)\right)
\end{equation*}
for any $\testfn\in \bounded$, where $\bar{\V}_{n}(\testfn) = \frac{1}{\predictive(U_n)^2}\widehat{\V}_n\left(\testfn - \bg(\predictive)(\testfn)\right)$.
\end{lemma}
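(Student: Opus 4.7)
The plan is the standard centering-and-ratio argument, tailored so that the hypothesis (a CLT for $\predictiveN(\weightN\cdot)$) can be applied directly. First I would write
\begin{align*}
\bgN(\predictiveN)(\testfn) - \bg(\predictive)(\testfn)
 = \frac{\predictiveN(\weightN\testfn)}{\predictiveN(\weightN)} - \frac{\predictive(\weight\testfn)}{\predictive(\weight)},
\end{align*}
and reduce to a single numerator by adding and subtracting $\bg(\predictive)(\testfn)\,\predictiveN(\weightN)$ on the top. Defining the centred test function $\bar{\testfn} := \testfn - \bg(\predictive)(\testfn)$, which is bounded because $\testfn\in\bounded$ and $|\bg(\predictive)(\testfn)|\leq\supnorm{\testfn}$, this gives the identity
\begin{align*}
\sqrt{N}\left[\bgN(\predictiveN)(\testfn) - \bg(\predictive)(\testfn)\right]
 = \frac{1}{\predictiveN(\weightN)}\,\sqrt{N}\left[\predictiveN(\weightN\bar{\testfn}) - \predictive(\weight\bar{\testfn})\right],
\end{align*}
where I have used that $\predictive(\weight\bar{\testfn}) = \predictive(\weight\testfn) - \bg(\predictive)(\testfn)\predictive(\weight) = 0$ by the definition of $\bg(\predictive)$.

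Next I would handle the two factors separately. The numerator is immediate from the hypothesis applied to the bounded function $\bar{\testfn}$:
\begin{align*}
\sqrt{N}\left[\predictiveN(\weightN\bar{\testfn}) - \predictive(\weight\bar{\testfn})\right] \dconverges \N\!\left(0,\widehat{\V}_n(\bar{\testfn})\right).
\end{align*}
For the denominator I would apply the hypothesis with $\testfn\equiv 1$ (or equivalently invoke the WLLN of \cref{prop:wlln}) to conclude $\predictiveN(\weightN) - \predictive(\weight) = O_{\pr}(N^{-1/2})$, and in particular $\predictiveN(\weightN) \pconverges \predictive(\weight) = \predictive(G_n) > 0$ under \cref{ass:smc2}.

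Finally, combining these two facts through Slutzky's lemma yields
\begin{align*}
\sqrt{N}\left[\bgN(\predictiveN)(\testfn) - \bg(\predictive)(\testfn)\right] \dconverges \N\!\left(0,\frac{\widehat{\V}_n(\bar{\testfn})}{\predictive(G_n)^2}\right) = \N\!\left(0,\bar{\V}_n(\testfn)\right),
\end{align*}
which is the claim, since by definition $\bar{\V}_n(\testfn) = \predictive(G_n)^{-2}\widehat{\V}_n(\testfn - \bg(\predictive)(\testfn))$. There is no real obstacle here beyond bookkeeping; the only subtlety worth checking is that the centering constant $\bg(\predictive)(\testfn)$ is deterministic, so $\bar{\testfn}$ is a legitimate bounded test function to feed into the hypothesis, and that $\predictive(G_n)$ is bounded away from zero so the ratio in Slutzky's lemma is well-defined.
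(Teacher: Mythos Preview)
Your argument is correct and, in fact, slightly more direct than the paper's own proof. The paper proceeds via Cram\'er--Wold plus the delta method: it packages $\psi=(\bar{\testfn},1)$ into a vector, establishes joint asymptotic normality of $\sqrt{N}\bigl(\predictiveN(\weightN\bar{\testfn})-\predictive(\weight\bar{\testfn}),\ \predictiveN(\weightN)-\predictive(\weight)\bigr)$ by applying the hypothesis to arbitrary linear combinations, and then applies the delta method to $s(u,v)=u/v$ at $(0,\predictive(\weight))$. Because the gradient there is $(\predictive(\weight)^{-1},0)$, the covariance of the denominator and the cross term drop out, leaving exactly $\predictive(G_n)^{-2}\widehat{\V}_n(\bar{\testfn})$. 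Your route exploits the same vanishing of $\predictive(\weight\bar{\testfn})$ algebraically up front, so that the ratio collapses to a single CLT-type numerator divided by a consistent denominator, and Slutzky finishes. This buys you a shorter argument that avoids joint convergence and the delta method entirely; the paper's approach is more general (it would still work if the centring were not so clean), but for this statement your version is equally rigorous and arguably preferable.
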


\begin{proof}
Let $\bar{\testfn} := \testfn - \bg(\predictive)(\testfn)$ and consider the vector function
\begin{align*}
\psi(x) = \begin{pmatrix}
\psi_1(x)\\
\psi_2(x)
\end{pmatrix} = \begin{pmatrix}
\bar{\testfn}(x)\\
1
\end{pmatrix}
\end{align*}
with bounded components $\psi_1$ and $\psi_2$.

Combining the hypothesis with the Cram\'er-Wold Theorem (e.g. \citet[Theorem 29.4, page 383]{billingsley1995measure}) yields
\begin{align*}
\sqrt{N}\begin{bmatrix}
\predictiveN( \weightN\bar{\testfn}) - \predictive( \weight\bar{\testfn})\\
\predictiveN( \weightN) - \predictive( \weight)\\
\end{bmatrix}\dconverges \N\left( 0, \Sigma_n(\psi)\right), \qquad\qquad \Sigma_n(\psi) := \begin{pmatrix}
\widehat{\V}_n(\psi_1) & \textrm{cov}_n(\psi_1, \psi_2)\\
\textrm{cov}_n(\psi_1, \psi_2)& \widehat{\V}_n(\psi_2)\\
\end{pmatrix},
\end{align*}
where $\textrm{cov}_n(\psi_1, \psi_2)$ denotes the covariance operator induced by $\widehat{\V}_n$.

Applying the $\delta$-method with function $s(u, v)= u/v$ and observing that $\predictive( \weight\bar{\testfn}) = 0$ and $\predictive(\weight)>0$, gives
\begin{align*}
\sqrt{N}\left[ \frac{\predictiveN( \weightN\bar{\testfn})}{\predictiveN( \weightN)}- \frac{0}{\predictive( \weight)}\right] &= \sqrt{N}\left[ \bgN(\predictiveN)(\testfn) -\bg(\predictive)(\testfn) \right]\\
 & \dconverges \N\left( 0, \nabla^T s(0, \predictive(\weight))\Sigma_n\left(\psi\right)\nabla s(0, \predictive(\weight))\right).
\end{align*}
The gradient of $s$ evaluated at $(0, \predictive(\weight))$ is
\begin{align*}
\nabla^T s(u, v)\mid_{ (u, v)= (0, \predictive(\weight))} = \left(\frac{1}{v}, -\frac{u}{v^2}\right)\big|_{ (u, v)= (0, \predictive(\weight))} = \left( \frac{1}{\predictive(\weight)}, 0\right),
\end{align*}
hence
\begin{align*}
    \sqrt{N}\left[ \bgN(\predictiveN)(\testfn) -\bg(\predictive)(\testfn) \right]\ &\dconverges
 \N\left( 0, \frac{1}{\predictive(G_n)^2}\widehat{\V}_n\left(\psi_1\right)\right)\\
 & \dconverges \N\left( 0, \frac{1}{\predictive(G_n)^2}\widehat{\V}_n\left(\testfn - \bg(\predictive)(\testfn)\right)\right),
\end{align*}
giving $\bar{\V}_{n}(\testfn) = \frac{1}{\predictive(G_n)^2}\widehat{\V}_n\left(\testfn - \bg(\predictive)(\testfn)\right)$.
\end{proof}

Finally, following \citet[Lemma A.3]{chopin2004central} we establish convergence for the multinomial resampling step.
As shown in \citet[Theorem 7]{Gerber2019} the asymptotic variance in the multinomial resampling case provides an upper bound to that obtained with more sophisticated resampling schemes.

\begin{lemma}[Multinomial resampling]
\label{clt:lemma4}
Assume that for any $\testfn\in \bounded$ and for some finite $\bar{\V}_{n}(\testfn)$
\begin{equation}
\label{eq:l4hp}
\sqrt{N}\left[\bgN(\predictiveN)(\testfn) - \update(\testfn)\right] = \sqrt{N}\left[\bgN(\predictiveN)(\testfn) - \bg(\predictive)(\testfn)\right] \dconverges \N\left( 0, \bar{\V}_{n}(\testfn)\right)
\end{equation}
then after the multinomial resampling step
\begin{equation*}
\sqrt{N}\left[\updateN(\testfn) - \hat{\eta}_n(\testfn)\right] \dconverges \N(0, \V_n(\testfn))
\end{equation*}
for any $\testfn\in \bounded$, where $\V_n(\testfn)=\var_{\update}(\testfn) + \bar{\V}_{n}(\testfn)$.
\end{lemma}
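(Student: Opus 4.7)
The plan is to mimic the structure of \cref{clt:lemma1} (mutation step), since multinomial resampling has the same conditionally-i.i.d. structure: given $\sfresampling$, the resampled particles $\widetilde{X}_n^{1},\dots,\widetilde{X}_n^{N}$ are i.i.d. draws from $\bgN(\predictiveN)$. I would start by writing the decomposition
\begin{align*}
T_N := \sqrt{N}\left[\updateN(\testfn) - \update(\testfn)\right] = \underbrace{\sqrt{N}\left[\updateN(\testfn) - \bgN(\predictiveN)(\testfn)\right]}_{U_N} + \underbrace{\sqrt{N}\left[\bgN(\predictiveN)(\testfn) - \bg(\predictive)(\testfn)\right]}_{R_N}.
\end{align*}
The term $R_N$ converges in distribution to $\N(0,\bar{\V}_n(\testfn))$ directly by the hypothesis \eqref{eq:l4hp}, so the work lies entirely with $U_N$ and with combining the two pieces.

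For $U_N$, I would define the triangular array
$$
\Delta_n^i(x) := \frac{1}{\sqrt{N}}\bigl[\testfn(x) - \bgN(\predictiveN)(\testfn)\bigr], \qquad i=1,\dots,N,
$$
so that $U_N = \sum_{i=1}^N \Delta_n^i(\widetilde{X}_n^{i})$. Setting $\mathcal{H}_{n,i}^N := \sfresampling \vee \sigma\bigl(\widetilde{X}_n^{1},\dots,\widetilde{X}_n^{i}\bigr)$, the sequence $(\Delta_n^i(\widetilde{X}_n^{i}),\mathcal{H}_{n,i}^N)_{i=1}^N$ is a square-integrable martingale difference sequence (the summands are uniformly bounded by $2\supnorm{\testfn}/\sqrt{N}$, so the Lindeberg condition is trivially verified). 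Its conditional variance is
\begin{align*}
\sum_{i=1}^N \var\bigl(\Delta_n^i(\widetilde{X}_n^{i}) \mid \mathcal{H}_{n,i-1}^N\bigr) &= \bgN(\predictiveN)(\testfn^2) - \bgN(\predictiveN)(\testfn)^2 = \var_{\bgN(\predictiveN)}(\testfn),
\end{align*}
which converges in probability to $\var_{\update}(\testfn)$ by applying \cref{prop:wlln} (together with the continuous mapping theorem) to both $\testfn$ and $\testfn^2$. The CLT for triangular arrays (e.g., \citet[Theorem A.3]{douc2007limit}) then yields
$$
\Exp\bigl[\exp(iuU_N) \mid \sfresampling\bigr] \pconverges \exp\bigl(-\var_{\update}(\testfn) u^2/2\bigr).
$$

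To combine $U_N$ and $R_N$, I would use the identical characteristic-function argument employed at the end of \cref{clt:lemma1}: since $R_N$ is $\sfresampling$-measurable, the characteristic function of $T_N$ factors as
$$
\Phi_{T_N}(u) = \Exp\bigl[\exp(iuR_N)\,\Exp[\exp(iuU_N)\mid \sfresampling]\bigr],
$$
and applying Slutzky's lemma, the continuous mapping theorem, and dominated convergence gives
$$
\Phi_{T_N}(u) \rightarrow \exp\bigl(-(\var_{\update}(\testfn)+\bar{\V}_n(\testfn))u^2/2\bigr),
$$
whence L\'evy's continuity theorem delivers $T_N \dconverges \N(0,\V_n(\testfn))$ with $\V_n(\testfn)=\var_{\update}(\testfn)+\bar{\V}_n(\testfn)$. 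There is no genuine obstacle here; the only point requiring care is confirming that the conditional variance converges in probability, which is immediate from the already-established WLLN for $\bgN(\predictiveN)$ applied to the bounded test functions $\testfn$ and $\testfn^2$.
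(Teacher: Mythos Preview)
Your proposal is correct and follows essentially the same approach as the paper's proof: the same decomposition $T_N=U_N+R_N$, the same martingale-difference array based on $\sfresampling$, convergence of the conditional variance via \cref{prop:wlln}, and the identical characteristic-function/Slutzky/dominated-convergence argument to combine the pieces. If anything, your definition of $\Delta_n^i$ is cleaner than the paper's (which contains a minor notational slip in writing $\Exp[\testfn(\widetilde X_n^i)\mid\sfresampling]$), but the substance is the same.
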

\begin{proof}
As for the proof of convergence in mean of order $p$, consider the following decomposition
\begin{align*}
T_N :=\sqrt{N}\left[\updateN(\testfn) - \update(\testfn)\right] = \sqrt{N}\left[\updateN(\testfn) - \bgN(\predictiveN)(\testfn)\right] + \sqrt{N}\left[\bgN(\predictiveN)(\testfn) - \update(\testfn)\right] = U_N + R_N
\end{align*}
where $U_N:=  \sqrt{N}\left[\updateN(\testfn) - \bgN(\predictiveN)(\testfn)\right] $ and $R_N:= \sqrt{N}\left[\bgN(\predictiveN)(\testfn) - \update(\testfn)\right]$.

Consider for $i=1,\ldots, N$  the functions $\Delta_n^i(\widetilde{X}_n^i ) : E \mapsto \mathbb{R}$ defined for the resampled particles
\begin{equation*}
\Delta_n^i(\widetilde{X}_n^i ) :=  \frac{1}{\sqrt{N}}\left[\testfn(\widetilde{X}_{n}^{i}) -  \Exp\left[\testfn(\widetilde{X}_{n}^i  ) \mid \mathcal{F}^N_{n}\right] \right] = \frac{1}{\sqrt{N}}\left[\testfn(\widetilde{X}_{n}^{i}) -  N W_n^i\testfn(X_{n}^i)\right],
\end{equation*}
so that
\begin{align*}
U_N &= \sqrt{N}\left[ \updateN(\testfn) - \bgN(\predictiveN)(\testfn)\right] \\
&= \frac{1}{\sqrt{N}}\left[ \sum_{i=1}^N \left( \testfn(\widetilde{X}_{n}^{i})- N W_n^i\testfn(X_{n}^i)\right)\right] \\
&=  \sum_{i=1}^N \Delta_n^i(\widetilde{X}_n^i ).
\end{align*}
Let us denote $\mathcal{H}_{n, i}^N:=\mathcal{F}_{n}^N \vee \sigma\left( \left(X_{n}^{j}\right)_{j=1}^i\right)$. 
Conditionally on $\sfresampling$, the $\testfn(\widetilde{X}_n^i )$ $i=1,\ldots, N$ are independent draws from the categorical distribution with probability of outcome $\testfn(X_{n}^i)$ given by $W_n^i$ for $i=1, \ldots, N$.
It follows that the $\Delta_n^i(\widetilde{X}_n^i )$ for $i=1,\ldots, N$ are conditionally independent random variables with mean 0.
Due to the boundedness of $\testfn$,  $(\Delta_n^i(X_n^i), \mathcal{H}_{n, i}^N)_{i=1}^N$ is a square integrable martingale difference sequence which satisfies the Lindeberg condition, and for which
\begin{align*}
\sum_{i=1}^N \var\left( \Delta_n^i(\widetilde{X}_n^i ) \mid \mathcal{H}^N_{n, i-1}\right) &= \sum_{i=1}^N \Exp\left[ \Delta_n^i(\widetilde{X}_n^i )^2\mid \mathcal{F}^N_{n}\right]\\
&=\frac{1}{N}\sum_{i=1}^N \Exp\left[ \left(\testfn(\widetilde{X}_{n}^{i}) -  \Exp\left[\testfn(\widetilde{X}_{n}^i  ) \mid \mathcal{F}^N_{n}\right]\right)^2\mid \mathcal{F}^N_{n}\right]\\
&=\frac{1}{N}\sum_{i=1}^N \left( \Exp\left[ \testfn(\widetilde{X}_{n}^{i})^2\mid \mathcal{F}^N_{n}\right] -  \Exp\left[\testfn(\widetilde{X}_{n}^i  ) \mid \mathcal{F}^N_{n}\right]^2\right)\\
&= \sum_{i=1}^N W_n^i\testfn^2(X_{n}^i)  -  \left(\sum_{i=1}^N W_n^i\testfn(X_{n}^i)\right)^2\\
& = \var_{\nu^N_{n}}(\testfn),
\end{align*}
where the second to last equality is the variance of the multinomial distribution and $\nu^N_{n}:= \bgN(\predictiveN)$.
As a consequence of the WLLN (\cref{prop:wlln}), we have 
\begin{align*}
\nu^N_{n}(\testfn)= \bgN(\predictiveN)(\testfn) = \frac{\predictiveN(\weightN \testfn)}{\predictiveN(\weightN)} \longrightarrow \frac{\predictive(\weight \testfn)}{\predictive(\weight)} = \update(\testfn)
\end{align*}
in probability.
Hence, 
\begin{align*}
\var_{\nu_n^N}(\testfn) 
 = \nu_n^N(\testfn^2)-\nu_n^N(\testfn)^2 \longrightarrow \hat{\eta}_n(\testfn^2)-\hat{\eta}_n(\testfn)^2=\var_{\update}(\testfn) \qquad \text{in probability as } N\rightarrow\infty.
\end{align*}
We can apply a CLT for triangular arrays (\citet[Theorem A.3]{douc2007limit}) to show $\Exp\left[\exp\left( iu U_N\right) \mid \mathcal{F}_n^N \right]\pconverges \exp\left( -\var_{\update}(\testfn) u^2/2 \right)$.

By applying the hypothesis to $R_N$
\begin{align*}
R_N =\sqrt{N}\left[\bgN(\updateN)(\testfn) - \update(\testfn)\right]\dconverges \N\left( 0, \bar{\V}_{n}(\testfn)\right).
\end{align*}
The characteristic function of $T_N$ is
\begin{equation*}
\Phi_{T_N}(u) = \Exp\left[\exp\left( iuT_N\right) \right] = \Exp\left[\exp\left( iu R_N\right) \Exp\left[\exp\left( iu U_N\right) \mid \mathcal{F}_n^N \right]\right],
\end{equation*}
and, by the continuous mapping Theorem and  Slutzky's Lemma,
\begin{align*}
\exp\left( iu R_N\right)\Exp\left[\exp\left( iu U_N\right)\mid \mathcal{F}_n^N \right] \dconverges \exp\left( -\var_{\update}(\testfn) u^2/2 + iu Z\right)
\end{align*}
where $Z\sim \N(0, \bar{\V}_{n}(\testfn))$.
Then, by the same argument used in \cref{clt:lemma1}, $T_N$ follows a Normal distribution with mean 0 and variance $\V_n(\testfn) = \var_{\update}(\testfn) + \bar{\V}_{n}(\testfn)$.
\end{proof}
\subsection{Auxiliary results for the proof of \cref{clt:weight_comparison}}

\begin{lemma}
\label{clt:variance_convergence}
Under \cref{ass:smc0,ass:weak,ass:smc2}, for all $n\geq0$ and for every $\testfn\in\bounded$, we have
\begin{align*}
\var_{\nu^N_{n-1}}\left(\weightN\testfn\right) \pconverges \var_{\eta_n}\left(\weight\testfn\right),
\end{align*}
where $\nu^N_{n-1}=\Psi_{G_{n-1}^N}(\eta_{n-1}^N)M_n$.
\end{lemma}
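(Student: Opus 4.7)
The plan is to split
\[\var_{\nu^N_{n-1}}(\weightN\testfn) = \nu^N_{n-1}(\weightN^2\testfn^2) - \bigl(\nu^N_{n-1}(\weightN\testfn)\bigr)^2\]
and show convergence in probability of each moment to its counterpart under $\predictive$. The algebraic workhorse is that the computation proving \cref{prop:conditional_expe} in \cref{app:expe} goes through verbatim with any $\mathcal{F}_{n-1}^N$-measurable bounded $\phi$ in place of $\testfn$, yielding the identity $\nu^N_{n-1}(\weightN\phi) = \Psi_{G_{n-1}^N}(\eta_{n-1}^N)(K_n(U_n\phi))$. Taking $\phi=\testfn$ gives $\nu^N_{n-1}(\weightN\testfn) = \Psi_{G_{n-1}^N}(\eta_{n-1}^N)(K_n(U_n\testfn))$, which converges in probability to $\hat{\eta}_{n-1}(K_n(U_n\testfn)) = \predictive(\weight\testfn)$ by \cref{prop:wlln} since $K_n(U_n\testfn)$ is bounded under \cref{ass:weak}; the continuous mapping theorem then handles the squared first moment.

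Taking $\phi=\weightN\testfn^2$ in the same identity gives $\nu^N_{n-1}(\weightN^2\testfn^2) = \Psi_{G_{n-1}^N}(\eta_{n-1}^N)(K_n(U_n\weightN\testfn^2))$. I would then add and subtract the deterministic analogue,
\begin{align*}
&\Psi_{G_{n-1}^N}(\eta_{n-1}^N)\bigl(K_n(U_n\weightN\testfn^2)\bigr) - \predictive(\weight^2\testfn^2)\\
&\qquad = \bigl[\Psi_{G_{n-1}^N}(\eta_{n-1}^N)\bigl(K_n(U_n\weight\testfn^2)\bigr) - \hat{\eta}_{n-1}\bigl(K_n(U_n\weight\testfn^2)\bigr)\bigr] + \Psi_{G_{n-1}^N}(\eta_{n-1}^N)\bigl(K_n(U_n(\weightN-\weight)\testfn^2)\bigr),
\end{align*}
using the Radon-Nikod\'ym identity $\hat{\eta}_{n-1}(K_n(U_n\weight\testfn^2)) = \int\weight^2\testfn^2\,\rmd\predictive$. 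The first bracket vanishes in probability by \cref{prop:wlln} applied to the bounded deterministic test function $K_n(U_n\weight\testfn^2)$, while by $\rmd K_n/\rmd M_n\leq\alpha$ from \cref{ass:weak} the remainder is bounded above by $\alpha\supnorm{U_n}\supnorm{\testfn}^2\,\nu^N_{n-1}(|\weightN-\weight|)$. The task therefore reduces to proving $\nu^N_{n-1}(|\weightN-\weight|)\pconverges 0$.

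This last estimate is the main obstacle, and I would obtain it by working with densities with respect to the common dominating measure $\lambda$ implicit in~\eqref{eq:WN_alternative}. Writing $\rho^N := \Psi_{G_{n-1}^N}(\eta_{n-1}^N)(U_nK_n)$, $\rho:=\hat\eta_{n-1}(U_nK_n)$, and letting $a^N,a,b^N,b$ be the respective densities of $\rho^N,\rho,\nu^N_{n-1},\predictive$ w.r.t.\ $\lambda$, the algebraic identity $a^Nb - ab^N = (a^N-a)b - a(b^N-b)$ together with $\weight=a/b$ produces the pointwise bound $|\weightN-\weight|\,b^N \leq |a^N-a| + \weight\,|b^N-b|$, whence
\[
\nu^N_{n-1}(|\weightN-\weight|) \leq \|a^N-a\|_{L^1(\lambda)} + m_g\|b^N-b\|_{L^1(\lambda)} = \|\rho^N-\rho\|_{\mathrm{TV}} + m_g\|\nu^N_{n-1}-\predictive\|_{\mathrm{TV}}.
\]
Each total-variation distance vanishes in probability via Scheff\'e's lemma: the total masses $\rho^N(\unit) = \Psi_{G_{n-1}^N}(\eta_{n-1}^N)(K_nU_n)$ and $\nu^N_{n-1}(\unit)=1$ converge to $\rho(\unit)$ and $1$ by \cref{prop:wlln}, while the pointwise convergences $a^N(x)\pconverges a(x)$ and $b^N(x)\pconverges b(x)$ at $\lambda$-almost every $x$ follow, via a truncation argument, from \cref{prop:wlln} applied to $\Psi_{G_{n-1}^N}(\eta_{n-1}^N)$ tested against the bounded approximants $U_n(\cdot,x)k_n(\cdot,x)\wedge L$ and $m_n(\cdot,x)\wedge L$ with $L\uparrow\infty$, exploiting the $\lambda$-integrability of $a,b$ inherited from $\rho(\unit)<\infty$ and $\predictive(\unit)=1$. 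The delicate point is the truncation step, since $k_n(\cdot,x)$ and $m_n(\cdot,x)$ need not be globally bounded under the paper's standing assumptions; this is controlled by exploiting the uniform bounds $\weightN,\weight\in[m_g^{-1},m_g]$ from \cref{ass:weak,ass:smc2} together with a routine uniform-integrability argument.
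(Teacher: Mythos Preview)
Your decomposition into first and second moments, the treatment of the first moment via the identity $\nu^N_{n-1}(\weightN\testfn)=\Psi_{G_{n-1}^N}(\eta_{n-1}^N)(K_n(U_n\testfn))$, and the splitting of the second-moment error into a WLLN piece for the deterministic function $K_n(U_n\weight\testfn^2)$ plus a remainder controlled by $\nu^N_{n-1}(|\weightN-\weight|)$ are all sound. The genuine gap is in the last step, where you try to show $\nu^N_{n-1}(|\weightN-\weight|)\pconverges 0$ via total variation and Scheff\'e.

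Two problems arise there. First, the paper never posits a common dominating measure $\lambda$: the derivatives in~\eqref{eq:WN_alternative} are of one (random) measure with respect to another, not with respect to a fixed reference, so the densities $a^N,b^N,a,b$ you manipulate are not available under the stated hypotheses. Second, even granting such a $\lambda$, the pointwise convergences $a^N(x)\pconverges a(x)$ and $b^N(x)\pconverges b(x)$ would require applying \cref{prop:wlln} to the test functions $x_{n-1}\mapsto U_n(x_{n-1},x)k_n(x_{n-1},x)$ and $x_{n-1}\mapsto m_n(x_{n-1},x)$, which are \emph{not} bounded under \cref{ass:smc0,ass:weak,ass:smc2}. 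Your truncation would need uniform-in-$N$ control of tails such as $\Psi_{G_{n-1}^N}(\eta_{n-1}^N)\bigl((m_n(\cdot,x)-L)_+\bigr)$, and the two-sided bounds $\weightN,\weight\in[m_g^{-1},m_g]$ do not deliver this: they constrain a \emph{ratio} of densities, and both numerator and denominator can be simultaneously large. Labelling this ``routine'' understates a real obstacle.

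The paper sidesteps this detour entirely. Rather than bounding $\nu^N_{n-1}(|\weightN-\weight|)$, it compares $\nu^N_{n-1}((\weightN\testfn)^2)$ directly to $\nu^N_{n-1}((\weight\testfn)^2)$ and controls the difference in $\mathbb{L}_p$ by invoking \cref{prop:lp}. The point is that the relevant conditional expectation $\Exp[\weightN(X_n^i)-\weight(X_n^i)\mid\mathcal{F}_{n-1}^N]$ equals $\Psi_{G_{n-1}^N}(\eta_{n-1}^N)(K_nU_n)-\Psi_{G_{n-1}^N}(\eta_{n-1}^N)(M_nG_n)$, and each summand is $\Psi_{G_{n-1}^N}(\eta_{n-1}^N)$ applied to a \emph{fixed bounded} function, so the $\mathbb{L}_p$ inequality yields an $O(N^{-1/2})$ bound with no density machinery. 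The remaining convergence $\nu^N_{n-1}((\weight\testfn)^2)\pconverges\predictive((\weight\testfn)^2)$ is then a straightforward application of \cref{prop:wlln} to the deterministic bounded test function $M_n((\weight\testfn)^2)$.
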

\begin{proof}
Using the definition of variance we have
\begin{align}
\label{eq:var_decomp}
\var_{\nu^N_{n-1}}\left(\weightN\testfn\right) &= \Psi_{G_{n-1}^N}(\eta_{n-1}^N)\left(M_n((G_n^N\testfn)^2)\right)-\Psi_{G_{n-1}^N}(\eta_{n-1}^N)\left(M_n(G_n^N\testfn)\right)^2.
\end{align}
For the second term in~\eqref{eq:var_decomp} we can proceed as in the proof of \cref{prop:conditional_expe} and find
\begin{align*}
    \Psi_{G_{n-1}^N}(\eta_{n-1}^N)\left(M_n(G_n^N\testfn)\right) &=\sum_{j=1}^N\frac{G_{n-1}^N(X_{n-1}^j)}{\sum_{k=1}^NG_{n-1}^N(X_{n-1}^k)}\int M_n(X_{n-1}^j, \rmd x_n)\weightN(x_n)\testfn(x_n)\\
    &=\Psi_{G_{n-1}^N}(\eta_{n-1}^N)(K_n(\testfn U_n))\pconverges \Psi_{G_{n-1}}(\eta_{n-1})(K_n(\testfn U_n))
\end{align*}
where the convergence follows from the WLLN in \cref{prop:wlln}. The continuous mapping theorem then guarantees $\Psi_{G_{n-1}^N}(\eta_{n-1}^N)\left(M_n(G_n^N\testfn)\right)^2\pconverges \Psi_{G_{n-1}}(\eta_{n-1})(K_n(\testfn U_n))^2$.

To control the first term in~\eqref{eq:var_decomp} use use the $\lp$ inequality established in \cref{prop:lp}.
First, observe that by following the same steps in the proof of \cref{prop:conditional_expe} we have $\Psi_{G_{n-1}^N}(\eta_{n-1}^N)\left(M_n([G_n^N\testfn]^2)\right) = \Exp\left[\eta_n^N([G_n^N\testfn]^2)|\mathcal{F}_{n-1}^N\right]$. We can therefore write
\begin{align*}
&\Exp\left[\left\lvert \Psi_{G_{n-1}^N}(\eta_{n-1}^N)\left(M_n((G_n^N\testfn)^2)\right)- \Psi_{G_{n-1}^N}(\eta_{n-1}^N)\left(M_n((G_n\testfn)^2)\right)\right\rvert^p\right]^{1/p} \\
&\qquad\qquad= \Exp\left[\left\lvert \frac{1}{N}\sum_{i=1}^N\Exp\left[\testfn(X_n^i)^2[\weightN(X^i_n)^2-\weight(X^i_n)^2]|\mathcal{F}_{n-1}^N\right]\right\rvert^p\right]^{1/p}\\
&\qquad\qquad\leq\frac{2\supnorm{\testfn}^2m_g}{N}\sum_{i=1}^N \Exp\left[  \left\lvert\Exp\left[\weightN(X^i_n)-\weight(X^i_n)|\mathcal{F}_{n-1}^N\right]\right\rvert^p\right]^{1/p}.
\end{align*}
Computing the conditional expectation above we find
\begin{align*}
    \Exp\left[\weightN(X^i_n)-\weight(X^i_n)|\mathcal{F}_{n-1}^N\right] &= \Psi_{G_{n-1}^N}(\eta_{n-1}^N)(K_n( U_n)) - \Psi_{G_{n-1}^N}(\eta_{n-1}^N)(M_n( G_n)),
\end{align*}
and we obtain the following bound
\begin{align*}
    \Exp\left[  \left\lvert\Exp\left[\weightN(X^i_n)-\weight(X^i_n)|\mathcal{F}_{n-1}^N\right]\right\rvert^p\right]^{1/p} &\leq \Exp\left[  \left\lvert\Psi_{G_{n-1}^N}(\eta_{n-1}^N)(K_n( U_n)) - \Psi_{G_{n-1}}(\eta_{n-1})(M_n( G_n))\right\rvert^p\right]^{1/p}\\
    &+ \Exp\left[  \left\lvert\Psi_{G_{n-1}}(\eta_{n-1})(M_n( G_n))-\Psi_{G_{n-1}^N}(\eta_{n-1}^N)(M_n( G_n))\right\rvert^p\right]^{1/p}\\
    &\leq \bar{C}_{p, n-1}\frac{\supnorm{U_n}}{N^{1/2}} \textcolor{blue}{+} \bar{C}_{p, n-1}\frac{m_g}{N^{1/2}}
\end{align*}
where we used the fact that $G_n = \rmd \hat\eta_{n-1}(U_n \cdot K_n)/\rmd \hat\eta_{n-1} M_n$ in the first expectation and \cref{prop:lp} to obtain the bounds.
It follows that
\begin{align*}
    \Exp\left[\left\lvert \Psi_{G_{n-1}^N}(\eta_{n-1}^N)\left(M_n((G_n^N\testfn)^2)\right)- \Psi_{G_{n-1}^N}(\eta_{n-1}^N)\left(M_n((G_n\testfn)^2)\right)\right\rvert^p\right]^{1/p} \leq \bar{C}_{p, n-1}\frac{2\supnorm{\testfn}^2m_g}{N^{1/2}}(\supnorm{U_n}+m_g),
\end{align*}
and thus $\Psi_{G_{n-1}^N}(\eta_{n-1}^N)\left(M_n((G_n^N\testfn)^2)\right)- \Psi_{G_{n-1}^N}(\eta_{n-1}^N)\left(M_n((G_n\testfn)^2)\right)\pconverges 0$.
The WLLN in \cref{prop:wlln} then guarantees $\Psi_{G_{n-1}^N}(\eta_{n-1}^N)\left(M_n((G_n\testfn)^2)\right)\pconverges \Psi_{G_{n-1}}(\eta_{n-1})\left(M_n((G_n\testfn)^2)\right)$ and, by the continuous mapping theorem, we have
\begin{align*}
    \Psi_{G_{n-1}^N}(\eta_{n-1}^N)\left(M_n((G_n^N\testfn)^2)\right)\pconverges \Psi_{G_{n-1}}(\eta_{n-1})\left(M_n((G_n\testfn)^2)\right).
\end{align*}
A further application of the continuous mapping theorem shows that
\begin{align*}
    \var_{\nu^N_{n-1}}\left(\weightN\testfn\right) &= \Psi_{G_{n-1}^N}(\eta_{n-1}^N)\left(M_n((G_n^N\testfn)^2)\right)-\Psi_{G_{n-1}^N}(\eta_{n-1}^N)\left(M_n(G_n^N\testfn)\right)^2\\
    &\pconverges \Psi_{G_{n-1}}(\eta_{n-1})\left(M_n((G_n\testfn)^2)\right) - \Psi_{G_{n-1}}(\eta_{n-1})\left(M_n(G_n\testfn)\right)^2\\
    &=\predictive((G_n\testfn)^2)-\predictive(G_n\testfn)^2 = \var_{\predictive}(G_n\testfn).
\end{align*}

\end{proof}

\section{Variance expressions}
\label{app:variance}

\subsection{Closed Form Expression}
\label{app:variance_closed}
To obtain the variance expression in \cref{prop:variance_closed} it is useful to introduce the unnormalized measures
\begin{align*}
    \hat{\gamma}_n(\rmd x_n) = \int U_n(x_{n-1}, x_n)K_n(x_{n-1}, \rmd x_n)\hat{\eta}_{n-1}(\rmd x_{n-1}),
\end{align*}
so that $\hat{\eta}_n=\hat{\gamma}_n/\hat{\gamma}_n(1)$, where $1$ denotes the unit function $x\in E\mapsto 1$.
We highlight that in the literature on standard SMC methods the unnormalized measure $\hat{\gamma}_n$ normally has total mass such that $\hat\gamma_n(1)=\prod_{p=0}^n \eta_p(G_p)$. In our case, the present definition of  $\hat{\gamma}_n$ is more convenient as we work more directly on the marginal space $E$.

With the notation introduced above we have
\begin{align*}
\weight(x_n) = \frac{\rmd \hat\eta_{n-1}(U_n \cdot K_n)}{\rmd \hat\eta_{n-1} M_n}(x_n)  = \frac{\rmd \hat\eta_{n-1}(U_n \cdot K_n)}{\rmd \eta_n}(x_n)  = \frac{\rmd \hat\gamma_n}{\rmd \eta_n}(x_n),
\end{align*}
from which it is easy to see that $\predictive(\weight) = \hat{\gamma}_n(1)$ and
\begin{align}
\label{eq:normalized_weight}
\frac{G_n(x_n)}{\predictive(\weight)} = \frac{\rmd \hat{\eta}_n}{\rmd \eta_n}(x_n)
\end{align}
for all $n\geq 0$.

\begin{proof}[Proof of \cref{prop:variance_closed}]
We prove the result by induction.
At $n=0$ we have
\begin{align*}
\bar{\V}_0(\testfn) &= \eta_0\left[\left(\frac{G_0}{\eta_0(G_0)}\left[\testfn - \hat{\eta}_0(\testfn)\right]\right)^2\right].
\end{align*}
At $n=1$, we have from \cref{prop:clt}
\begin{align*}
\bar{\V}_1(\testfn) &= \frac{1}{\eta_1(G_1)^2}\hat{\V}_1\left(\testfn - \hat{\eta}_1(\testfn)\right) =\frac{1}{\eta_1(G_1)^2}\var_{\eta_1}(G_1\left[\testfn - \hat{\eta}_1(\testfn)\right])+ \frac{1}{\eta_1(G_1)^2}\bar{\V}_{0}\left(K_1(U_1\testfn) - \hat{\eta}_1(\testfn)K_1(U_1)\right)\\
&=\frac{1}{\eta_1(G_1)^2}\eta_1\left[\left(G_1\left[\testfn - \hat{\eta}_1(\testfn)\right]\right)^2\right]+ \frac{1}{\eta_1(G_1)^2}\bar{\V}_{0}\left(K_1(U_1\testfn) - \hat{\eta}_1(\testfn)K_1(U_1)\right),
\end{align*}
where the last equality follows from using~\eqref{eq:normalized_weight}. For the second term we have
\begin{align*}
&\frac{1}{\eta_1(G_1)^2}\bar{\V}_{0}\left(K_1(U_1\testfn) - \hat{\eta}_1(\testfn)K_1(U_1)\right)\\
&\qquad= \frac{1}{\eta_1(G_1)^2}\frac{1}{\eta_0(G_0)^2}\eta_0\left[G_0^2\left[K_1(U_1\testfn) - \hat{\eta}_1(\testfn)K_1(U_1) - \hat{\eta}_0(K_1(U_1\testfn) - \hat{\eta}_1(\testfn)K_1(U_1))\right]^2\right]\\
&\qquad= \frac{1}{\eta_1(G_1)^2}\frac{1}{\eta_0(G_0)^2}\eta_0\left[G_0^2\left[K_1(U_1\testfn) - \hat{\eta}_1(\testfn)K_1(U_1) \right]^2\right],
\end{align*}
since $\hat{\eta}_0(K_1(U_1\testfn) - \hat{\eta}_1(\testfn)K_1(U_1)) = \hat{\gamma}_1(\testfn) - \hat{\eta}_1(\testfn)\hat{\gamma}_1(1) =0$.
It follows that
\begin{align*}
\bar{\V}_1(\testfn) &= \frac{1}{\eta_1(G_1)^2}\eta_1\left[\left(G_1\left[\testfn - \hat{\eta}_1(\testfn)\right]\right)^2\right] +  \frac{1}{\eta_1(G_1)^2}\frac{1}{\eta_0(G_0)^2}\eta_0\left[G_0^2\left[K_1(U_1\testfn) - \hat{\eta}_1(\testfn)K_1(U_1) \right]^2\right],
\end{align*}
as required.

Whe now establish the inductive step: assuming~\eqref{eq:variance_expression} holds for $n-1$, then, using~\eqref{eq:normalized_weight},
\begin{align}
\label{eq:variance1}
\bar{\V}_n(\testfn) &= \frac{1}{\predictive(\weight)^2}\widehat{\V}_n\left(\testfn - \update(\testfn)\right)\\
&= \frac{1}{\predictive(\weight)^2}\var_{\predictive}(\weight\left[\testfn - \update(\testfn)\right]) + \frac{1}{\predictive(\weight)^2}\bar{\V}_{n-1}\left(K_n(U_n\left[\testfn - \update(\testfn)\right])\right)\notag\\
&= \frac{1}{\predictive(\weight)^2}\var_{\predictive}(\weight\left[\testfn - \update(\testfn)\right])+ \frac{1}{\predictive(\weight)^2}\bar{\V}_{n-1}\left(K_n(U_n\testfn) - \update(\testfn)K_n(U_n)\right)\notag\\
&= \frac{1}{\predictive(\weight)^2}\predictive\left[\left(\weight\left[\testfn - \update(\testfn)\right]\right)^2\right]+ \frac{1}{\predictive(\weight)^2}\bar{\V}_{n-1}\left(K_n(U_n\testfn) - \update(\testfn)K_n(U_n)\right).\notag
\end{align}
Let us denote $\psi_n:=K_n(U_n\testfn) - \update(\testfn)K_n(U_n)$, using~\eqref{eq:variance_expression} we have by the induction hypothesis that:
\begin{align}
\label{eq:inductive_step}
&\bar{\V}_{n-1}\left(\psi_n\right) =\sum_{k=0}^{n-1}\prod_{j=k}^{n-1}\frac{1}{\eta_{j}(G_{j})^2}\eta_k\left[\left(G_k\left[\Gamma_{k:n-1}\left(\psi_n\right) - \hat{\eta}_{n-1}(\psi_n)\Gamma_{k:n-1}(1)\right]\right)^2\right].
\end{align}
Now, observe that
\begin{align*}
\hat{\eta}_{n-1}(\psi_n) =\hat{\eta}_{n-1}\left(K_n(U_n\testfn) - \update(\testfn)K_n(U_n)\right) 
=\hat{\gamma}_n(\testfn)-\update(\testfn)\hat{\gamma}_n(1)=0,
\end{align*}
and $\Gamma_{k:n-1}\left(K_n(U_n\testfn)\right) =  \Gamma_{k:n-1} \circ \Gamma_n(\testfn) = \Gamma_{k:n}(\testfn)$.
Thus, we can simplify~\eqref{eq:inductive_step} to
\begin{align}
\label{eq:variance2}
\bar{\V}_{n-1}\left(K_n(U_n\testfn) - \update(\testfn)K_n(U_n)\right) &= \sum_{k=0}^{n-1}\prod_{j=k}^{n-1}\frac{1}{\eta_{j}(G_{j})^2}\eta_k\left[\left(G_k\left[\Gamma_{k:n}\left(\testfn\right)-\update(\testfn)\Gamma_{k:n}(1)\right]\right)^2\right].
\end{align}
Plugging~\eqref{eq:variance2} into~\eqref{eq:variance1} we obtain
\begin{align*}
\bar{\V}_n(\testfn) 
&= \frac{1}{\predictive(\weight)^2}\predictive\left[\left(\weight\left[\testfn - \update(\testfn)\right]\right)^2\right] + \frac{1}{\predictive(\weight)^2}\sum_{k=0}^{n-1}\prod_{j=k}^{n-1}\frac{1}{\eta_{j}(G_{j})^2}\eta_k\left[\left(G_k\left[\Gamma_{k:n}\left(\testfn\right)-\update(\testfn)\Gamma_{k:n}(1)\right]\right)^2\right]\\
&=\sum_{k=0}^{n}\prod_{j=k}^{n}\frac{1}{\eta_{j}(G_{j})^2}\eta_k\left[\left(G_k\left[\Gamma_{k:n}\left(\testfn\right)-\update(\testfn)\Gamma_{k:n}(1)\right]\right)^2\right]
\end{align*}
The result follows by induction.
\end{proof}

\subsection{Variance of MPF}
\label{app:variance_mpf}

To obtain the variance expressions for the marginal particle filter discussed in \cref{sec:mpf} we simply plug in the corresponding $U_n, K_n, M_n$ into~\eqref{eq:variance_expression}.

From \cref{prop:variance_closed} we have
\begin{align*}
\bar{\V}_n(\testfn) = \sum_{k=0}^{n}\eta_k\left[\left(G_k\left[\Gamma_{k:n}\left[\testfn - \hat{\eta}_n(\testfn)\right]\right]\right)^2\right]\prod_{j=k}^{n}\frac{1}{\eta_{j}(G_{j})^2}.
\end{align*}
The last term in this sum is given by
\begin{align*}
\eta_n\left[\left(\frac{G_n}{\eta_n(G_n)}\left[\Gamma_{n:n}\left[\testfn - \hat{\eta}_n(\testfn)\right]\right]\right)^2\right] &=\eta_n\left( \left(\frac{\rmd \hat{\eta}_{n}}{\rmd \eta_{n}}\right)^2\left[\testfn - \hat{\eta}_n(\testfn)\right]^2\right)\\
&= \int\frac{p(x_{n}|y_{1:n})^2\left[\testfn(x_n) - \hat{\eta}_n(\testfn)\right]^2}{\int q_n(x_n|x_{n-1}', y_n)p(x_{n-1}'|y_{n-1})\rmd x_{n-1}'}\rmd x_n.
\end{align*}
Using the expressions for  $U_n, K_n, M_n$ in \cref{sec:mpf} we obtain
\begin{align*}
    \eta_j(G_j) = \int g_j(y_j|x_j)f_j(x_j|x_{j-1})p(x_{j-1}|y_{1:j-1})\rmd x_{j-1:j} = p(y_j|y_{1:j-1}),
\end{align*}
and, for any $\testfn \in \bounded$, we have
\begin{align*}
  \left( \prod_{j=k+1}^{n}\frac{1}{\eta_{j}(G_{j})} \right) \Gamma_{k:n}(\testfn)(x_k)&= \int \prod_{j=k+1}^{n}\frac{g_{j}(y_{j}|x_{j}) f_{j}(x_{j}|x_{j-1})}{ p(y_j|y_{1:j-1})}\testfn(x_n) \rmd x_{k+1:n}\\
  &=\int \frac{ p(x_{k:n}|y_{1:n})}{p(x_{k}|y_{1:k})}\testfn(x_n)\rmd x_{k+1:n}\\
   &=\int \frac{ p(x_{k+1:n}|y_{k+1:n}, x_k)p(x_{k}|y_{1:n})}{p(x_{k}|y_{1:k})}\testfn(x_n)\rmd x_{k+1:n}\\
   &= \frac{p(x_{k}|y_{1:n})}{p(x_{k}|y_{1:k})}\int  p(x_{n}|y_{k+1:n}, x_k)\testfn(x_n)\rmd x_{n},
\end{align*}
where we used the fact that
\begin{align}
\label{eq:gamma_simplified}
\frac{ p(x_{k:n}|y_{1:n})}{p(x_{k}|y_{1:k})}=\prod_{j=k+1}^{n}\frac{g_{j}(y_{j}|x_{j}) f_{j}(x_{j}|x_{j-1})}{ p(y_j|y_{1:j-1})}.
\end{align}
For all $k\neq 0$ we have
\begin{align*}
&\prod_{j=k}^{n}\frac{1}{\eta_{j}(G_{j})^2}\eta_k\left[\left(G_k\left[\Gamma_{k:n}\left(\testfn\right)-\update(\testfn)\Gamma_{k:n}(1)\right]\right)^2\right] \\
&\qquad=\int \frac{p(x_k|y_{1:k})^2}{\int q_k(x_k|x_{k-1}, y_k)p(x_{k-1}|y_{1:k-1})\rmd x_{k-1}}\frac{ p(x_{k}|y_{1:n})^2}{p(x_{k}|y_{1:k})^2}\left(\int p(x_{n}|y_{k+1:n}, x_k)\left[\testfn(x_n)-\update(\testfn)\right]\rmd x_{n}\right)^2\rmd x_k\\
&\qquad=\int \frac{p(x_k|y_{1:n})^2}{\int q_k(x_k|x_{k-1}, y_k)p(x_{k-1}|y_{1:k-1})\rmd x_{k-1}}\left(\int p(x_{n}|y_{k+1:n}, x_k)\left[\testfn(x_n)-\hat{\eta}_n(\testfn)\right]\rmd x_{n}\right)^2\rmd x_k,
\end{align*}
and, for $k=0$
\begin{align*}
&\prod_{j=0}^{n}\frac{1}{\eta_{j}(G_{j})^2}\eta_0\left[\left(G_0\left[\Gamma_{0:n}\left(\testfn\right)-\update(\testfn)\Gamma_{0:n}(1)\right]\right)^2\right] \\
&\qquad=\int \frac{p(x_0|y_{1:n})^2}{
q_0(x_0)}\left(\int p(x_{n}|y_{1:n}, x_0)\left[\testfn(x_n)-\update(\testfn)\right]\rmd x_{n}\right)^2\rmd x_0,
\end{align*}
The above gives
\begin{align*}
V_n^{\textrm{MPF}}(\testfn) &=\int  \frac{p(x_0|y_{1:n})^2}{
q_0(x_0)}\left(\int p(x_{n}|y_{1:n}, x_0)\left[\testfn(x_n)-\bar{\testfn}_n\right]\rmd x_{n}\right)^2\rmd x_0\\
&+\sum_{k=1}^{n-1} \int \frac{p(x_k|y_{1:n})^2}{\int q_k(x_k|x_{k-1}, y_k)p(x_{k-1}|y_{1:k-1})\rmd x_{k-1}}\left(\int p(x_{n}|y_{k+1:n}, x_k)\left[\testfn(x_n)-\bar{\testfn}_n\right]\rmd x_{n}\right)^2\rmd x_k\notag\\
&+\int \frac{p(x_{n}|y_{1:n})^2}{\int q_{n}(x_n|x_{n-1}, y_n) p(x_{n-1}|y_{1:n-1})\rmd x_{n-1}} (\testfn(x_n)-\bar{\testfn}_n)^2\rmd x_{n},\notag
\end{align*}
since $\update(\testfn)=\bar{\testfn}_n= \int \testfn(x_n)p(x_n|y_{1:n})\rmd x_n$.
\subsubsection{Variance of BPF}
\label{app:variance_bpf}

The bootstrap particle filter is a special case of MPF in which $q_n=K_n\equiv f_n$. If we plug this into the expression in \cref{cor:mpf} we obtain 
\begin{align*}
    \V_n^{\textrm{BPF}}(\testfn) &= \int  \frac{p(x_0|y_{1:n})^2}{
f_0(x_0)}\left(\int p(x_{n}|y_{1:n}, x_0)\left[\testfn(x_n)-\bar{\testfn}_n\right]\rmd x_{n}\right)^2\rmd x_0\\
    &+\sum_{k=1}^{n-1}\int \frac{p(x_{k}|y_{1:n})^2}{\int f_{k}(x_k|x_{k-1})  p(x_{k-1}|y_{1:k-1})\rmd x_{k-1}}\left(\int p(x_{n}|y_{k+1:n}, x_k)\left[ \testfn(x_n)-\bar{\testfn}_n\right]\rmd x_{n}\right)^2\rmd x_{k}\\
    &+\int \frac{p(x_{n}|y_{1:n})^2}{\int f_{n}(x_n|x_{n-1}) p(x_{n-1}|y_{1:n-1})\rmd x_{n-1}} \left[\testfn(x_n)-\bar{\testfn}_n\right]^2\rmd x_{n}.
\end{align*}
Since $p(x_{k}|y_{1:n}) = \int p(x_{0:k}|y_{1:n})\rmd x_{0:k-1}$ and 
\begin{align*}
\frac{p(x_{k}|y_{1:n})^2}{\int f_{k}(x_k|x_{k-1})  p(x_{k-1}|y_{1:k-1})\rmd x_{k-1}} = \frac{p(x_{0:k}|y_{1:n})^2}{p(x_{0:k-1}|y_{1:k-1})f_{k}(x_k|x_{k-1}) },
\end{align*}
we obtain
\begin{align}
\label{eq:bpf}
        \V_n^{\textrm{BPF}}(\testfn) &= \int  \frac{p(x_0|y_{1:n})^2}{
f_0(x_0)}\left(\int p(x_{n}|y_{1:n}, x_0)\left[\testfn(x_n)-\bar{\testfn}_n\right]\rmd x_{n}\right)^2\rmd x_0\\
    &+\sum_{k=1}^{n-1}\int \frac{p(x_{0:k}|y_{1:n})^2}{p(x_{0:k-1}|y_{1:k-1})f_{k}(x_k|x_{k-1}) }\left(\int p(x_{n}|y_{k+1:n}, x_k)\left[\testfn(x_n)-\bar{\testfn}_n\right]\rmd x_{n}\right)^2\rmd x_{0:k}\notag\\
    &+\int \frac{p(x_{0:n}|y_{1:n})^2}{p(x_{0:n-1}|y_{1:n-1}) f_{n}(x_n|x_{n-1}) }\int \left[\testfn(x_n)-\bar{\testfn}_n\right]^2\rmd x_{0:n},\notag
\end{align}
which coincides with the variance expression in \citet[Section 2.4]{johansen2008note}.

\subsection{Variance of MAPF}
\label{app:variance_mapf}
To obtain the variance expressions for the marginal particle filter discussed in \cref{sec:mapf} we simply plug in the corresponding $U_n, K_n, M_n$ into~\eqref{eq:variance_expression}.

From \cref{prop:variance_closed} we have
\begin{align*}
\bar{\V}_n(\testfn) = \sum_{k=0}^{n}\eta_k\left[\left(G_k\left[\Gamma_{k:n}(\testfn) - \hat{\eta}_n(\testfn)\Gamma_{k:n}(1)\right]\right)^2\right]\prod_{j=k}^{n}\frac{1}{\eta_{j}(G_{j})^2}.
\end{align*}
Since the MAPF estimates are obtained from $\update$ with an additional importance sampling step with weights $\tilde{w}_n$, proceeding as in \cref{clt:lemma3} we obtain that the variance of MAPF is $\V^{\textrm{MAPF}}_n(\testfn) =\bar{\V}_n(\widetilde{w}_n\left[\testfn-\bar{\testfn}_n\right]) $, where $\bar{\testfn}_n:=\int \testfn(x_n)p(x_n|y_{1:n})\rmd x_n$.
Observing that
\begin{align*}
    \update(\widetilde{w}_n\left[\testfn-\bar{\testfn}_n\right]) &=\update\left( \frac{\rmd \pi_n}{\rmd \update}\left[\testfn-\bar{\testfn}_n\right]\right) =0,
\end{align*}
where $\pi_n(\rmd x_n):= p(x_n|y_{1:n})\rmd x_n$, we have that the variance of MAPF is
\begin{align*}
\V^{\textrm{MAPF}}_n(\testfn) 
&= \sum_{k=0}^{n}\eta_k\left[\frac{G_k^2}{\eta_k(G_k)^2}\Gamma_{k:n}\left(\tilde{w}_n\left[\testfn - \bar{\testfn}_n\right]\right)^2\right]\prod_{j=k+1}^{n}\frac{1}{\eta_{j}(G_{j})^2}.
\end{align*}
The last term in this sum is given by
\begin{align*}
    \eta_n\left[\frac{G_n^2}{\eta_{n}(G_{n})^2}\Gamma_{n:n}\left(\tilde{w}_n\left[\testfn - \bar{\testfn}_n\right]\right)^2\right]
    &= \int\frac{p(x_{n}|y_{1:n})^2\left[\testfn(x_n) - \bar{\testfn}_n\right]^2}{\int q_n(x_n|x_{n-1}', y_n)\hat{\eta}_{n-1}(\rmd x_{n-1}')}\rmd x_n.
\end{align*}
Using the expression for $\tilde{w}_n$ in~\eqref{eq:inferential_weights} and that of $K_n, U_n$ given in \cref{sec:mapf} and~\eqref{eq:gamma_simplified}, we find that
\begin{align}
\label{eq:mapf_step}
   & \prod_{j=k+1}^{n}\frac{1}{\eta_{j}(G_{j})}\Gamma_{k:n}\left(\tilde{w}_n\left[\testfn - \bar{\testfn}_n\right]\right)(x_k) \\
   &\qquad=  \prod_{j=k+1}^{n}\frac{1}{\eta_{j}(G_{j})}\int \frac{g_{k+1}(y_{k+1}|x_{k+1})\tilde{p}(y_{k+2}|x_{k+1}) }{\tilde{p}(y_{k+1}|x_{k})}f_{k+1}(x_{k+1}|x_k)\int\dots \notag\\
   &\qquad\qquad\dots \int \frac{g_{n}(y_{n}|x_{n})\tilde{p}(y_{n+1}|x_{n})}{\tilde{p}(y_{n}|x_{n-1})} f_{n}(x_{n}|x_{n-1})\frac{\rmd \pi_n}{\rmd \update}(x_n)\left[\testfn(x_n) - \bar{\testfn}_n\right] \rmd x_{k+1:n}\notag\\
   &\qquad=\int \left(\prod_{j=k+1}^{n}\frac{g_{j}(y_{j}|x_{j}) f_{j}(x_{j}|x_{j-1})}{\eta_{j}(G_{j})}\right)\frac{\tilde{p}(y_{n+1}|x_{n})}{\tilde{p}(y_{k+1}|x_{k})}\frac{\rmd \pi_n}{\rmd \update}(x_n)\left[\testfn(x_n) - \bar{\testfn}_n\right] \rmd x_{k+1:n}\notag\\
    &\qquad=\left(\prod_{j=k+1}^{n}\frac{p(y_j|y_{1:j-1})}{\eta_j(G_j)}\right)\int \frac{p(x_{k:n}|y_{1:n})}{p(x_k|y_{1:k})}\frac{\tilde{p}(y_{n+1}|x_{n})}{\tilde{p}(y_{k+1}|x_{k})}\frac{\rmd \pi_n}{\rmd \update}(x_n)\left[\testfn(x_n) - \bar{\testfn}_n\right] \rmd x_{k+1:n}\notag\\
    &\qquad=\left(\prod_{j=k+1}^{n}\frac{p(y_j|y_{1:j-1})}{\eta_j(G_j)}\right)\frac{p(x_{k}|y_{1:n})}{p(x_k|y_{1:k})}\int p(x_{k+1:n}|x_k, y_{1:n})\frac{\tilde{p}(y_{n+1}|x_{n})}{\tilde{p}(y_{k+1}|x_{k})}\frac{\rmd \pi_n}{\rmd \update}(x_n)\left[\testfn(x_n) - \bar{\testfn}_n\right] \rmd x_{k+1:n}.\notag
\end{align}
Since $\update(\rmd x_n)= \tilde{p}(y_{n+1}|x_{n})\pi_n(\rmd x_n)/\int \tilde{p}(y_{n+1}|x_{n}')\pi_n(\rmd x_n')$ and $\pi_n(\rmd x_n) = p(x_n|y_{1:n})\rmd x_n$, we also have that
\begin{align*}
\tilde{p}(y_{n+1}|x_{n})\frac{\rmd \pi_n}{\rmd \update}(x_n) &= \int \tilde{p}(y_{n+1}|x'_{n})p(x_n'|y_{1:n})\rmd x_n':= \tilde{p}(y_{n+1}|y_{1:n})
\end{align*}
and
\begin{align*}
\eta_j(G_j) = p(y_j|y_{1:j-1})\frac{ \int p(x_{j}|y_{1:j})\tilde{p}(y_{j+1}|x_{j})\rmd x_{j}}{\int p(x_{j-1}|y_{1:j-1})\tilde{p}(y_{j}|x_{j-1})\rmd x_{j-1}} = p(y_j|y_{1:j-1})\frac{\tilde{p}(y_{j+1}|y_{1:j})}{\tilde{p}(y_{j}|y_{1:j-1})},
\end{align*}
from which follows
\begin{align*}
\prod_{j=k+1}^{n}\frac{ p(y_j|y_{1:j-1})}{\eta_j(G_j)} &= \frac{\tilde{p}(y_{k+1}|y_{1:k})}{\tilde{p}(y_{n+1}|y_{1:n})}.
\end{align*}
Using the above we can simplify~\eqref{eq:mapf_step} to
\begin{align*}
 & \prod_{j=k+1}^{n}\frac{1}{\eta_{j}(G_{j})}\Gamma_{k:n}\left(\tilde{w}_n\left[\testfn - \bar{\testfn}_n\right]\right)(x_k) \\
 &\qquad=\frac{p(x_{k}|y_{1:n})}{p(x_k|y_{1:k})}\frac{\tilde{p}(y_{k+1}|y_{1:k})}{\tilde{p}(y_{k+1}|x_{k})}\int p(x_{n}| y_{k+1:n}, x_k)\left[\testfn(x_n) - \bar{\testfn}_n\right] \rmd x_{n},
\end{align*}
so that for $k\neq0$ we have
 \begin{align*}
&\eta_k\left[\frac{G_k^2}{\eta_k(G_k)^2}\Gamma_{k:n}\left(\tilde{w}_n\left[\testfn - \bar{\testfn}_n\right]\right)^2\right]\prod_{j=k+1}^{n}\frac{1}{\eta_{j}(G_{j})^2} \\
&\qquad=\int\frac{p(x_{k}|y_{1:n})^2}{\int q_k(x_k|x_{k-1}', y_k)\hat{\eta}_{k-1}(\rmd x_{k-1}')}\left(\int p(x_n|y_{k+1:n}, x_k)\left[\testfn(x_n)-\bar{\testfn}_n\right]\rmd x_{n}\right)^2\rmd x_k,
 \end{align*}
and, for $k=0$,
\begin{align*}
&\eta_0\left[\frac{G_0^2}{\eta_0(G_0)^2}\Gamma_{0:n}\left(\tilde{w}_n\left[\testfn - \bar{\testfn}_n\right]\right)^2\right]\prod_{j=1}^{n}\frac{1}{\eta_{j}(G_{j})^2} \\
&\qquad=\int\frac{p(x_{0}|y_{1:n})^2}{q_0(x_0)}\left(\int p(x_n|y_{1:n}, x_0)\left[\testfn(x_n)-\bar{\testfn}_n\right]\rmd x_{n}\right)^2\rmd x_0,
 \end{align*}
Combining the expressions above we obtain
\begin{align*}
\V^{\textrm{MAPF}}_n(\testfn) 
&= \int \frac{p(x_{0}|y_{1:n})^2}{q_0(x_0)}\left(\int p(x_n|y_{1:n}, x_0)\left[\testfn(x_n)-\bar{\testfn}_n\right]\rmd x_{n}\right)^2\rmd x_0\\
&+\int\frac{p(x_{k}|y_{1:n})^2}{\int q_k(x_k|x_{k-1}', y_k)\hat{\eta}_{k-1}(\rmd x_{k-1}')}\left(\int p(x_n|y_{k+1:n}, x_k)\left[\testfn(x_n)-\bar{\testfn}_n\right]\rmd x_{n}\right)^2\rmd x_k\\
&+\int\frac{p(x_{n}|y_{1:n})^2\left[\testfn(x_n) - \bar{\testfn}_n\right]^2}{\int q_n(x_n|x_{n-1}', y_n)\hat{\eta}_{n-1}(\rmd x_{n-1}')}\rmd x_n.
 \end{align*}

\subsubsection{Variance of FA-MAPF}
\label{app:variance_fa-mapf}

The variance expression for FA-MAPF follows by plugging into the above $M_0 =q_0= f_0$,  $\tilde{p}(y_n|x_{n-1}) = p(y_n|x_{n-1})$ and $q_n(x_n|x_{n-1}, y_n) = \frac{g_n(y_n|x_n)f_n(x_n|x_{n-1})}{p(y_n|x_{n-1})}$.
Observing that 
\begin{align*}
\int q_n(x_n|x_{n-1}, y_n)\hat{\eta}_{n-1}(\rmd x_{n-1}) &=\int\frac{ g_n(y_n|x_n) f_n(x_{n-1}, x_n)}{p(y_n|x_{n-1})}\frac{p(x_{n-1}|y_{1:n-1})p(y_n|x_{n-1})}{\int p(x_{n-1}'|y_{1:n-1})p(y_n|x_{n-1}') \rmd x_{n-1}'}\rmd x_{n-1}'\\
&=\frac{g_n(y_n|x_n)\int  f_n(x_{n-1}, x_n)p(x_{n-1}|y_{1:n-1})\rmd x_{n-1}}{\int \int g_n(y_n|x_n) f_n(x_{n-1}', x_n)\rmd x_n p(x_{n-1}'|y_{1:n-1})\rmd x_{n-1}'}=p(x_n|y_{1:n})
\end{align*}
 for all $n\geq 1$, we have
\begin{align*}
\V^{\textrm{FA-MAPF}}_n(\testfn) 
&= \int \frac{p(x_{0}|y_{1:n})^2}{f_0(x_0)}\left(\int p(x_n|y_{1:n}, x_0)\left[\testfn(x_n)-\bar{\testfn}_n\right]\rmd x_{n}\right)^2\rmd x_0\\
&+\int\frac{ p(x_{k}|y_{1:n})^2}{p(x_{k}|y_{1:k})}\left(\int p(x_n|y_{k+1:n}, x_k)\left[\testfn(x_n)-\bar{\testfn}_n\right]\rmd x_{n}\right)^2\rmd x_k\\
&+\int p(x_{n}|y_{1:n})\left[\testfn(x_n) - \bar{\testfn}_n\right]^2\rmd x_n,
 \end{align*}
which coincides with the variance expression in \citet[Section 2.4]{johansen2008note}.

\subsection{Variance Comparison}
\label{app:variance_comparison}
We now compare the variance expressions in \cref{cor:mpf,cor:mapf} with those of the corresponding non-marginal algorithms obtained in \citet{johansen2008note}.
\subsubsection{Marginal Particle Filters}
\label{sec:comparison_mpf}

The MPF variance in \cref{cor:mpf} can be decomposed as $V_n^{\textrm{MPF}}(\testfn) =\sum_{k=0}^{n} \sigma_{n,k}^{\textrm{MPF}}(\testfn)$ where
\begin{align*}
\sigma_{n,0}^{\textrm{MPF}}(\testfn)&=\int  \frac{p(x_0|y_{1:n})^2}{
q_0(x_0)}\left(\int p(x_{n}|y_{1:n}, x_0)\left[\testfn(x_n)-\bar{\testfn}_n\right]\rmd x_{n}\right)^2\rmd x_0,\\
\sigma_{n,k}^{\textrm{MPF}}(\testfn)&= \int \frac{p(x_k|y_{1:n})^2}{\int q_k(x_k|x_{k-1}, y_k)p(x_{k-1}|y_{1:k-1})\rmd x_{k-1}}\left(\int p(x_{n}|y_{k+1:n}, x_k)\left[\testfn(x_n)-\bar{\testfn}_n\right]\rmd x_{n}\right)^2\rmd x_k
\end{align*}
for $k\in\{1,\ldots,n-1\}$, and
\begin{align*}
\sigma_{n,n}^{\textrm{MPF}}(\testfn)&=\int \frac{p(x_{n}|y_{1:n})^2}{\int q_{n}(x_n|x_{n-1}, y_n) p(x_{n-1}|y_{1:n-1})\rmd x_{n-1}} (\testfn(x_n)-\bar{\testfn}_n)^2\rmd x_{n},\notag
\end{align*}
while the variance of a standard (i.e. not marginal) particle filter given in  \citet[Section 2.4]{johansen2008note} is given by $V_n^{\textrm{PF}}(\testfn) =\sum_{k=0}^{n} \sigma_{n,k}^{\textrm{PF}}(\testfn)$ where
\begin{align*}
\sigma_{n,0}^{\textrm{PF}}(\testfn)&=\int  \frac{p(x_0|y_{1:n})^2}{
q_0(x_0)}\left(\int p(x_{n}|y_{1:n}, x_0)\left[\testfn(x_n)-\bar{\testfn}_n\right]\rmd x_{n}\right)^2\rmd x_0,\\
\sigma_{n,k}^{\textrm{PF}}(\testfn)&= \int \frac{p(x_{0:k}|y_{1:n})^2}{ q_k(x_k|x_{k-1}, y_k)p(x_{0:k-1}|y_{1:k-1})}\left(\int p(x_{n}|y_{k+1:n}, x_k)\left[\testfn(x_n)-\bar{\testfn}_n\right]\rmd x_{n}\right)^2\rmd x_{0:k}
\end{align*}
for $k\in\{1,\ldots,n-1\}$, and
\begin{align*}
\sigma_{n,n}^{\textrm{PF}}(\testfn)&=\int \frac{p(x_{0:n}|y_{1:n})^2}{ q_{n}(x_n|x_{n-1}, y_n) p(x_{0:n-1}|y_{1:n-1})} (\testfn(x_n)-\bar{\testfn}_n)^2\rmd x_{0:n}.\notag
\end{align*}
From the above it is easy to see that $\sigma_{n,0}^{\textrm{MPF}}(\testfn)=\sigma_{n,0}^{\textrm{PF}}(\testfn)$.
Using~\eqref{eq:gamma_simplified} we have
\begin{align*}
p(x_k|y_{1:n}) &=\frac{\int p(x_{0:n},y_{1:n})\rmd x_{0:k-1}\rmd x_{k+1:n}}{p(y_{1:n})}\\
&=\frac{ g_k(y_k|x_k)}{p(y_k|y_{1:k-1})}\int f_k(x_k|x_{k-1})p(x_{0:k-1}|y_{1:k-1})\rmd x_{0:k-1}\int \frac{ p(x_{k:n}|y_{1:n})}{p(x_{k}|y_{1:k})}\rmd x_{k+1:n}
\end{align*}
and
\begin{align*}
p(x_{0:k}|y_{1:n}) &=\frac{\int p(x_{0:n},y_{1:n})\rmd x_{k+1:n}}{p(y_{1:n})}\\
&= \frac{g_k(y_k|x_k)}{p(y_k|y_{1:k-1})} f_k(x_k|x_{k-1})p(x_{0:k-1}|y_{1:k-1})\int \frac{ p(x_{k:n}|y_{1:n})}{p(x_{k}|y_{1:k})}\rmd x_{k+1:n},
\end{align*}
and we can write for all $k\geq 1$
\begin{align*}
&\sigma_{n,k}^{\textrm{PF}}(\testfn) - \sigma_{n,k}^{\textrm{MPF}}(\testfn) \\
&\qquad= \int  g_k(y_k|x_k)^2\left(\int \frac{ f_k(x_k|x_{k-1})^2p(x_{k-1}|y_{1:k-1})}{q_k(x_k|x_{k-1}, y_k)}\rmd x_{k-1}-\frac{\left(\int f_k(x_k|x_{k-1})p(x_{k-1}|y_{1:k-1})\rmd x_{k-1}\right)^2}{\int q_k(x_k|x_{k-1}, y_k)p(x_{k-1}|y_{1:k-1})\rmd x_{k-1}}\right)\\
&\qquad\qquad\times\psi_k(x_k)\rmd x_{k}
\end{align*}
where we defined
\begin{align*}
    \psi_k(x_k):= \frac{1}{p(y_k|y_{1:k-1})^2} \left(\int \frac{ p(x_{k:n}|y_{1:n})}{p(x_{k}|y_{1:k})}\rmd x_{k+1:n}\right)^2\left(\int p(x_{n}|y_{k+1:n}, x_k)\left[\testfn(x_n)-\bar{\testfn}_n\right]\rmd x_{n}\right)^2
\end{align*}
for $1\leq k <n$ and $\psi_n(x_n):=\left[\testfn(x_n)-\bar{\testfn}_n\right]^2/p(y_n|y_{1:n-1})^2$ for $k=n$.
Since $\psi_k$ is always positive, we have that $\sigma_{n,k}^{\textrm{PF}}(\testfn) \geq \sigma_{n,k}^{\textrm{MPF}}(\testfn)$
if for all $x_k\in E$
\begin{align}
   \label{eq:assumption_mpf}
    g_k(y_k|x_k)^2\int\frac{ f_k(x_k|x_{k-1})^2p(x_{k-1}|y_{1:k-1})}{q_k(x_k|x_{k-1}, y_k)}\rmd x_{k-1}\geq g_k(y_k|x_k)^2\frac{\left( \int f_k(x_k|x_{k-1})p(x_{k-1}|y_{1:k-1})\rmd x_{k-1}\right)^2}{\int q_k(x_k|x_{k-1}, y_k)p(x_{k-1}|y_{1:k-1})\rmd x_{k-1}}.
\end{align}
The condition in~\eqref{eq:assumption_mpf} is equivalent to
\begin{align*}
&\int \left[G_k^{\textrm{PF}}(x_{k-1}, x_k)\right]^2
q_k(x_k|x_{k-1}, y_k)p(x_{k-1}|y_{1:k-1})\rmd x_{k-1}\\
&\qquad\qquad\geq \left[G_k^{\textrm{MPF}}(x_k)\right]^2\int q_k(x_k|x_{k-1}, y_k)p(x_{k-1}|y_{1:k-1})\rmd x_{k-1}
\end{align*}
where
\begin{align*}
    G_k^{\textrm{PF}}(x_{k-1}, x_k) = \frac{g_k(y_k|x_k)f_k(x_k|x_{k-1})}{q_k(x_k|x_{k-1}, y_k)}
\end{align*}
and $G_k^{\textrm{MPF}}( x_k)$ is given in \cref{sec:mpf}. For any fixed $x_k\in E$, let us denote by $\nu_k(\rmd x_{k-1}, x_{k})$ the probability distribution obtained as
\begin{align*}
   \nu_k(\rmd x_{k-1}, x_{k}) = \frac{q_k(x_k|x_{k-1}, y_k)p(x_{k-1}|y_{1:k-1})\rmd x_{k-1}}{\int q_k(x_k|x_{k-1}', y_k)p(x_{k-1}'|y_{1:k-1})\rmd x_{k-1}'} ,
\end{align*}
then the condition above is equivalent to
\begin{align*}
    \mathbb{E}_{\nu_k(\cdot, x_{k})}\left[\left[G_k^{\textrm{PF}}(\cdot, x_k)\right]^2\right] \geq \left[G_k^{\textrm{MPF}}(x_k)\right]^2.
\end{align*}
A simple application of Jensen's inequality gives
\begin{align*}
    \mathbb{E}_{\nu_k(\cdot, x_{k})}\left[\left[G_k^{\textrm{PF}}(\cdot, x_k)\right]^2\right] &\geq \mathbb{E}_{\nu_k(\cdot, x_{k})}\left[G_k^{\textrm{PF}}(\cdot, x_k)\right]^2\\
    &= \left(\int \frac{q_k(x_k|x_{k-1}, y_k)p(x_{k-1}|y_{1:k-1})}{\int q_k(x_k|x_{k-1}', y_k)p(x_{k-1}'|y_{1:k-1})\rmd x_{k-1}'}\frac{g_k(y_k|x_k)f_k(x_k|x_{k-1})}{q_k(x_k|x_{k-1}, y_k)}\rmd x_{k-1}\right)^2\\
    &= g_k(y_k|x_k)^2\frac{\left(\int f_k(x_k|x_{k-1}')p(x_{k-1}'|y_{1:k-1})\rmd x_{k-1}'\right)^2}{\left(\int q_k(x_k|x_{k-1}', y_k)p(x_{k-1}'|y_{1:k-1})\rmd x_{k-1}'\right)^2}\\
    &= \left[G_k^{\textrm{MPF}}(x_k)\right]^2,
\end{align*}
showing that~\eqref{eq:assumption_mpf} is satisfied and thus $\sigma_{n,k}^{\textrm{PF}}(\testfn) \geq \sigma_{n,k}^{\textrm{MPF}}(\testfn)$ for all $1\leq k\leq n$. It follows straightforwardly that $V_n^{\textrm{MPF}}(\testfn) \leq V_n^{\textrm{PF}}(\testfn)$.

Since the function $t\mapsto t^2$ is not affine, equality can only occur when $G_k^{\textrm{PF}}(\cdot, x_k)$ is almost surely constant w.r.t. $x_{k-1}$ (see, e.g. \cite{walker2014jensen}), i.e. $G_k^{\textrm{PF}}(\cdot, x_k) = G_k^{\textrm{PF}}(x_k)$ a.s.; this requires $q_k \equiv f_k$ which corresponds to the bootstrap particle filter.
In all other cases, the variance reduction due to marginalization can be quantified using, for example, \citet[Theorem 3.1]{walker2014jensen}.
\subsubsection{Marginal Auxiliary Particle Filters}

The MAPF variance in \cref{cor:mapf} can be decomposed as $V_n^{\textrm{MAPF}}(\testfn) =\sum_{k=0}^{n} \sigma_{n,k}^{\textrm{MAPF}}(\testfn)$ where
\begin{align*}
\sigma_{n,0}^{\textrm{MAPF}}(\testfn)&=\int  \frac{p(x_0|y_{1:n})^2}{
q_0(x_0)}\left(\int p(x_{n}|y_{1:n}, x_0)\left[\testfn(x_n)-\bar{\testfn}_n\right]\rmd x_{n}\right)^2\rmd x_0\\
\sigma_{n,k}^{\textrm{MAPF}}(\testfn)&= \int \frac{p(x_k|y_{1:n})^2}{\int q_k(x_k|x_{k-1}, y_k)\tilde{p}(x_{k-1}|y_{1:k})\rmd x_{k-1}}\left(\int p(x_{n}|y_{k+1:n}, x_k)\left[\testfn(x_n)-\bar{\testfn}_n\right]\rmd x_{n}\right)^2\rmd x_k\notag\\
\sigma_{n,n}^{\textrm{MAPF}}(\testfn)&=\int \frac{p(x_{n}|y_{1:n})^2}{\int q_{n}(x_n|x_{n-1}, y_n) \tilde{p}(x_{n-1}|y_{1:n})\rmd x_{n-1}} (\testfn(x_n)-\bar{\testfn}_n)^2\rmd x_{n},\notag
\end{align*}
where we denoted $\hat{\eta}_{k-1}(\rmd x_{k-1}) = \tilde{p}(x_{k-1}|y_{1:k})\rmd x_{k-1}\propto p(x_{k-1}|y_{1:k-1})\tilde{p}(y_k|x_{k-1})\rmd x_{k-1}$.
The variance of a standard (i.e. not marginal) auxiliary particle filter given in \citet[Section 2.4]{johansen2008note} is given by $V_n^{\textrm{APF}}(\testfn) =\sum_{k=0}^{n} \sigma_{n,k}^{\textrm{APF}}(\testfn)$ where
\begin{align*}
\sigma_{n,0}^{\textrm{APF}}(\testfn)&=\int  \frac{p(x_0|y_{1:n})^2}{
q_0(x_0)}\left(\int p(x_{n}|y_{1:n}, x_0)\left[\testfn(x_n)-\bar{\testfn}_n\right]\rmd x_{n}\right)^2\rmd x_0\\
\sigma_{n,k}^{\textrm{APF}}(\testfn)&= \int \frac{p(x_{0:k}|y_{1:n})^2}{ q_k(x_k|x_{k-1}, y_k)\tilde{p}(x_{0:k-1}|y_{1:k})}\left(\int p(x_{n}|y_{k+1:n}, x_k)\left[\testfn(x_n)-\bar{\testfn}_n\right]\rmd x_{n}\right)^2\rmd x_{0:k}\notag\\
\sigma_{n,n}^{\textrm{APF}}(\testfn)&=\int \frac{p(x_{0:n}|y_{1:n})^2}{ q_{n}(x_n|x_{n-1}, y_n) \tilde{p}(x_{0:n-1}|y_{1:n})} (\testfn(x_n)-\bar{\testfn}_n)^2\rmd x_{0:n}.\notag
\end{align*}
From the above it is easy to see that $\sigma_{n,0}^{\textrm{MAPF}}(\testfn)=\sigma_{n,0}^{\textrm{APF}}(\testfn)$.
In addition, following the same steps as in \cref{sec:comparison_mpf}, we have
\begin{align*}
&\sigma_{n,k}^{\textrm{APF}}(\testfn) - \sigma_{n,k}^{\textrm{MAPF}}(\testfn) \\
&= \int g_k(y_k|x_k)^2\left(\int \frac{ f_k(x_k|x_{k-1})^2p(x_{k-1}|y_{1:k-1})}{q_k(x_k|x_{k-1}, y_k)\tilde{p}(y_k|x_{k-1})}\rmd x_{k-1}-\frac{\left(\int f_k(x_k|x_{k-1})p(x_{k-1}|y_{1:k-1})\rmd x_{k-1}\right)^2}{\int q_k(x_k|x_{k-1}, y_k)\tilde{p}(x_{k-1}|y_{1:k})\rmd x_{k-1}}\right)\\
&\qquad\qquad\times\tilde{p}(y_k|y_{1:k-1})\psi(x_k)\rmd x_{k}
\end{align*}
with $\psi_k$ given in \cref{sec:comparison_mpf} and
$\tilde{p}(y_k|y_{1:k-1}):= \int \tilde{p}(y_k|x_{k-1})p(x_{k-1}|y_{1:k-1})\rmd x_{k-1}$. 
Proceeding as in \cref{sec:comparison_mpf}, we have $\sigma_{n,k}^{\textrm{APF}}(\testfn) \geq \sigma_{n,k}^{\textrm{MAPF}}(\testfn)$ if
\begin{align*}
    \mathbb{E}_{\nu_k(\cdot, x_{k})}\left[\left[G_k^{\textrm{APF}}(\cdot, x_k)\right]^2\right] \geq \left[G_k^{\textrm{MAPF}}(x_k)\right]^2.
\end{align*}
where $\nu_k(\cdot, x_{k})$ the probability distribution obtained for all $x_k\in E$ as 
\begin{align*}
    \nu_k(\rmd x_{k-1}, x_{k}) = \frac{\tilde{p}(y_k|x_{k-1})q_k(x_k|\cdot, y_k)p(x_{k-1}|y_{1:k-1})\rmd x_{k-1}}{\int \tilde{p}(y_k|x_{k-1}') q_k(x_k|x_{k-1}', y_k)p(x_{k-1}'|y_{1:k-1})\rmd x_{k-1}'},
\end{align*}
the weights are 
\begin{align*}
    G_k^{\textrm{APF}}(x_{k-1}, x_k) = \frac{g_k(y_k|x_k)f_k(x_k|x_{k-1})}{\tilde{p}(y_k|x_{k-1})q_k(x_k|x_{k-1}, y_k)}
\end{align*}
and $G_k^{\textrm{MAPF}}( x_k)$ is given in \cref{sec:mapf}. 
A simple application of Jensen's inequality then gives
\begin{align*}
    &\mathbb{E}_{\nu_k(\cdot, x_{k})}\left[\left[G_k^{\textrm{APF}}(\cdot, x_k)\right]^2\right]\\
    &\qquad\geq \mathbb{E}_{\nu_k(\cdot, x_{k})}\left[G_k^{\textrm{APF}}(\cdot, x_k)\right]^2\\
    &\qquad= \left(\int \frac{\tilde{p}(y_k|x_{k-1})q_k(x_k|x_{k-1}, y_k)p(x_{k-1}|y_{1:k-1})}{\int \tilde{p}(y_k|x_{k-1}')q_k(x_k|x_{k-1}', y_k)p(x_{k-1}'|y_{1:k-1})\rmd x_{k-1}'}\frac{g_k(y_k|x_k)f_k(x_k|x_{k-1})}{\tilde{p}(y_k|x_{k-1})q_k(x_k|x_{k-1}, y_k)}\rmd x_{k-1}\right)^2\\
    &\qquad= g_k(y_k|x_k)^2\frac{\left(\int f_k(x_k|x_{k-1}')p(x_{k-1}'|y_{1:k-1})\rmd x_{k-1}'\right)^2}{\left(\int\tilde{p}(y_k|x_{k-1}') q_k(x_k|x_{k-1}', y_k)p(x_{k-1}'|y_{1:k-1})\rmd x_{k-1}'\right)^2}\\
    &\qquad= \left[G_k^{\textrm{MAPF}}(x_k)\right]^2,
\end{align*}
showing that $\sigma_{n,k}^{\textrm{APF}}(\testfn) \geq \sigma_{n,k}^{\textrm{MAPF}}(\testfn)$ for all $1\leq k\leq n$. It follows straightforwardly that $V_n^{\textrm{MAPF}}(\testfn) \leq V_n^{\textrm{APF}}(\testfn)$.

Since the function $t\mapsto t^2$ is not affine, equality can only occur when $G_k^{\textrm{APF}}(\cdot, x_k)$ is almost surely constant w.r.t. $x_{k-1}$ (see, e.g. \cite{walker2014jensen}), i.e. $G_k^{\textrm{APF}}(\cdot, x_k) = G_k^{\textrm{APF}}(x_k)$ a.s.; this occurs when
\begin{align*}
    q_k(x_k|x_{k-1},y_k) = \frac{C(x_k) f(x_k|x_{k-1})}{\int C(x_k') f(x_k'|x_{k-1}) \rmd x_k'} 
\end{align*}
and $\tilde{p}(y_k|x_{k-1}) = \int C_k(x_k)f_k(x_k|x_{k-1})\rmd x_k$, for some positive function $C_k$.
In this case, in fact, we have $G_k^{\textrm{APF}}(x_{k-1}, x_k)=g_k(y_k|x_k)/C_k(x_k)$ and we see that the FA-APF satisfies this condition with $C_k(x_k)=g_k(y_k|x_k)$.
In all other cases, the variance reduction due to marginalization can be quantified using, for example, \citet[Theorem 3.1]{walker2014jensen}.

\end{document}